\numberwithin{equation}{section}
\newtheorem{theorem}{Theorem}[section]
\newtheorem{lemma}[theorem]{Lemma}
\newtheorem{corollary}[theorem]{Corollary}
\newtheorem{proposition}[theorem]{Proposition}
\newtheorem{remark}[theorem]{Remark}
\newcommand{\rd}{\mathrm{d}}
\newcommand{\q}{\hat{q}}
\newcommand{\aq}{|q|}
\newcommand{\h}{\hat{h}}
\newcommand{\jj}{\hat{j}}
\newcommand{\F}{\tilde{F}}
\newcommand{\im}{\mathrm{i}}
\newcommand{\e}{\mathrm{e}}
\newcommand{\mQ}{\mathcal{Q}}
\newcommand{\mP}{\mathcal{P}}
\begin{document}

\title{A fast Fourier spectral method for the homogeneous Boltzmann equation with non-cutoff collision kernels}

\author{
	Jingwei Hu\footnote{Department of Mathematics, Purdue University, West Lafayette, IN 47907, USA (jingweihu@purdue.edu). JH's research was partially supported by NSF grant DMS-1620250 and NSF CAREER grant DMS-1654152.} \  \ and \
	Kunlun Qi\footnote{Department of Mathematics, City University of Hong Kong, Hong Kong, China (kunlun.qi@my.cityu.edu.hk).}
	} 
%\date{}
\maketitle

\begin{abstract}
We introduce a fast Fourier spectral method for the spatially homogeneous Boltzmann equation with non-cutoff collision kernels. Such kernels contain non-integrable singularity in the deviation angle which arise in a wide range of interaction potentials (e.g., the inverse power law potentials). Albeit more physical, the non-cutoff kernels bring a lot of difficulties in both analysis and numerics, hence are often cut off in most studies (the well-known Grad's angular cutoff assumption). We demonstrate that the general framework of the fast Fourier spectral method developed in \cite{GHHH17, HM19} can be extended to handle the non-cutoff kernels, achieving the accuracy/efficiency comparable to the cutoff case. We also show through several numerical examples that the solution to the non-cutoff Boltzmann equation enjoys the smoothing effect, a striking property absent in the cutoff case.
\end{abstract}

{\small 
{\bf Key words.} Boltzmann equation, non-cutoff collision kernel, singularity, fractional Laplacian, Fourier spectral method, fast Fourier transform.

{\bf AMS subject classifications.} 35Q20, 65M70, 35R11.

}

%\tableofcontents

%%%%%%%%%%%%%%%%%%%%%%%%%%%%%%%%%%%%%%%%%%%%%%%%%%%%%%%%%%%%%%%%%%%%%%%%%%%%%%%%%%%%%%%%%%%%%%%%%

\section{Introduction}
\label{sec:intro}

The Boltzmann equation, proposed by Maxwell and Boltzmann, is one of the fundamental equations in kinetic theory and models the fluid flow behavior at a wide range of physical conditions \cite{CC, Cercignani, Villani02}. Generally speaking, when the mean free path of the system is comparable to the characteristic length of the problem, the Navier-Stokes based macroscopic description would break down and one has to resort to the mesoscopic kinetic description. This situation often occurs when the mean free path is large (e.g., in design of spacecrafts in outer atmosphere where the air is rarefied), or when the characteristic length is small (e.g., in modeling of microsystems where the devices are small), bespeaking the wide applicability of the Boltzmann equation in various science and engineering disciplines.

The Boltzmann equation reads
\begin{equation}
\partial_{t} f+v\cdot \nabla_x f=\mQ(f,f), \quad t>0, \ x\in \Omega\subset\mathbb{R}^3, \ v\in \mathbb{R}^3,
\end{equation}
where $f = f(t,x,v)$ is the probability density function of time $t$, position $x$, and velocity $v$, and $\mQ(f,f)$ is the so-called Boltzmann collision operator describing the binary collisions among particles:
\begin{equation} \label{CO1}
\mQ(f,f)(v)=\int_{\mathbb{R}^3}\int_{S^{2}}\mathcal{B}(v-v_*,\sigma)\left[f(v')f(v_*')-f(v)f(v_*)\right]\,\rd{\sigma}\,\rd{v_*}.
\end{equation}
In the formula above, $t$ and $x$ are suppressed since $\mQ(f,f)$ acts on $f$ only through the velocity. $(v',v_*')$ and $(v,v_*)$ represent the velocity pairs before and after a collision, which satisfy the conservation of momentum and energy:
\begin{equation} 
v' + v_{*}' = v + v_{*}, \quad  |v'|^{2} + |v_{*}'|^{2} = |v|^{2} + |v_{*}|^{2},
\end{equation}
so that $(v',v_*')$ can be expressed in terms of $(v,v_*)$ as
\begin{equation}  
v'=\frac{v+v_*}{2}+\frac{|v-v_*|}{2}\sigma, \quad  v_*'=\frac{v+v_*}{2}-\frac{|v-v_*|}{2}\sigma,
\end{equation}
where $\sigma$ is a vector varying over the unit sphere $S^{2}$. 

It can be shown that $\mQ(f,f)$ satisfies the conservation of mass, momentum, and energy:
\begin{equation}
\int_{\mathbb{R}^3} \mQ(f,f) \,\rd v=\int_{\mathbb{R}^3} \mQ(f,f) v\,\rd v=\int_{\mathbb{R}^3} \mQ(f,f) |v|^2\,\rd v=0,
\end{equation}
and the celebrated Boltzmann's H-theorem:
\begin{equation}
\int_{\mathbb{R}^3} \mQ(f,f) \ln f\, \rd v \leq 0,
\end{equation}
with equality holds if and only if $ f $ reaches the equilibrium:
\begin{equation}
M(v) = \frac{\rho}{\left(2\pi T\right)^{\frac{3}{2}}} \e^{-\frac{|v-u|^{2}}{2T}},
\end{equation}
where the density $ \rho $, bulk velocity $ u $, and temperature $ T $ are given by
\begin{equation}
\rho = \int_{\mathbb{R}^3} f \,\rd v, \quad u=\frac{1}{\rho}\int_{\mathbb{R}^3} fv\, \rd v, \quad T=\frac{1}{3\rho} \int_{\mathbb{R}^3} f |v-u|^{2}\, \rd v.
\end{equation}

In (\ref{CO1}), the collision kernel $\mathcal{B}$ is a non-negative function depending only on $|v-v_*|$ and cosine of the deviation angle $\theta$ (angle between $v-v_*$ and $v'-v_*'$). Thus $\mathcal{B}$ is often written as
\begin{equation} 
\mathcal{B}(v-v_*, \sigma)=B(|v-v_*|,\cos \theta), \quad \cos\theta=\frac{\sigma\cdot (v-v_*)}{|v-v_*|}.
\end{equation}
The specific form of $B$ can be determined from the intermolecular potential using classical scattering theory \cite{Cercignani}, yet its explicit form is not known except for some simple potentials. For example, in the case of inverse power law potentials $U(r) = r^{-(s-1)}, 2< s < \infty$, where $r$ is the distance between two interacting particles, it can be shown that the angular part and velocity part of $B$ are separated:
\begin{equation}\label{Bb}
B(|v-v_*|,\cos\theta) = b(\cos\theta) \Phi(|v-v_*|),
\end{equation} 
where $\Phi(|v-v_*|)=|v-v_*|^{\gamma}$, $\gamma = \frac{s-5}{s-1}$, $-3< \gamma < 1$, and $b(\cos\theta)$ is some function defined implicitly. Using simple asymptotic expansion, one can show that $b(\cos\theta)$ when $\theta\rightarrow 0$ behaves as
\begin{equation} \label{Bb1}
\sin \theta b(\cos\theta) \Big|_{\theta\rightarrow 0} \sim K \theta^{-1-\nu}, \quad \nu = \frac{2}{s-1}, \quad 0 < \nu<2,
\end{equation}
i.e., it has a \textit{non-integrable singularity} when the deviation angle is small. The kernel (\ref{Bb}) encompasses a wide range of potentials, and we just mention two marginal cases: $s=\infty$, $\gamma=1$, $\nu=0$ corresponds to the hard spheres, and $s=2$, $\gamma=-3$, $\nu=2$ corresponds to the Coulomb interaction (in fact, the Boltzmann collision operator loses the validity in this case and one has to use its grazing limit, the so-called Landau operator \cite{Villani02}, a diffusive type operator). Finally, we point out that the non-integrable singularity in the collision kernel is a generic phenomenon when the interaction is long range and does not only appear in the inverse power potentials. For instance, the Debye-Yukawa potential $U(r) = r^{-1}\e^{-r}$ also leads to an angular singularity as $\theta\rightarrow 0$ (c.f. \cite{morimoto2009regularity}):
\begin{equation}\label{debye}
\sin \theta B(|v-v_*|,\cos \theta) \Big|_{\theta\rightarrow 0}  \sim K |v-v_{*}|\theta^{-1}|\log \theta^{-1}|.
\end{equation} 

Albeit more physical, the non-integrable singularity in the collision kernel brings a lot of difficulties in both theoretical and numerical treatment of the Boltzmann equation. Due to this, Grad \cite{grad1958principles} introduced the famous \textit{angular cutoff assumption}, replacing the collision kernel by a locally integrable one, and it is henceforth used in the majority of works on the Boltzmann equation. The Grad's cutoff assumption greatly simplifies the analysis, but also changes the qualitative behavior of the solutions. Since the work of Desvillettes \cite{desvillettes1995regularizing}, it has been realized that the solution to the non-cutoff Boltzmann equation enjoys the {\it smoothing effect}, which is not true in the cutoff case where the solution can be at best as regular as the initial data. Without going into technical detail, we quote the following statement from \cite{alexandre2000entropy} to help readers better understand the structure of the problem: {\it ``The non-cutoff Boltzmann operator $\mathcal{Q}(f,\cdot)$ behaves like the fractional Laplacian $-(-\Delta)^{\frac{\nu}{2}}$. In the limit case $\nu=2$, it has to be replaced by the Landau operator, which is precisely diffusive in nature."} There are by now a large number of theoretical results related to the non-cutoff Boltzmann equation. We refer to the recent review by Alexandre \cite{alexandre2009review} for further references.

{\bf Our contribution.} Numerical approximation of the Boltzmann equation is also largely influenced by the Grad's cutoff assumption. This includes both the direct simulation Monte Carlo method \cite{Bird} and deterministic methods such as the Fourier spectral method \cite{PR00, MP06, GT09, GHHH17}. Therefore, it is our goal of this work to introduce a reliable numerical method to solve the more physical non-cutoff Boltzmann equation. We will show that the general framework of the fast Fourier spectral method developed in \cite{GHHH17, HM19} can be extended to handle the non-cutoff kernels, achieving the accuracy/efficiency comparable to the cutoff case. In particular, we will carefully compare the solutions computed with and without cut-off assumptions, and verify the regularizing effect as predicted by the theory. 

{\bf Related work.} There are some existing numerical work related to the non-cutoff Boltzmann equation. \cite{PTV03, GH14} are the closest to ours, where the authors considered the grazing collision limit of the Fourier spectral method for the Boltzmann operator and showed that it reduces to the Fourier spectral method for the limiting Landau operator. We mention that the Taylor expansion has been used in \cite{PTV03, GH14} to study the integrability of the kernel and our approach in Section~\ref{subsec:intergability} shares a similar spirit. Yet, both works focused on the transition from the Boltzmann to the Landau equation (i.e., $\nu\rightarrow 2$ in (\ref{Bb1})) and no fast algorithm was introduced. The recent work \cite{GJ18, YZ20} indeed considered the non-cutoff Boltzmann equation (i.e., $0<\nu< 2$ in (\ref{Bb1})): the former solved a radially symmetric version with Maxwell molecules using symbolic calculation, and the latter proposed a modified equation by adding a scaled Landau operator to account for the singularity.

The rest of this paper is organized as follows: Section~\ref{sec:method} describes the basic formulation of the Fourier spectral method for the non-cutoff Boltzmann equation, where the focus is to prove the integrability of the weight in the method. In Section~\ref{sec:consistency} we establish the consistency and spectral accuracy of the method in approximating the collision operator. In Section~\ref{sec:fast} we introduce a fast algorithm to accelerate the method and discuss some implementation detail. Numerical examples are presented in Section~\ref{sec:num} to demonstrate the efficiency and accuracy of the proposed method. The paper is concluded in Section~\ref{sec:conc}.

%%%%%%%%%%%%%%%%%%%%%%%%%%%%%%%%

\section{A Fourier-Galerkin spectral method for the non-cutoff Boltzmann equation}
\label{sec:method}

In this section, we describe the Fourier-Galerkin spectral method for solving the non-cutoff Boltzmann equation. Since the main difficulty comes from the collision operator, for the rest of this paper we will consider the following Cauchy problem of the spatially homogeneous Boltzmann equation:
\begin{equation}\label{eqn}
\left\{
\begin{split}
&\partial_{t} f(t,v) = \mQ(f,f), \quad t>0, \ v\in \mathbb{R}^d, \ d=2 \text{ or } 3,  \\
& f(0,v) = f^{0}(v),
\end{split}
\right.
\end{equation}
where the collision operator is rewritten here for clarity (we include the 2D model as well since it leads to some numerical simplicity):
\begin{equation} \label{CO}
\mQ(f,f)(v) = \int_{\mathbb{R}^d}\int_{S^{d-1}}B(|q|,\sigma\cdot\q) \left[f(v_*')f(v')-f(v-q)f(v)\right]\,\rd{\sigma}\,\rd{q},
\end{equation}
with
\begin{equation}
v'=v-\frac{1}{2}(q-|q|\sigma), \quad v_*'=v-\frac{1}{2}(q+|q|\sigma).
\end{equation}
Note that compared to the original operator (\ref{CO1}), we have done a change of variables: $v_* \rightarrow q=v-v_*$ in the above formula, and $|q|$, $\q = q/|q|$ denote the magnitude and direction of $q$, respectively. %We will also need the weak form of the collision operator:
%\begin{equation}
%\int_{\mathbb{R}^d} \mQ(f,f)(v) \phi(v) \rd v = \int_{\mathbb{R}^d} \int_{\mathbb{R}^d}  \int_{S^{d-1}}B(|q|,\sigma\cdot\q) f(v-q)f(v)[\phi(v')-\phi(v)] \,\rd\sigma\, \rd q\, \rd v,
%\end{equation}
%where $\phi(v)$ is some test function.

In order to apply the Fourier spectral method, we first need a proper truncation of the domain and integral. To this end, we assume that $f$ has a compact support in $v$: $\text{Supp}(f(v))\subset \mathcal{B}_S$, where $\mathcal{B}_S$ is a ball centered at the origin with radius $S$ (in practice, $S$ can be chosen roughly as $\max |u^0\pm c\sqrt{T^0}|$, where $u^0$ and $T^0$ are the bulk velocity and temperature corresponding to the initial data $f^0$, and $c$ is some constant $\sim 3$). It then suffices to truncate the infinite integral in $q$ to a larger ball $\mathcal{B}_R$ with radius $R\geq2S$. It is also easy to see $\text{Supp}(\mQ(f,f)(v))\subset \mathcal{B}_{\sqrt{2}S}$. Hence we can restrict $v$ to the computational domain $\mathcal{D}_L=[-L,L]^d$ with $L\geq \sqrt{2}S$ and extend the solution periodically to the whole space (in practice, $L$ can be chosen as $L\geq (3+\sqrt{2})S/2$ to avoid aliasing effect \cite{PR00}).

With the above assumptions, we consider the following truncated problem as an approximation to the original problem (\ref{eqn}):
\begin{equation}\label{eqnR}
\left\{
\begin{split}
&\partial_{t} f(t,v) = \mQ^R(f,f), \quad t>0, \ v\in \mathcal{D}_L, \ d=2 \text{ or } 3, \\
& f(0,v) = f^{0}(v),
\end{split}
\right.
\end{equation}
with
\begin{equation} 
\mQ^R(f,f)(v) = \int_{\mathcal{B}_R}\int_{S^{d-1}}B(|q|,\sigma\cdot\q) \left[f(v_*')f(v')-f(v-q)f(v)\right]\,\rd{\sigma}\,\rd{q},
\end{equation}
and its (truncated) weak form
\begin{equation}\label{weak}
\int_{D_L} \mQ^R(f,f)(v) \phi(v)\, \rd v = \int_{D_L} \int_{\mathcal{B}_R}  \int_{S^{d-1}}B(|q|,\sigma\cdot\q) f(v-q)f(v)[\phi(v')-\phi(v)]\, \rd\sigma\, \rd q\, \rd v,
\end{equation}
where $\phi(v)$ is some test function.

We now construct the Fourier-Galerkin spectral method for (\ref{eqnR}). Consider the space of trigonometric polynomials of degree up to $N/2$:
\begin{equation}
\mathbb{P}_N=\text{span} \left\{ \e^{\im \frac{\pi}{L} k\cdot v}\Big| -\frac{N}{2}\leq k \leq \frac{N}{2} \right\},\footnote{$k=(k_1,\dots,k_d). -N/2\leq k \leq N/2$ means $-N/2\leq k_j \leq N/2$, $j=1,\dots,d.$}
\end{equation} 
equipped with inner product
\begin{equation}
\langle f, g \rangle=\frac{1}{(2L)^d}\int_{\mathcal{D}_L} f\bar{g}\,\rd{v}.
\end{equation}
The method seeks a solution $f_N\in \mathbb{P}_N$ such that
\begin{equation}  
f_N(t,v)=\sum_{k={-\frac{N}{2}}}^{\frac{N}{2}}f_k(t) \e^{\im \frac{\pi}{L}k\cdot v},\footnote{$\sum_{k={-N/2}}^{N/2}:=\sum_{k_1={-N/2}}^{N/2}\cdots\sum_{k_d=-N/2}^{N/2}$.}
\end{equation}
and requires
\begin{equation}
\langle \partial_t f_N-\mathcal{Q}^R(f_N,f_N), \e^{\im \frac{\pi}{L}k\cdot v}\rangle=0, \quad  \text{for }-\frac{N}{2}\leq k \leq \frac{N}{2}.
\end{equation}

The resulting Galerkin system reads
\begin{equation} 
\left\{
\begin{split}
&\frac{\rd}{\rd t} f_k = \mQ_k^R, \quad -\frac{N}{2}\leq k \leq \frac{N}{2}, \\
&f_k(0) = f^0_k,
\end{split}
\right.
\end{equation}
with
\begin{equation}
\mQ_{k}^R: =\langle \mQ^R(f_{N},f_{N}),\e^{\im \frac{\pi}{L}k\cdot v}  \rangle
, \quad f^0_k:=\langle f^0,\e^{\im \frac{\pi}{L}k\cdot v}\rangle.
\end{equation}
Using the weak form (\ref{weak}), one can derive that
\begin{equation}
\begin{split} \label{sum}
\mQ_{k}^R&=\frac{1}{(2L)^{d}}\int_{D_{L}}\int_{\mathcal{B}_{R}}\int_{S^{d-1}}B(|q|,\sigma\cdot\q)f_{N}(v)f_{N}(v-q)(\e^{-\im \frac{\pi}{L}k\cdot v'}-\e^{-\im \frac{\pi}{L}k\cdot v})\, \rd\sigma\, \rd q \,\rd v\\
&=\sum\limits_{\substack{l,m=-\frac{N}{2}\\l+m=k}}^{\frac{N}{2}}G(l,m)f_{l}f_{m},
\end{split}
\end{equation}
where the weight $G(l,m)$ is given by
\begin{equation}\label{weight}
G(l,m) = \int_{\mathcal{B}_{R}}\e^{-\im \frac{\pi}{L}m\cdot q}\left[\int_{S^{d-1}}B(|q|,\sigma\cdot\q)(\e^{\im \frac{\pi}{2L}(l+m)\cdot (q-|q|\sigma)}-1)\,\rd\sigma\right]\,\rd q.
\end{equation}

We mention that up to this point, the derivation of the spectral method is completely formal and the singularity of the collision kernel does not play a role. In fact, if $B(|q|,\sigma\cdot\q)$ is integrable, the derivation is done and one can proceed straightforwardly to the implementation: precompute the weight $G(l,m)$ according to the formula (\ref{weight}) up to certain accuracy as it does not depend on $f$, and evaluate the sum (\ref{sum}) directly to get $\mQ_{k}^R$ at every time step. However, in the non-cutoff case $B(|q|,\sigma\cdot\q)$ has a non-integrable singularity, hence nothing guarantees the weight defined in (\ref{weight}) is well defined. Then whether the Fourier spectral method is a suitable approximation deserves further investigation. Fortunately, we will show below that the answer is positive.

\subsection{Integrability of the weight}
\label{subsec:intergability}

In this section, we show that the weight (\ref{weight}) is well defined using a simple Taylor expansion. Similar approach has been used in \cite{PTV03, GH14} to study the grazing limit of the Boltzmann equation, that is, when all collisions are concentrated around $\theta \sim 0$.

For simplicity, we assume the kernel has the form
\begin{equation} \label{sep}
B(|q|,\sigma\cdot\q)=\Phi(|q|)b(\sigma\cdot \q),
\end{equation}
and rewrite (\ref{weight}) as follows
\begin{equation} \label{GG}
\begin{split}
G(l,m) &=\int_0^R \int_{S^{d-1}} \Phi(|q|)  |q|^{d-1}\e^{-\im \frac{\pi}{L} |q| m\cdot \hat{q}} \left[\int_{S^{d-1}}b(\sigma\cdot\q)(\e^{\im \frac{\pi}{2L}|q| (l+m)\cdot (\hat{q}-\sigma)}-1)\,\rd\sigma\right]\,\rd{\hat{q}}\,\rd{|q|}\\
&=\int_0^R \int_{S^{d-1}}  \Phi(|q|)  |q|^{d-1}\e^{-\im \frac{\pi}{L}|q| m\cdot \hat{q}} F(l+m,|q|,\hat{q})\,\rd{\hat{q}}\,\rd{|q|},
\end{split}
\end{equation}
where
\begin{equation} \label{FF}
F(k,|q|,\hat{q}):=\int_{S^{d-1}}b(\sigma\cdot\q)(\e^{\im \frac{\pi}{2L}|q| k\cdot (\hat{q}-\sigma)}-1)\,\rd\sigma.
\end{equation}

We have the following result.
\begin{proposition}
Let $B(|v-v_*|,\cos \theta) $ be the collision kernel of the Boltzmann equation whose angular part $b(\cos\theta)$ satisfies the singularity condition:
\begin{equation} \label{bb}
\sin^{d-2}\theta b(\cos\theta) \Big|_{\theta\rightarrow 0} \sim K  \theta^{-1-\nu}, \quad 0 \leq \nu<2, \quad d=2 \text{ or } 3,
\end{equation}
then the weight $F(k,|q|,\hat{q})$ in (\ref{FF}) is well defined. %(the integrand has an integrable singularity as $\theta\rightarrow 0$).
\end{proposition}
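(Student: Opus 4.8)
The plan is to exploit the rotational symmetry of the sphere to extract one extra order of cancellation from the factor $\e^{\im \frac{\pi}{2L}|q|k\cdot(\hat{q}-\sigma)}-1$, beyond what the naive bound $|\e^{\im z}-1|\le|z|$ would give. Writing $A:=\frac{\pi}{2L}|q|$ (a bounded quantity since $|q|\le R$), I would decompose $\sigma$ into its polar and azimuthal parts relative to the axis $\hat{q}$,
\begin{equation}
\sigma=\cos\theta\,\hat{q}+\sin\theta\,\omega,\qquad \omega\perp\hat{q},\ |\omega|=1,
\end{equation}
so that $\sigma\cdot\hat{q}=\cos\theta$, the measure factorizes as $\rd\sigma=\sin^{d-2}\theta\,\rd\theta\,\rd\omega$, and
\begin{equation}
k\cdot(\hat{q}-\sigma)=(1-\cos\theta)(k\cdot\hat{q})-\sin\theta\,(k\cdot\omega).
\end{equation}
Since $b(\sigma\cdot\hat{q})=b(\cos\theta)$ does not depend on $\omega$, the idea is to perform the azimuthal integration in $\omega$ first (for $d=3$ this is an integral over a circle; for $d=2$ it is the two-point average over $\omega=\pm\omega_0$), which is where the cancellation lives.

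\emph{Main estimate.} Set $z:=A\,k\cdot(\hat{q}-\sigma)$ and use the first-order Taylor expansion $\e^{\im z}-1=\im z+R$ with $|R|\le\tfrac12 z^2$. The linear term splits as $\im z=\im A(1-\cos\theta)(k\cdot\hat{q})-\im A\sin\theta\,(k\cdot\omega)$: the second piece is odd in $\omega$ and hence vanishes under the azimuthal average, while the first piece is independent of $\omega$ and is $O(\theta^2)$ because $1-\cos\theta=2\sin^2(\theta/2)$. For the remainder I would use the identity $|\hat{q}-\sigma|^2=2(1-\cos\theta)$ to get $|R|\le\tfrac12 A^2|k|^2|\hat{q}-\sigma|^2=A^2|k|^2(1-\cos\theta)=O(\theta^2)$. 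Consequently the azimuthal average of $\e^{\im z}-1$ is $O(\theta^2)$ as $\theta\to0$, uniformly in the bounded parameters $A$, $k$, $\hat{q}$.

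\emph{Conclusion.} Writing $F$ as the iterated integral $\int_0^\pi \sin^{d-2}\theta\,b(\cos\theta)\bigl[\int(\e^{\im z}-1)\,\rd\omega\bigr]\,\rd\theta$ and inserting the bound above gives
\begin{equation}
|F(k,|q|,\hat{q})|\le C\int_0^\pi \sin^{d-2}\theta\,b(\cos\theta)\,\min(\theta^2,1)\,\rd\theta.
\end{equation}
Near $\theta=0$ the integrand behaves like $\theta^{-1-\nu}\cdot\theta^2=\theta^{1-\nu}$ by the singularity assumption (\ref{bb}), which is integrable exactly because $1-\nu>-1$, i.e. $0\le\nu<2$; on any interval bounded away from $0$ the kernel $b(\cos\theta)$ is locally integrable and $\min(\theta^2,1)\le1$, so the remaining contribution is finite. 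Hence $F$ is well defined.

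The main obstacle is the second paragraph. The crude estimate $|\e^{\im z}-1|\le|z|$, of order $\theta$, only produces an integrand $\theta^{-1-\nu}\cdot\theta=\theta^{-\nu}$, which is \emph{not} integrable near the origin for $1\le\nu<2$. The entire point is that one additional power of $\theta$ must be recovered, and it can come only from the sphere's symmetry cancelling the $O(\theta)$ (odd-in-$\omega$) term. I would therefore take care to verify this cancellation uniformly in both $d=2$ (two-point average) and $d=3$ (circle average), and to confirm that the Taylor remainder $R$ genuinely enters at order $\theta^2$ rather than $\theta$.
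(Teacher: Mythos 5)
Your proposal is correct and takes essentially the same route as the paper's proof: a second-order Taylor expansion of the exponential, cancellation of the $O(\theta)$ linear term by odd symmetry in the azimuthal variable (the paper's parity of the $\hat{q}_{\perp}$ term in 2D and $\phi$-periodicity of the $\hat{h},\hat{j}$ terms in 3D), and the resulting bound $\sin^{d-2}\theta\, b(\cos\theta)\cdot O(\theta^{2}) \sim K\theta^{1-\nu}$, integrable precisely because $0\leq \nu <2$. The only differences are presentational: you treat $d=2$ and $d=3$ uniformly via $\sigma=\cos\theta\,\hat{q}+\sin\theta\,\omega$ and make the remainder bound explicit through $|\hat{q}-\sigma|^{2}=2(1-\cos\theta)$, where the paper handles the two dimensions separately and writes $O(\tilde{F}^{2}(\theta))\sim\theta^{2}$ without the quantitative constant.
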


\begin{proof}
We discuss the 2D and 3D cases separately.

\vspace{0.1in}
{\bf (i) 2D case}: (\ref{FF}) becomes
\begin{equation}
\begin{split} \label{FF2D1}
F(k,\aq,\q)&=\int_{S^{1}} b(\sigma\cdot \hat{q})\left(\e^{\im\frac{\pi}{2L}\aq k\cdot(\q-\sigma)}-1\right)\,\rd\sigma\\
&=\int_{0}^{2\pi}b(\cos\theta)\left(\e^{\im\frac{\pi}{2L}\aq k\cdot[\q(1-\cos\theta)-\q_{\perp}\sin\theta]}-1\right) \,\rd\theta\\
&=\int_{0}^{2\pi}b(\cos\theta)\left(\e^{\F(\theta)}-1\right) \,\rd\theta,
\end{split}
\end{equation}
where we parametrized $\sigma$ in a coordinate system determined by $(\q$, $\q_{\perp})$:
\begin{equation}
\sigma = \q\cos\theta+ \q_{\perp}\sin\theta,
\end{equation}
and
\begin{equation}
\F(\theta):=\im\frac{\pi}{2L}\aq k\cdot[\q (1-\cos\theta)-\q_{\perp}\sin\theta].
\end{equation}

Apparently, for fixed $k$, $|q|$ and $\hat{q}$, the integrand in (\ref{FF2D1}) has a singularity at $\theta=0$ and $2\pi$. Therefore, it suffices to consider the following integral:
\begin{equation}
\begin{split}
F_{\epsilon}&=\int_{0}^{\epsilon}b(\cos\theta)\left(\e^{\F(\theta)}-1\right) \,\rd\theta+\int_{2\pi-\epsilon}^{2\pi}b(\cos\theta)\left(\e^{\F(\theta)}-1\right) \,\rd\theta\\
&=\int_{0}^{\epsilon}b(\cos\theta)\left(\tilde{F}(\theta)+O(\F^{2}(\theta))\right) \,\rd\theta+\int_{2\pi-\epsilon}^{2\pi}b(\cos\theta)\left(\tilde{F}(\theta)+O(\F^{2}(\theta))\right) \,\rd\theta,
\end{split}
\end{equation}
where a Taylor expansion is applied to the exponential function.

For the first order terms in $F_{\epsilon}$, we have
\begin{equation}
\begin{split} \label{Feps}
&\int_{0}^{\epsilon}b(\cos\theta)\tilde{F}(\theta) \,\rd\theta +\int_{2\pi-\epsilon}^{2\pi}b(\cos\theta)\tilde{F}(\theta) \,\rd\theta\\
=&\int_{0}^{\epsilon}\im\frac{\pi}{2L}\aq (k\cdot \q) b(\cos\theta) (1-\cos\theta)\, \rd\theta+\int_{2\pi-\epsilon}^{2\pi}\im\frac{\pi}{2L}\aq (k\cdot \q) b(\cos\theta)(1-\cos\theta)\, \rd\theta,
\end{split}
\end{equation}
where the terms involving $\q_{\perp}$ cancel due to parity. Using $1-\cos\theta=2\sin^{2}(\theta/2)\big|_{\theta\rightarrow 0}\sim \theta^2$ and (\ref{bb}) (with $d=2$), we have
\begin{equation}
b(\cos\theta)(1-\cos\theta)\Big |_{\theta\rightarrow 0}\sim  K \theta^{1-\nu}, \quad 0 \leq \nu<2.
\end{equation}
Similarly, 
\begin{equation}
b(\cos\theta)(1-\cos\theta)\Big |_{\theta\rightarrow 2\pi}\sim  K (2\pi-\theta)^{1-\nu}, \quad 0 \leq \nu<2.
\end{equation}
Hence the integrals in (\ref{Feps}) are integrable.

For the second order terms in $F_{\epsilon}$, it is easy to see $O(\F^{2}(\theta))\big|_{\theta\rightarrow 0}\sim \theta^2$. Using again (\ref{bb}), we have
\begin{equation}
b(\cos\theta)O(\F^{2}(\theta))\Big |_{\theta\rightarrow 0} \sim K \theta^{1-\nu}, \quad 0 \leq \nu<2.
\end{equation}
Similarly,
\begin{equation}
b(\cos\theta)O(\F^{2}(\theta))\Big |_{\theta\rightarrow 2\pi} \sim K (2\pi-\theta)^{1-\nu}, \quad 0 \leq \nu<2.
\end{equation}
Hence these terms are also integrable.

To summarize, we have shown that the integral $F_{\epsilon}$ converges.

\vspace{0.1in}
{\bf (ii) 3D case}: (\ref{FF}) becomes
\begin{equation} \label{FF3D1}
\begin{split}
F(k,\aq,\q)&=\int_{S^{2}} b(\sigma\cdot \hat{q})\left(\e^{\im\frac{\pi}{2L}\aq k\cdot(\q-\sigma)}-1\right)\,\rd\sigma\\
&=\int_{0}^{2\pi}\int_{0}^{\pi}b(\cos\theta)\left(\e^{\im\frac{\pi}{2L}\aq k\cdot[\q(1-\cos\theta)-\h\sin\theta\cos\phi-\jj\sin\theta\sin\phi]}-1\right) \sin\theta\, \rd\theta \,\rd\phi\\
&=\int_{0}^{2\pi}\int_{0}^{\pi}b(\cos\theta)\left(\e^{\F(\theta)}-1\right) \sin\theta\, \rd\theta \,\rd\phi,
\end{split}
\end{equation}
where we parametrized $\sigma$ in a coordinate system determined by ($\h$, $\jj$, $\q$):
\begin{equation}
\sigma=\h\sin\theta\cos\phi+\jj\sin\theta\sin\phi+\q\cos\theta,
\end{equation}
and
\begin{equation} 
\F(\theta)=\im\frac{\pi}{2L}\aq k\cdot[\q(1-\cos\theta)-\h\sin\theta\cos\phi-\jj\sin\theta\sin\phi].
\end{equation}

Now for fixed $k$, $|q|$ and $\hat{q}$, the integrand in (\ref{FF3D1}) has a singularity at $\theta =0$. Therefore, it suffices to consider the following integral:
\begin{equation}
F_{\epsilon}=\int_{0}^{2\pi}\int_{0}^{\epsilon}b(\cos\theta)\left(\e^{\F(\theta)}-1\right) \sin\theta\, \rd\theta \,\rd\phi =\int_{0}^{2\pi}\int_{0}^{\epsilon}b(\cos\theta)\left(\tilde{F}(\theta)+O(\F^{2}(\theta))\right)\,\sin\theta\, \rd\theta \,\rd\phi,
\end{equation}
where a Taylor expansion is applied to the exponential function.

For the first order term in $F_{\epsilon}$, we have
\begin{equation}
\begin{split} \label{Feps1}
&\int_{0}^{2\pi}\int_{0}^{\epsilon}b(\cos\theta)\F(\theta)\sin\theta\,  \rd\theta\, \rd\phi\\ 
=&\int_{0}^{2\pi}\int_{0}^{\epsilon}b(\cos\theta) \im\frac{\pi}{2L}\aq k\cdot[\q(1-\cos\theta)-\h\sin\theta \cos\phi-\jj\sin\theta \sin\phi]\sin\theta \,\rd\theta\, \rd\phi\\
=&2\pi \int_{0}^{\epsilon} \im\frac{\pi}{2L}\aq (k\cdot\q) b(\cos\theta)(1-\cos\theta) \sin\theta\, \rd\theta,
\end{split}
\end{equation}
where the terms involving $\hat{h}$ and $\hat{j}$ integrate to zero due to periodicity in $\phi$. Using $1-\cos\theta=2\sin^{2}(\theta/2)\big|_{\theta\rightarrow 0}\sim \theta^2$ and (\ref{bb}) (with $d=3$), we have
\begin{equation}
b(\cos\theta) (1-\cos\theta) \sin\theta\Big |_{\theta\rightarrow 0} \sim K \theta^{1-\nu}, \quad 0 \leq \nu<2,
\end{equation}
hence the integral in (\ref{Feps1}) is integrable.

For the second order term in $F_{\epsilon}$, it is easy to see $O(\F^{2}(\theta))\big|_{\theta\rightarrow 0}\sim \theta^2$. Using again (\ref{bb}), we have
\begin{equation}
b(\cos\theta) O(\F^{2}(\theta)) \sin\theta \Big |_{\theta\rightarrow 0}\sim K \theta^{1-\nu},\quad 0 \leq \nu<2.
\end{equation}
Hence this term is also integrable.

To summarize, we have shown that the integral $F_{\epsilon}$ converges.

%In conclusion, the function $F(k,\aq,\q)$ both in 2D and 3D is integrable with respect to the angular variable $\theta$ for collision kernels satisfying the requirement \eqref{bb}.
\end{proof}

\begin{remark}
Note that we used (\ref{sep}) to simplify the presentation but nothing is essential about this assumption.
\end{remark}

%%%%%%%%%%%%%%%%%%%%%%%%%%%%%%%%%%%%%%%%%%%%

\section{Consistency and spectral accuracy}
\label{sec:consistency}

In this section, we prove the consistency result of the spectral approximation of the non-cutoff collision operator, that is, when $f$ has certain regularity, the Fourier approximation of the collision operator enjoys spectral accuracy.

In order to do so, we need the following important regularity result of the non-cutoff collision operator.
\begin{theorem}(\cite{alexandre2009review}, Theorem 7.4) \label{origthm}
	Assume that the collision kernel $ B\left( |v-v_{*}|,\cos\theta\right)=|v-v_*|^{\gamma} b(\cos \theta)$ with $\gamma \in \mathbb{R}$ and the angular part satisfying $\sin^{d-2}\theta b(\cos\theta) \Big|_{\theta\rightarrow 0} \sim K \theta^{-1-\nu}$, $0 < \nu<2$, $d=2$ or $3$. Then, for any $ m\in\mathbb{R} $, we have
	\begin{equation}
	\|\mQ(g,f)\|_{H^m\left(\mathbb{R}^{d}\right)} \leq C\|g\|_{L^1_{(\gamma+\nu)^{+}}\left(\mathbb{R}^{d}\right)}\|f\|_{H^{m+\nu}_{(\gamma+\nu)^{+}}\left(\mathbb{R}^{d}\right)},
	\end{equation}
where $ (\gamma+\nu)^{+}= \max\left\lbrace (\gamma+\nu),0 \right\rbrace $, $\mQ(g,f)$ is the bilinear collision operator given by
 \begin{equation}
	\mQ(g,f)(v) = \int_{\mathbb{R}^d}\int_{S^{d-1}}B\left( |v-v_{*}|,\cos\theta\right)\left[g(v_*')f(v')-g(v_*)f(v)\right]\,\rd{\sigma}\,\rd{v_*},
\end{equation}
and the weighted norms are defined as
\begin{equation}
\|f\|_{L^{p}_s} = \left( \int_{\mathbb{R}^{d}}|f(v)|^p(1+|v|^2)^{sp/2}\, \rd v \right) ^{1/p}, \quad
\|f\|_{H^m_s} = \left(\sum_{|i|\leq m}\|\partial^i f\|_{L^2_s}^2 \right)^{1/2}.
\end{equation}
\end{theorem}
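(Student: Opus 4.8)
The plan is to prove the estimate on the Fourier side, where the order-$\nu$ (fractional-Laplacian) nature of the non-cutoff operator becomes transparent. First I would pass to the weak/Fourier formulation and compute $\widehat{\mQ(g,f)}(\xi)$ via a Bobylev-type identity. In the Maxwell case $\gamma=0$ this is explicit,
\[
\widehat{\mQ(g,f)}(\xi) = \int_{S^{d-1}} b(\hat\xi\cdot\sigma)\bigl[\hat g(\xi^-)\hat f(\xi^+) - \hat g(0)\hat f(\xi)\bigr]\, \rd\sigma, \quad \xi^{\pm} = \frac{\xi\pm|\xi|\sigma}{2},
\]
and for general $\gamma$ the factor $|v-v_*|^\gamma$ becomes a (fractional) derivative/convolution acting in $\xi$, which I would treat as a perturbation carrying an extra $\langle\xi\rangle^{\gamma}$-type weight. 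The goal is then to show that $\langle\xi\rangle^m\,\widehat{\mQ(g,f)}(\xi)$ is controlled in $L^2_\xi$ by $\|g\|_{L^1}\|f\|_{H^{m+\nu}}$ with the stated velocity weights.

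The key structural step is to exploit the cancellation exactly as in the Proposition above. Writing
\[
\hat g(\xi^-)\hat f(\xi^+) - \hat g(0)\hat f(\xi) = \hat g(\xi^-)\bigl[\hat f(\xi^+) - \hat f(\xi)\bigr] + \hat f(\xi)\bigl[\hat g(\xi^-) - \hat g(0)\bigr],
\]
the second term is handled by a cancellation lemma: the angular integral of $\hat g(\xi^-) - \hat g(0)$ against $b$ converges and yields only a bounded multiplier, since $\hat g$ is controlled through $\|g\|_{L^1}$. The first term is where the regularity of $f$ is consumed: because $|\xi^+ - \xi| = |\xi|\sin(\theta/2)$, the increment $\hat f(\xi^+) - \hat f(\xi)$ paired against the singular weight $b(\cos\theta)\sim\theta^{-1-\nu}$ integrates — after the same Taylor cancellation used above to prove integrability of $F$ — to a symbol of size $\langle\xi\rangle^{\nu}$ rather than $\langle\xi\rangle^2$. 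This is precisely the ``$\mQ$ behaves like $-(-\Delta)^{\nu/2}$'' heuristic quoted in the introduction, and it is the source of the gain from $H^{m+\nu}$ down to $H^m$. I would make this rigorous by splitting the angular integral at the scale $\theta\sim\langle\xi\rangle^{-1}$: on $\theta\gtrsim\langle\xi\rangle^{-1}$ a zeroth-order bound $|\hat f(\xi^+) - \hat f(\xi)|\lesssim|\hat f(\xi^+)|+|\hat f(\xi)|$ suffices, while on $\theta\lesssim\langle\xi\rangle^{-1}$ the second-order Taylor remainder is used; each range contributes a factor $\langle\xi\rangle^\nu$, the remainder integral converging precisely because $\nu<2$.

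Finally I would reinstate the velocity weights and the potential $|v-v_*|^\gamma$. For hard potentials $\gamma\ge 0$ one splits $|v-v_*|^\gamma\le C(\langle v\rangle^\gamma + \langle v_*\rangle^\gamma)$, so the growth is absorbed into the $(\gamma+\nu)^{+}$ weights on $g$ and $f$; a Cauchy–Schwarz/Young argument in $\xi$ then produces the product norm $\|g\|_{L^1_{(\gamma+\nu)^{+}}}\|f\|_{H^{m+\nu}_{(\gamma+\nu)^{+}}}$. \emph{The hard part will be the soft-potential range} $\gamma<0$, where $|v-v_*|^\gamma$ is singular at $v=v_*$: there the factor must be controlled by a Hardy–Littlewood–Sobolev or fractional-integration estimate, trading the local singularity for decay/regularity of $f$ and $g$. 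Ensuring that this velocity singularity and the angular fractional-Laplacian symbol do not compound their losses beyond the single weight $(\gamma+\nu)^{+}$ — i.e. that the two mechanisms decouple cleanly — is what I expect to be the main technical obstacle.
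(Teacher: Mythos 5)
First, a structural point: the paper does not prove this statement at all --- it is imported verbatim from the literature (\cite{alexandre2009review}, Theorem 7.4) and used as a black box to derive Lemma 3.2 and Theorem 3.4. So there is no internal proof to compare your proposal against; the only question is whether your sketch would actually reconstruct the cited result. For Maxwell molecules ($\gamma=0$) your skeleton is indeed the standard route in the non-cutoff literature and is essentially viable: Bobylev's identity, the decomposition $\hat g(\xi^-)\bigl[\hat f(\xi^+)-\hat f(\xi)\bigr]+\hat f(\xi)\bigl[\hat g(\xi^-)-\hat g(0)\bigr]$, second-order Taylor with azimuthal cancellation (the same mechanism as the paper's Proposition 2.1), and the split of the angular integral at $\theta\sim\langle\xi\rangle^{-1}$ producing the symbol $\langle\xi\rangle^{\nu}$, which is exactly the fractional-Laplacian heuristic.

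There are, however, concrete gaps. (i) The claim that the $g$-term ``yields only a bounded multiplier'' is false: even after the cancellation at $\theta=0$, the multiplier $M(\xi)=\int_{S^{d-1}}b(\hat\xi\cdot\sigma)\bigl(\hat g(\xi^-)-\hat g(0)\bigr)\,\rd\sigma$ grows like $\langle\xi\rangle^{\nu}\|g\|_{L^1}$, since on the region $\theta\gtrsim|\xi|^{-1}$ no cancellation is available and $\int_{|\xi|^{-1}}\theta^{-1-\nu}\,\rd\theta\sim|\xi|^{\nu}$. This happens to be harmless for the stated estimate (the factor $\langle\xi\rangle^{\nu}$ on $\hat f(\xi)$ is absorbed by $\|f\|_{H^{m+\nu}}$), but the step as written is wrong; moreover, for $\nu\geq 1$ the small-$\theta$ cancellation cannot come from a bare Taylor expansion of $\hat g$ --- with only $\|g\|_{L^1_{(\gamma+\nu)^+}}$ control you must use the azimuthal symmetry (the component of $\xi^-$ orthogonal to $\xi$ is $O(|\xi|\theta)$ and vanishes only after the $\phi$-integration, leaving the $(1-\cos\theta)\sim\theta^2$ part), precisely as in Proposition 2.1 of the paper. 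You also omit the step needed to turn the pointwise average $\int b\,|\hat g(\xi^-)||\hat f(\xi^+)|\,\rd\sigma$ into an $L^2_\xi$ bound, namely Cauchy--Schwarz in $\sigma$ plus the change of variables $\xi\mapsto\xi^+$ with Jacobian control (bounded only after restricting to $\theta\in[0,\pi/2]$, which requires a symmetrization of $b$). (ii) More seriously, the non-Maxwellian case --- which is the actual content distinguishing Theorem 7.4 from the classical Maxwell computation --- is not proved: ``$|v-v_*|^{\gamma}$ becomes a convolution carrying an extra $\langle\xi\rangle^{\gamma}$-type weight'' is not an argument, since $|q|^{\gamma}$ is a velocity weight, not a frequency weight, and weighted Sobolev norms do not commute with the Bobylev transform; and you explicitly defer the whole soft-potential range $\gamma<0$. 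Known proofs handle $\gamma\neq 0$ via a dyadic (or smooth compact-plus-remainder) decomposition of $|q|^{\gamma}$ together with commutator estimates for the weights, and reconstructing that is the bulk of the work. As it stands, your proposal would at best establish the Maxwell case, with the two repairs above.
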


The above theorem can be easily generalized to our setup in the bounded domain $\mathcal{D}_L=[-L,L]^d$ and the truncated collision operator $\mQ^R(g,f)(v)$.
\begin{lemma}\label{BB}
Assume $f$, $g$ are compactly supported in $\mathcal{B}_S$, and $R\geq 2S$, $L\geq \sqrt{2}S$. Then under the same condition as Theorem \ref{origthm}, we have
	\begin{equation}
	\|\mQ^R(g,f)\|_{H^m(\mathcal{D}_L)} \leq C\|g\|_{L^2(\mathcal{D}_L)}\|f\|_{H^{m+\nu}(\mathcal{D}_L)},
	\end{equation}
	where $ C > 0 $ is a constant depending on $ d,\gamma,\nu,L $.
\end{lemma}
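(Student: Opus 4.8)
The plan is to reduce the bounded-domain estimate to the whole-space estimate of Theorem~\ref{origthm}, using the compact support of $f$ and $g$ in an essential way. Three ingredients are needed: removing the $q$-truncation, transferring the left-hand norm from $\mathcal{D}_L$ to $\mathbb{R}^d$, and absorbing the polynomial weights on the right-hand side.

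First I would show that under the stated hypotheses the truncated operator coincides with the full one, i.e. $\mQ^R(g,f)=\mQ(g,f)$ pointwise. The loss term $g(v-q)f(v)$ forces $v,\,v-q\in\mathcal{B}_S$, hence its integrand vanishes unless $|q|=|v-(v-q)|\le 2S\le R$. For the gain term $g(v_*')f(v')$ I would use the identity $v'-v_*'=|q|\sigma$, so that whenever $v',v_*'\in\mathcal{B}_S$ one has $|q|=|v'-v_*'|\le|v'|+|v_*'|\le 2S\le R$; thus the gain-term integrand is also supported in $\{|q|\le 2S\}\subset\mathcal{B}_R$. Consequently the restriction of the $q$-integral to $\mathcal{B}_R$ discards nothing, and $\mQ^R(g,f)=\mQ(g,f)$. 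This is the only genuinely Boltzmann-specific step, and the gain-term support bound is its crux.

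Second I would move the left-hand norm to the whole space. Since $\mQ(g,f)$ is supported in $\mathcal{B}_{\sqrt{2}S}\subseteq\mathcal{D}_L$ (using $L\ge\sqrt{2}S$), the function and all its derivatives vanish near $\partial\mathcal{D}_L$, so the $H^m$ norm is insensitive to the domain of integration:
\[
\|\mQ^R(g,f)\|_{H^m(\mathcal{D}_L)}=\|\mQ(g,f)\|_{H^m(\mathcal{D}_L)}=\|\mQ(g,f)\|_{H^m(\mathbb{R}^d)},
\]
and Theorem~\ref{origthm} bounds the right-most quantity by $C\|g\|_{L^1_{(\gamma+\nu)^+}(\mathbb{R}^d)}\|f\|_{H^{m+\nu}_{(\gamma+\nu)^+}(\mathbb{R}^d)}$. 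Third I would discard the weights. As $g$ is supported in $\mathcal{B}_S$, the weight $(1+|v|^2)^{(\gamma+\nu)^+/2}$ is bounded there by $(1+S^2)^{(\gamma+\nu)^+/2}$, so by Cauchy--Schwarz on the finite-measure set $\mathcal{B}_S$,
\[
\|g\|_{L^1_{(\gamma+\nu)^+}(\mathbb{R}^d)}\le C\,\|g\|_{L^1(\mathcal{B}_S)}\le C\,|\mathcal{B}_S|^{1/2}\|g\|_{L^2(\mathcal{B}_S)}\le C\,\|g\|_{L^2(\mathcal{D}_L)}.
\]
For $f$ the boundedness of the weight gives $\|f\|_{H^{m+\nu}_{(\gamma+\nu)^+}(\mathbb{R}^d)}\le C\|f\|_{H^{m+\nu}(\mathbb{R}^d)}=C\|f\|_{H^{m+\nu}(\mathcal{D}_L)}$, the last equality again by compact support inside $\mathcal{D}_L$. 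Chaining the three steps yields the claim with $C=C(d,\gamma,\nu,L)$.

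The one delicate point is the weighted-norm comparison for $f$ when $m+\nu$ is non-integer, so that $H^{m+\nu}$ is a fractional (Bessel-potential) space and pointwise domination of the weight no longer settles the inequality directly. There I would write $\langle v\rangle^{(\gamma+\nu)^+}f=\chi\,\langle v\rangle^{(\gamma+\nu)^+}f$ for a fixed cutoff $\chi\in C_c^\infty(\mathbb{R}^d)$ with $\chi\equiv 1$ on $\mathcal{B}_S$, and invoke the standard fact that multiplication by a fixed $C_c^\infty$ function is bounded on $H^{s}(\mathbb{R}^d)$ for every real $s$. Everything else is routine bookkeeping with compact supports, and I expect this fractional weight estimate to be the main (though still standard) obstacle.
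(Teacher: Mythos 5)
Your proposal is correct and follows essentially the same route as the paper's proof: use the compact supports (with $R\geq 2S$, $L\geq\sqrt{2}S$) to identify $\mathcal{Q}^R(g,f)$ with $\mathcal{Q}(g,f)$, supported in $\mathcal{B}_{\sqrt{2}S}\subset\mathcal{D}_L$, then apply Theorem~\ref{origthm}, bound the polynomial weights by a constant depending on $L$, and use Cauchy--Schwarz to pass from $\|g\|_{L^1}$ to $\|g\|_{L^2}$. Your additional justifications --- the gain-term support bound via $|v'-v_*'|=|q|$ and the smooth-cutoff multiplier argument for the fractional weighted norm $H^{m+\nu}_{(\gamma+\nu)^+}$ --- merely fill in details that the paper's two-line proof leaves implicit.
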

\begin{proof}
Note that if $f$, $g$ are compactly supported in $\mathcal{B}_S$, then $\mQ^R(g,f)(v)\equiv \mQ(g,f)(v)$, and $\mQ(g,f)(v)$ is compactly supported in $\mathcal{B}_{\sqrt{2}S}\subset \mathcal{D}_L$. Then using Theorem \ref{origthm}, we have
	\begin{equation}
	\begin{split}
	\|\mQ^R(g,f)\|_{H^m(\mathcal{D}_L)}& \leq C  (1+L^2)^{(\gamma+\nu)^{+}} \|g\|_{L^1(\mathcal{D}_L)}\|f\|_{H^{m+\nu}(\mathcal{D}_L)}\\
	&\leq C  (1+L^2)^{(\gamma+\nu)^{+}} (2L)^{\frac{d}{2}}\|g\|_{L^2(\mathcal{D}_L)}\|f\|_{H^{m+\nu}(\mathcal{D}_L)},
	\end{split}
	\end{equation}
	where we used the Cauchy-Schwartz inequality in the second inequality.	
\end{proof}

For a periodic function $f(v)\in L^2(\mathcal{D}_L)$, we define its Fourier projection as
\begin{equation}
\mP_Nf=\sum_{k=-\frac{N}{2}}^{\frac{N}{2}} \hat{f}_k \e^{\im \frac{\pi}{L}k \cdot v}, \quad \hat{f}_k=\langle f, \e^{\im \frac{\pi}{L}k \cdot v}\rangle.
\end{equation}
We have the following basic fact regarding the projection operator (see for instance \cite{HGG07}).

\begin{lemma} \label{PP}
For any $m, r\in \mathbb{R}$ such that $0\leq m\leq r$, if a periodic function $f\in H^r(\mathcal{D}_L)$, then there hold
\begin{equation}
\|f-\mP_Nf\|_{L^2}\leq \frac{C}{N^r}\|f\|_{H^r}, \quad \|f-\mP_Nf\|_{H^m}\leq \frac{C}{N^{r-m}}\|f\|_{H^r}.
\end{equation}
\end{lemma}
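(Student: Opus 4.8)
The plan is to reduce the statement to Parseval's identity in the Fourier basis together with a single elementary tail estimate, which is the classical route to spectral-accuracy bounds for Fourier truncation. Since $f$ is $2L$-periodic in each variable, I expand it in the full Fourier series $f=\sum_{k\in\mathbb{Z}^d}\hat{f}_k\,\e^{\im\frac{\pi}{L}k\cdot v}$, with coefficients $\hat{f}_k=\langle f,\e^{\im\frac{\pi}{L}k\cdot v}\rangle$ exactly as in the definition of $\mP_N$. The projection $\mP_N f$ retains precisely the modes with $-\frac{N}{2}\leq k\leq\frac{N}{2}$ (componentwise), so the error $f-\mP_N f$ is supported, in frequency, on the complementary index set $\Lambda_N^c:=\{k\in\mathbb{Z}^d:\ |k_j|>N/2 \text{ for some } j\}$. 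The orthonormality of $\{\e^{\im\frac{\pi}{L}k\cdot v}\}$ under $\langle\cdot,\cdot\rangle$ is what keeps the computation clean.

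First I would record the spectral characterization of the Sobolev norms. Differentiating the series term by term, each $\partial^i$ produces a factor $(\im\pi/L)^{|i|}k^i$, so by Parseval $\|\partial^i f\|_{L^2(\mathcal{D}_L)}^2=(2L)^d(\pi/L)^{2|i|}\sum_k |k^i|^2|\hat{f}_k|^2$; summing over $|i|\leq m$ yields, for integer $m$, the two-sided comparison $\|f\|_{H^m(\mathcal{D}_L)}^2\asymp\sum_k(1+|k|^2)^m|\hat{f}_k|^2$ with constants depending only on $d,m,L$. For non-integer $m,r$ I would simply adopt the right-hand side as the (equivalent) definition of the periodic Sobolev norm, which is the standard convention (see \cite{HGG07}). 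Applying the same identity to $f-\mP_N f$ gives $\|f-\mP_N f\|_{H^m(\mathcal{D}_L)}^2\asymp\sum_{k\in\Lambda_N^c}(1+|k|^2)^m|\hat{f}_k|^2$.

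The heart of the estimate is the elementary observation that every $k\in\Lambda_N^c$ satisfies $|k|\geq|k_j|>N/2$, hence $1+|k|^2>N^2/4$. Factoring $(1+|k|^2)^m=(1+|k|^2)^r(1+|k|^2)^{-(r-m)}$ and using $r-m\geq 0$ together with this lower bound gives $(1+|k|^2)^{-(r-m)}\leq(N^2/4)^{-(r-m)}=4^{\,r-m}N^{-2(r-m)}$. Summing over $\Lambda_N^c$ then yields $\|f-\mP_N f\|_{H^m}^2\leq C\,N^{-2(r-m)}\sum_{k\in\Lambda_N^c}(1+|k|^2)^r|\hat{f}_k|^2\leq C\,N^{-2(r-m)}\|f\|_{H^r}^2$, where $C$ depends on $d,m,r,L$ but not on $N$ or $f$; taking square roots gives the $H^m$ bound, and the $L^2$ bound is the special case $m=0$.

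I do not anticipate a genuine obstacle, as this is the textbook spectral-accuracy estimate. The only points requiring care are bookkeeping: matching the weights against the frequency scaling factor $\pi/L$ (absorbed into the $L$-dependent constant $C$), handling the fractional-order case through the spectral definition of the norm rather than term-by-term differentiation, and noting that the \emph{box} truncation (a bound on each component $k_j$ rather than on $|k|$) still forces $|k|>N/2$ on the tail set $\Lambda_N^c$, which is all the estimate requires.
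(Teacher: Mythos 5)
Your proof is correct, and it is essentially the paper's own argument in the only sense available: the paper states this lemma without proof, citing \cite{HGG07}, and the Parseval-plus-tail estimate you give (error supported on the frequency set where some $|k_j|>N/2$, hence $1+|k|^2>N^2/4$, then factoring $(1+|k|^2)^m=(1+|k|^2)^r(1+|k|^2)^{-(r-m)}$) is exactly the standard argument in that reference. Your two points of care are also the right ones — the box truncation still forces $|k|>N/2$ on the tail, and for non-integer $m,r$ the norm must be taken in its spectral form, which is consistent with the paper's implicit convention since Theorem \ref{origthm} already uses $H^m$ with $m\in\mathbb{R}$.
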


We are ready to prove our main result.
\begin{theorem} \label{mainthm}
Assume that the collision kernel $ B\left( |v-v_{*}|,\cos\theta\right)=|v-v_*|^{\gamma} b(\cos \theta)$ with $\gamma \in \mathbb{R}$ and the angular part satisfying $ \sin^{d-2}\theta b\left(\cos\theta\right)\Big|_{\theta\rightarrow 0} \sim K \theta^{-1-\nu}$, $0 <\nu<2$, $d=2$ or $3$. Furthermore, assume $f$ is compactly supported in $\mathcal{B}_S$, and $R\geq 2S$, $L\geq \sqrt{2}S$. Then for any $r\in \mathbb{R}$ such that $r\geq \nu$, if $f\in H^r(\mathcal{D}_L)$, we have
\begin{equation}
\|\mQ^R(f,f)-\mP_N\mQ^R(\mP_Nf,\mP_Nf)\|_{L^2} \leq \frac{C}{N^{r-\nu}}\left(\|f\|_{L^2}\|f\|_{H^{r}}+\|f\|_{H^{\nu}}\|f\|_{H^{r-\nu}}\right).
\end{equation}
\end{theorem}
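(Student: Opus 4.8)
The plan is to reduce everything to the two elementary building blocks already at hand: the projection estimate of Lemma~\ref{PP} and the bilinear regularity estimate of Lemma~\ref{BB}. First I would insert the intermediate term $\mP_N\mQ^R(f,f)$ and split the error by the triangle inequality as
\begin{equation}
\mQ^R(f,f)-\mP_N\mQ^R(\mP_Nf,\mP_Nf)=\underbrace{\left[\mQ^R(f,f)-\mP_N\mQ^R(f,f)\right]}_{\mathrm{(I)}}+\underbrace{\mP_N\left[\mQ^R(f,f)-\mQ^R(\mP_Nf,\mP_Nf)\right]}_{\mathrm{(II)}}.
\end{equation}
Here (I) measures how well the Fourier projection resolves the (smooth) output of the collision operator, while (II) measures the error committed by feeding the projected data $\mP_Nf$ into the operator.

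For (I), since $f$ is compactly supported, $\mQ^R(f,f)=\mQ(f,f)$ is supported in $\mathcal{B}_{\sqrt2 S}\subset\mathcal{D}_L$ and its periodic extension lies in $H^{r-\nu}(\mathcal{D}_L)$ (note $r-\nu\geq0$). Applying Lemma~\ref{PP} with smoothness index $r-\nu$ and then Lemma~\ref{BB} with $m=r-\nu$ (so that $m+\nu=r$) gives
\begin{equation}
\|\mathrm{(I)}\|_{L^2}\leq\frac{C}{N^{r-\nu}}\|\mQ^R(f,f)\|_{H^{r-\nu}}\leq\frac{C}{N^{r-\nu}}\|f\|_{L^2}\|f\|_{H^{r}},
\end{equation}
which is exactly the first term on the right-hand side.

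For (II), I would use that $\mP_N$ is an orthogonal projection, hence a contraction on $L^2$, so that $\|\mathrm{(II)}\|_{L^2}\leq\|\mQ^R(f,f)-\mQ^R(\mP_Nf,\mP_Nf)\|_{L^2}$, and then exploit the bilinearity of $\mQ^R$ to write
\begin{equation}
\mQ^R(f,f)-\mQ^R(\mP_Nf,\mP_Nf)=\mQ^R(f-\mP_Nf,\,f)+\mQ^R(\mP_Nf,\,f-\mP_Nf).
\end{equation}
The essential trick is the \emph{asymmetric} distribution of the projection error between the two terms so that both decay at the rate $N^{-(r-\nu)}$. Estimating both with Lemma~\ref{BB} at $m=0$, I would bound the first term by $C\|f-\mP_Nf\|_{L^2}\|f\|_{H^\nu}$ and use Lemma~\ref{PP} at smoothness index $r-\nu$ (rather than the sharper $r$) to get $\|f-\mP_Nf\|_{L^2}\leq CN^{-(r-\nu)}\|f\|_{H^{r-\nu}}$, producing $CN^{-(r-\nu)}\|f\|_{H^\nu}\|f\|_{H^{r-\nu}}$; for the second term I would bound it by $C\|\mP_Nf\|_{L^2}\|f-\mP_Nf\|_{H^\nu}\leq C\|f\|_{L^2}\|f-\mP_Nf\|_{H^\nu}$ and use Lemma~\ref{PP} with $m=\nu$ and smoothness index $r$ to get $\|f-\mP_Nf\|_{H^\nu}\leq CN^{-(r-\nu)}\|f\|_{H^r}$, producing $CN^{-(r-\nu)}\|f\|_{L^2}\|f\|_{H^r}$. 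Summing the three contributions yields the claimed estimate.

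The main obstacle, and the place I expect the real work to hide, is the applicability of the regularity estimate of Lemma~\ref{BB} inside term (II). That lemma is stated for $f,g$ compactly supported in $\mathcal{B}_S$, whereas the trigonometric polynomial $\mP_Nf$ and the error $f-\mP_Nf$ are periodic but \emph{not} compactly supported, so Lemma~\ref{BB} cannot be quoted verbatim. To close this gap I would first promote Lemma~\ref{BB} to a genuinely periodic bilinear estimate, $\|\mQ^R(g,h)\|_{H^m(\mathcal{D}_L)}\leq C\|g\|_{L^2(\mathcal{D}_L)}\|h\|_{H^{m+\nu}(\mathcal{D}_L)}$ valid for all periodic $g,h\in H^{m+\nu}(\mathcal{D}_L)$; this should hold because $\mQ^R$ truncates the $q$-integral to $\mathcal{B}_R$ and therefore acts as a bounded bilinear operator at the level of the Fourier weights $G(l,m)$, independently of support. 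Establishing this periodic version carefully (rather than only the compactly supported one inherited from Theorem~\ref{origthm}) is the step requiring the most attention; once it is in place, the chain of estimates above is routine.
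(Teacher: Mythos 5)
Your proposal reproduces the paper's proof essentially step for step: the same insertion of $\mP_N\mQ^R(f,f)$ and triangle-inequality splitting, the same combination of Lemma~\ref{PP} followed by Lemma~\ref{BB} (with $m=r-\nu$) for term (I), and for term (II) the same use of the $L^2$-contractivity of $\mP_N$ (the paper calls it Parseval's inequality), the same bilinear decomposition $\mQ^R(f-\mP_Nf,f)+\mQ^R(\mP_Nf,f-\mP_Nf)$, Lemma~\ref{BB} at $m=0$, and exactly the same asymmetric choice of projection rates, arriving at the identical bound.

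The one point where you go beyond the paper is your closing caveat, and it is well taken: the paper's proof applies Lemma~\ref{BB} directly to the pairs $(f-\mP_Nf,\,f)$ and $(\mP_Nf,\,f-\mP_Nf)$ without comment, even though neither $\mP_Nf$ nor $f-\mP_Nf$ is compactly supported in $\mathcal{B}_S$ as the lemma's hypotheses require --- they are merely periodic functions on $\mathcal{D}_L$. So the gap you flag is real, but it is a gap in the paper's own argument rather than an obstacle you introduced, and the paper does not resolve it. Your proposed remedy --- upgrading Lemma~\ref{BB} to a bilinear estimate valid for all periodic $g,h$, exploiting the truncation of the $q$-integral to $\mathcal{B}_R$ --- is exactly the kind of refinement needed to make the consistency argument fully rigorous, and it requires a genuine proof (the periodized truncated operator is not literally the operator covered by Theorem~\ref{origthm}, so the estimate cannot simply be inherited). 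Modulo supplying that periodic lemma, your chain of estimates is correct and coincides with the paper's.
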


\begin{proof}
By the obvious triangle inequality, 
	\begin{equation}
	\begin{split}
	&\|\mQ^R(f,f)-\mP_N\mQ^R(\mP_Nf,\mP_Nf)\|_{L^2} \\
	\leq &\|\mQ^R(f,f)-\mP_N\mQ^R(f,f)\|_{L^2} + \|\mP_N\mQ^R(f,f)-\mP_N\mQ^R(\mP_Nf,\mP_Nf)\|_{L^2}.
	\end{split}
	\end{equation}
	
For the first term, we have for any $r\geq \nu$,
	\begin{equation} \label{term1}
	 \|\mQ^R(f,f)-\mP_N\mQ^R(f,f)\|_{L^2} \leq \frac{C}{N^{r-\nu}}\|\mQ^R(f,f)\|_{H^{r-\nu}}\leq \frac{C}{N^{r-\nu}}\|f\|_{L^2}\|f\|_{H^{r}}, 
	\end{equation}
where we used the Lemma \ref{PP} in the first inequality, and Lemma \ref{BB} in the second inequality.

For the second term, we have for any $r\geq \nu$,
	\begin{equation} \label{term2}
	\begin{split}
	& \|\mP_N\mQ^R(f,f)-\mP_N\mQ^R(\mP_Nf,\mP_Nf)\|_{L^2} \\
\leq  &\|\mQ^R(f,f)-\mQ^R(\mP_Nf,\mP_Nf)\|_{L^2}\\
 \leq  & \|\mQ^R(f-\mP_Nf,f)\|_{L^2}+\|\mQ^R(\mP_Nf,f-\mP_Nf)\|_{L^2}\\
 \leq & C\|f-\mP_Nf\|_{L^2}\|f\|_{H^{\nu}}+C\|\mP_N f\|_{L^2}\|f-\mP_Nf\|_{H^{\nu}}\\
 \leq  & \frac{C}{N^{r-\nu}}\|f\|_{H^{r-\nu}}\|f\|_{H^{\nu}}+\frac{C}{N^{r-\nu}}\|f\|_{L^2}\|f\|_{H^{r}},
	\end{split}
	\end{equation}
where we used the Parseval's inequality in the first inequality, Lemma \ref{BB} (with $m=0$) in the third inequality, and Lemma \ref{PP} in the last inequality.

Combining (\ref{term1}) and (\ref{term2}), we obtain the desired inequality.
\end{proof}

As a corollary, we have the spectral accuracy for the moments as well.
\begin{corollary}
Under the same condition as Theorem \ref{mainthm}, if a function $\phi\in L^2(\mathcal{D}_L)$, we have
\begin{equation}
\big|\langle \mQ^R(f,f),\phi\rangle -\langle \mP_N\mQ^R(\mP_Nf,\mP_Nf),\phi\rangle \big| \leq   \frac{C}{N^{r-\nu}}.
\end{equation}
\end{corollary}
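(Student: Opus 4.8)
The plan is to deduce this weak (moment) estimate directly from the strong $L^2$ estimate of Theorem~\ref{mainthm} by a single application of the Cauchy--Schwarz inequality, so that essentially no new analytic work is required. First I would use the linearity of the inner product $\langle\cdot,\cdot\rangle$ in its first argument to collapse the difference of the two moments into a single pairing:
\begin{equation}
\langle \mQ^R(f,f),\phi\rangle -\langle \mP_N\mQ^R(\mP_Nf,\mP_Nf),\phi\rangle = \big\langle \mQ^R(f,f)-\mP_N\mQ^R(\mP_Nf,\mP_Nf),\,\phi\big\rangle.
\end{equation}
Applying Cauchy--Schwarz to this pairing and recalling the normalization $\langle u,w\rangle=\frac{1}{(2L)^d}\int_{\mathcal{D}_L}u\bar w\,\rd v$, the left-hand side is bounded, up to a fixed constant depending only on $L$ and $d$, by
\begin{equation}
\|\mQ^R(f,f)-\mP_N\mQ^R(\mP_Nf,\mP_Nf)\|_{L^2}\,\|\phi\|_{L^2}.
\end{equation}

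Next I would invoke Theorem~\ref{mainthm} to control the first factor by $\tfrac{C}{N^{r-\nu}}\big(\|f\|_{L^2}\|f\|_{H^r}+\|f\|_{H^\nu}\|f\|_{H^{r-\nu}}\big)$. The final step is pure bookkeeping: since $f$ is fixed with $f\in H^r(\mathcal{D}_L)$ and $0<\nu\le r$, the embeddings $H^r\hookrightarrow H^\nu$, $H^r\hookrightarrow H^{r-\nu}$, $H^r\hookrightarrow L^2$ guarantee that each of the four $f$-norms is finite, and $\|\phi\|_{L^2}$ is finite by the hypothesis $\phi\in L^2(\mathcal{D}_L)$. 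Absorbing all of these quantities, together with the normalization factor, into a single constant $C$ yields the claimed bound $\tfrac{C}{N^{r-\nu}}$.

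I do not anticipate any genuine obstacle here: all the difficulty---the regularity estimate for $\mQ^R$ and the interplay of projection and collision---has already been dispatched in Lemma~\ref{BB} and Theorem~\ref{mainthm}. The only points requiring a little care are (i) keeping track of the $(2L)^{-d}$ normalization in the definition of $\langle\cdot,\cdot\rangle$ when passing to the unweighted $L^2(\mathcal{D}_L)$ norm in Cauchy--Schwarz, and (ii) making explicit that the constant $C$ now silently depends on the fixed datum $f$ (through $\|f\|_{H^r}$) and on $\phi$ (through $\|\phi\|_{L^2}$); this is harmless, since the corollary only asserts the spectral rate $O(N^{-(r-\nu)})$ in $N$.
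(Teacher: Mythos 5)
Your proposal is correct and matches the paper's own proof essentially verbatim: linearity of the inner product, a single Cauchy--Schwarz application, and Theorem~\ref{mainthm} to bound the resulting $L^2$ difference, with the $f$- and $\phi$-dependent norms absorbed into the constant $C$. Your extra remarks on the $(2L)^{-d}$ normalization and the Sobolev embeddings ensuring finiteness of the norms are sound refinements that the paper leaves implicit.
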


\begin{proof}
Using the Cauchy-Schwartz inequality and Theorem \ref{mainthm},
\begin{equation}
\begin{split}
\big|\langle \mQ^R(f,f),\phi\rangle -\langle \mP_N\mQ^R(\mP_Nf,\mP_Nf),\phi\rangle \big| & \leq \|\mQ^R(f,f)-\mP_N\mQ^R(\mP_Nf,\mP_Nf)\|_{L^2} \|\phi\|_{L^2}\\
&\leq \frac{C}{N^{r-\nu}}\left(\|f\|_{L^2}\|f\|_{H^{r}}+\|f\|_{H^{\nu}}\|f\|_{H^{r-\nu}}\right)\|\phi\|_{L^2}.
\end{split}
\end{equation}
\end{proof}

%\textcolor{blue}{
%\begin{remark}
%In particular, the method preserves mass exactly (because $1$ is in the approximation space $\mathbb{P}_N$), that is,
%\begin{equation}
%\langle \mQ^R(f,f),1\rangle=\langle \mP_N\mQ^R(\mP_Nf,\mP_Nf),1\rangle=\langle \mP_N\mQ^R(f_N,f_N),1\rangle=0.
%\end{equation}
%\end{remark}
%}

%%%%%%%%%%%%%%%%%%%%%%%%%%%%%%%%%%%%%%%

\section{A fast algorithm and precomputation of the weight}
\label{sec:fast}

Now the validity of the Fourier spectral method for the non-cutoff Boltzmann equation has been justified. When it comes to implementation, the method requires the storage of the precomputed weight $G(l,m)$ as defined in (\ref{weight}) and a direct evaluation of the sum (\ref{sum}). Assume $N$ points (basis) are used in each velocity dimension, the total computational cost would be $O(N^{2d})$ and the same amount of memory is required to store the weight matrix. Therefore, the direct spectral method is both computationally expensive and memory consuming, especially for three dimensional problems.

Recently in  \cite{GHHH17, HM19}, a fast algorithm is introduced to accelerate the direct Fourier spectral method as well as to alleviate its memory requirement. The idea is to shift some offline precomputed items to online computation so that the sum (\ref{sum}), which is a weighted convolution, can be rendered into a few pure convolutions to be evaluated efficiently by the fast Fourier transform (FFT). Fortunately this idea can be generalized to the non-cutoff case without much change, which we briefly describe below.

Our goal is to find a low-rank decomposition of $G(l,m)$ in \eqref{weight} as follows
\begin{equation}\label{GG1}
G(l,m)\approx\sum_{p=1}^{N_{p}}\alpha_{p}(l+m)\beta_{p}(m),
\end{equation}
where $\alpha_{p}$ and $\beta_{p}$ are some functions to be determined and the number of terms $N_{p}$ in the expansion is small. With this approximation, (\ref{sum}) becomes
\begin{equation}
\mQ_{k}^R\approx \sum_{p=1}^{N_{p}}\alpha_{p}(k)\sum\limits_{\substack{l,m=-\frac{N}{2}\\l+m=k}}^{\frac{N}{2}}f_{l}\left(\beta_{p}(m)f_{m}\right),
\end{equation}
where the inner summation is a convolution of two functions $f_{l}$ and $\beta_{p}(m)f_{m}$. Hence the total cost to evaluate $\mQ^R_{k}$ (for all $k$) can be reduced from $O(N^{2d})$ to $O(N_{p}N^{d}\log N)$ with the help of a few FFTs. 

To find the decomposition as in (\ref{GG1}), one just needs to use the form (\ref{GG}) and approximates the integrals in $|q|$ and $\hat{q}$ using quadratures as
\begin{equation}\label{approxG}
\begin{split}
G(l,m) \approx \sum_{\aq,\q}w_{\aq}w_{\q} \Phi(|q|)\aq^{d-1}\e^{-\im \frac{\pi}{L}\aq m\cdot\q}F(l+m,\aq,\q),
\end{split}
\end{equation}
where $w_{\aq}$ and $w_{\q}$ are the corresponding quadrature weights. In practice, we use $N_{\aq}=O(N)$ Gauss-Legendre quadrature points to discretize $|q|$ and $N_{\q}\ll N$ Spherical Design \cite{Womersley} quadrature points to discretize $\q$. Now using (\ref{approxG}), (\ref{sum}) is approximated by
\begin{equation}\label{realQ}
\mQ^R_{k}\approx  \sum_{\aq,\q}w_{\aq}w_{\q}\Phi(|q|)\ \aq^{d-1}F(k,\aq,\q)\sum\limits_{\substack{l,m=-\frac{N}{2}\\l+m=k}}^{\frac{N}{2}}f_{l}\left(\e^{-\im \frac{\pi}{L}\aq m\cdot\q}f_{m}\right):=\tilde{\mQ}^R_k.
\end{equation}
Therefore, the total cost to evaluate ${\mQ}_{k}^R$ is $O(N_{\q}N^{d+1}\log N)$. What's more, the only term that needs to be precomputed and stored is the weight $F(k,\aq,\q)$ defined in (\ref{FF}), which requires $O(N_{\q} N^{d+1})$ memory at most.

\begin{remark}
The fast algorithm introduced above still preserves mass as in the direct spectral method. To see it, notice that 
\begin{equation}
\rho_N:=\int_{\mathcal{D}_L}f_N\,\rd{v}=(2L)^df_0(t),
\end{equation}
where $f_0$ is the zero-th mode of the numerical solution and is governed by
\begin{equation}
\frac{\rd}{\rd t} f_0 = \tilde{\mQ}^R_0.
\end{equation}
From (\ref{realQ}) and the definition of $F$ in (\ref{FF}), it is easy to see $\tilde{\mQ}^R_0\equiv 0$ since $F(0,|q|,\hat{q})\equiv 0$.
\end{remark}

\subsection{Strategy in precomputation of $F(k,\aq,\q)$}
\label{strategyF}

From the previous discussion, it is clear that the online part of the fast algorithm is no different from that in the cutoff case. The main difference lies in the offline stage, i.e., the precomputation of the weight $F(k,\aq,\q)$. Indeed if the kernel is integrable, computing $F$ is rather straightforward. However, in the non-cutoff case, as we proved in Section~\ref{subsec:intergability}, $F$ contains an integrable singularity as $\theta\rightarrow 0$. Due to the cancellation effects of terms $b(\sigma\cdot\q)$ and $(\e^{\im \frac{\pi}{2L}|q| k\cdot (\hat{q}-\sigma)}-1)$ in (\ref{FF}), extra care is needed to compute the integral accurately. This is especially true when the singularity in the kernel is strong.

To be precise, we take the following strategy:

\vspace{0.1in}
{\bf (i) 2D case}: We start with the formula (\ref{FF2D1}). Since the singularity of $b(\cos\theta)$ appears both when $\theta\rightarrow 0$ and $\theta\rightarrow 2\pi$, we split the integration domain $\theta\in[0,2\pi]$ into three parts $[0,\epsilon]$, $[\epsilon,2\pi-\epsilon]$, and $[2\pi-\epsilon,2\pi]$:
\begin{equation}
\begin{split}
  &F(k,\aq,\q) = \int_{0}^{2\pi}b(\cos\theta)\left(\e^{\F(\theta)}-1\right) \,\rd\theta  \\
 =&\int_{0}^{\epsilon}b(\cos\theta)\left(\e^{\F(\theta)}-1\right) \,\rd\theta
  +\int_{\epsilon}^{2\pi-\epsilon}b(\cos\theta)\left(\e^{\F(\theta)}-1\right)\, \rd\theta
  +\int_{2\pi-\epsilon}^{2\pi}b(\cos\theta)\left(\e^{\F(\theta)}-1\right) \,\rd\theta \\
 \approx& \underbrace{\int_{0}^{\epsilon}b(\cos\theta)\left(\F(\theta)+\frac{1}{2}\F^2(\theta)\right)\, \rd\theta}_{I} + \underbrace{\int_{\epsilon}^{2\pi-\epsilon}b(\cos\theta)\left(\e^{\F(\theta)}-1\right)\, \rd\theta}_{II}  \\
 & + \underbrace{\int_{2\pi-\epsilon}^{2\pi}b(\cos\theta)\left(\F(\theta)+\frac{1}{2}\F^2(\theta)\right) \,\rd\theta}_{III},
\end{split}
\end{equation}
where for parts $ I $ and $ III $  the Taylor expansion of $\e^{\F(\theta)}$ up to second order is used, hence some angular terms can be cancelled immediately. After this manipulation, standard quadrature can be applied to each part. In our implementation, we calculate part $I$ and part $III$ exactly (after Taylor expansion), and apply the MATLAB built-in function ``integral" to part $II$.

\vspace{0.1in}
{\bf (ii) 3D case}: We start with the formula (\ref{FF3D1}) and split the integration domain $\theta\in[0,\pi]$ into two parts $[0,\epsilon]$ and \textcolor{red}{$[\epsilon,\pi]$}:
\begin{equation}
\begin{split}
&F(k,\aq,\q)= \int_{0}^{2\pi}\int_{0}^{\pi}b(\cos\theta)\left(\e^{\F(\theta)}-1\right) \sin\theta \,\rd\theta\, \rd\phi   \\
= &\int_{0}^{2\pi}\int_{0}^{\epsilon}b(\cos\theta) \left(\e^{\F(\theta)}-1\right)\sin\theta \,  \rd\theta \,\rd\phi + \int_{0}^{2\pi}\int_{\epsilon}^{\pi}b(\cos\theta) \left(\e^{\F(\theta)}-1\right)\sin\theta \, \rd\theta \,\rd\phi   \\
\approx & \underbrace{\int_{0}^{2\pi}\int_{0}^{\epsilon}b(\cos\theta) \left(\F(\theta)+\frac{1}{2}\F^2(\theta)\right) \sin\theta \, \rd\theta\, \rd\phi}_{I} + \underbrace{ \int_{0}^{2\pi}\int_{\epsilon}^{\pi}b(\cos\theta)\left(\e^{\F(\theta)}-1\right)\sin\theta  \, \rd\theta \,\rd\phi}_{II},
\end{split}
\end{equation}
where for part $ I $  the Taylor expansion of $\e^{\F(\theta)}$ up to second order is again used to cancel some angular terms. After this manipulation, standard quadrature can be applied to each part. In our implementation, for the integral in $\theta$, we calculate part $I$ exactly (after Taylor expansion) and apply the MATLAB built-in function ``integral" to part $II$; for the integral in $\phi$, we use the mid-point rule for both part $I$ and part $II$.

\vspace{0.1in}
In practice, we choose $\epsilon=\pi/1000$ and the numerical results in the next section (in particular the BKW tests) imply that $F(k,\aq,\q)$ has been computed to the same accuracy as in the cutoff case. 

\begin{remark}
Similarly as in Section~\ref{subsec:intergability}, the assumption (\ref{sep}) is used to simplify the presentation but all the discussion in this section works for general kernels of the form $B(|q|,\sigma\cdot \hat{q})$.
\end{remark}

\subsection{Key differences between the cutoff case and non-cutoff case}

Although formally the fast Fourier spectral method presented above can be implemented the same in both cutoff and non-cutoff cases (provided the weight $F(k,\aq,\q)$ has been precomputed), we would like to point out a few key differences between the two cases.

First of all, in the cutoff case, for quite a few collision kernels commonly used for numerical purpose such as the variable hard sphere model (VHS) \cite{Bird}, where $B(|q|,\sigma\cdot \hat{q})=C|q|^{\gamma}$ only has the velocity dependence, there exists analytical formula for $F(k,\aq,\q)$ hence no precomputation is needed. Indeed, if $b(\cos\theta)\equiv C$, in 2D,
\begin{equation}
\begin{split} \label{cutoff2D}
F(k,\aq,\q) &=C \int_{S^{1}}\left(\e^{\im\frac{\pi}{2L}\aq k\cdot(\q-\sigma)}-1\right)\,\rd{\sigma}=C\left(\e^{\im\frac{\pi}{2L}\aq k\cdot \hat{q}}\int_{S^{1}} \e^{-\im\frac{\pi}{2L}\aq k\cdot \sigma}\,\rd{\sigma}-2\pi\right)\\
&=2\pi C\left[\e^{\im\frac{\pi}{2L}\aq k\cdot \hat{q}}J_0\left(\frac{\pi}{2L}|q||k|\right)-1\right];
\end{split}
\end{equation}
and in 3D,
\begin{equation}
\begin{split} \label{cutoff3D}
F(k,\aq,\q) &=C \int_{S^{2}}\left(\e^{\im\frac{\pi}{2L}\aq k\cdot(\q-\sigma)}-1\right)\,\rd{\sigma}=C\left(\e^{\im\frac{\pi}{2L}\aq k\cdot \hat{q}}\int_{S^{2}} \e^{-\im\frac{\pi}{2L}\aq k\cdot \sigma}\,\rd{\sigma}-4\pi\right)\\
&=4\pi C\left[\e^{\im\frac{\pi}{2L}\aq k\cdot \hat{q}}\text{Sinc}\left(\frac{\pi}{2L}|q||k|\right)-1\right].
\end{split}
\end{equation}
However, in the non-cutoff case, precomputation is always inevitable.

Secondly, in the cutoff case, one can separate the gain (positive) term and loss (negative) term in the collision operator. Since the loss term under the Fourier spectral approximation is readily a convolution, no extra low-rank approximation as in (\ref{approxG}) is needed. Numerical experiments suggest that this way would yield better accuracy in comparison to computing the gain and loss terms together using (\ref{realQ}), especially for anisotropic solutions, see \cite{HM19}. Unfortunately, this option is not available in the non-cutoff case as the gain and loss terms cannot be separated (they have to be viewed together since each of them is a divergent integral).

%%%%%%%%%%%%%%%%%%%%%%%%%%%%%%%%%%%%%%%

\section{Numerical results}
\label{sec:num}

In this section, we perform a series of numerical tests to demonstrate the accuracy and efficiency of the proposed method in 2D and 3D cases. We first carefully validate the accuracy of the method using an analytical solution, which can be constructed for both cutoff and non-cutoff collision kernels. We then use the method to simulate a few examples with measure valued initial data, where we observe very different solution behavior for different kernels.

\subsection{Some preliminaries on the BKW solution with non-cutoff kernels}
\label{sec:BKW}

The Bobylev-Krook-Wu (BKW) solution \cite{Bobylev75_1, KW77} is one of the few analytical solutions one can construct for the homogeneous Boltzmann equation with Maxwell molecules (i.e., $B(|q|,\sigma\cdot \hat{q})=b(\sigma\cdot \hat{q})$ in (\ref{CO})). Although the BKW solution (with cutoff Maxwell kernels) has been widely used to validate the deterministic numerical solvers for the Boltzmann equation, it is not well recognized that the solution is also valid for non-cutoff kernels, hence is an ideal candidate to test the accuracy of the proposed method. For this reason, we briefly describe the construction of the solution in this subsection.

The BKW solution is an isotropic function of the form:
\begin{equation} \label{BKW}
f(t,v)=\frac{1}{(2\pi \mathcal{K})^{d/2}}\exp \left ( -\frac{|v|^2}{2\mathcal{K}}\right)\left(\frac{(d+2)\mathcal{K}-d}{2\mathcal{K}}+\frac{1-\mathcal{K}}{2\mathcal{K}^2}|v|^2\right).
\end{equation}
In order for (\ref{BKW}) to be a solution of (\ref{eqn}), it can be verified by direct substitution that $\mathcal{K}=\mathcal{K}(t)$ must satisfy
\begin{equation} \label{odeK}
\mathcal{K}'=\lambda(1-\mathcal{K}),
\end{equation}
with 
\begin{equation}
\lambda=\frac{1}{4}\int_{S^{d-1}}\left(1-(\sigma\cdot \hat{q})^2\right)b(\sigma\cdot \hat{q})\,\rd{\sigma},
\end{equation}
which indicates
\begin{equation}
\mathcal{K}=1-C\exp(-\lambda t).
\end{equation}
Differentiating (\ref{BKW}) and using (\ref{odeK}), we obtain
\begin{equation} \label{QQ}
\mathcal{Q}(f,f)=\partial_tf=\frac{1}{(2\pi \mathcal{K})^{d/2}}\exp \left ( -\frac{|v|^2}{2\mathcal{K}}\right) \frac{(1-\mathcal{K})^2}{4\mathcal{K}^4}\lambda \left[ d(d+2)\mathcal{K}^2-2(d+2)\mathcal{K}|v|^2+|v|^4\right].
\end{equation}

In 2D, we can choose benchmark values 
\begin{equation}
C=\frac{1}{2}, \quad b(\sigma\cdot \hat{q})\equiv \frac{1}{2\pi},
\end{equation}
which leads to
\begin{equation} \label{KK2}
\lambda=\frac{1}{8}, \quad \mathcal{K}=1-\frac{1}{2}\exp\left(-\frac{t}{8}\right).
\end{equation}
Based on these values, we can construct several non-cutoff kernels $b(\sigma\cdot \hat{q})$ with different degree of singularity but all correspond to the same value of $\lambda$, hence the same shape of the solution.

In 3D, we can choose benchmark values
\begin{equation}
C=1, \quad b(\sigma\cdot \hat{q})\equiv \frac{1}{4\pi},
\end{equation}
which leads to
\begin{equation} \label{KK3}
\lambda=\frac{1}{6}, \quad \mathcal{K}=1-\exp\left(-\frac{t}{6}\right).
\end{equation}
Similarly to 2D, we can construct several non-cutoff kernels $b(\sigma\cdot \hat{q})$ that all correspond to the same $\lambda$ and same solution.

%%%%%%%%%%%%%%%%%%%%%%%%%%%%%%%%%%%%%%%%%%%%%%%

\subsection{2D BKW solution -- Maxwell molecule}
\label{sec:2DBKW}

Based on the discussion in Section~\ref{sec:BKW}, we construct the following collision kernels which all correspond to the same $\lambda$ and $\mathcal{K}$ as given in (\ref{KK2}).
\begin{itemize}
\item Cutoff kernel $b_1$:
\begin{equation}\label{b1}
b_{1}(\sigma\cdot \q)=b_1(\cos\theta)=\frac{1}{2\pi}, \quad \theta \in [0,2\pi] \quad \text{with} \quad \int_{S^{1}}b_{1}\,\rd\sigma=1.
\end{equation}
\item Non-cutoff kernel $b_2$:
\begin{equation}\label{b2}
b_{2}(\sigma\cdot \q)=b_{2}(\cos\theta)=\frac{3}{32\sin\frac{\theta}{2}}, \quad \theta \in [0,2\pi] \quad \text{with} \quad \int_{S^{1}}b_{2}\,\rd\sigma=+\infty.
\end{equation}
The order of singularity of $b_{2}$ is $ b_{2} \big|_{\theta\rightarrow 0}\sim\theta^{-1-\nu}$ with $\nu=0$. 
\item Non-cutoff kernel $b_3$:
\begin{equation}\label{b3}
b_{3}(\sigma\cdot \q)=b_{3}(\cos\theta) = \frac{1}{8\pi\sin^{2}\frac{\theta}{2}}, \quad \theta \in [0,2\pi]\quad \text{with} \quad \int_{S^{1}}b_{3}\,\rd\sigma=+\infty.
\end{equation}
The order of singularity of $b_{3}$ is $ b_{3}\big|_{\theta\rightarrow 0}\sim\theta^{-1-\nu}$ with $\nu=1$. 
\item  Non-cutoff kernel $b_4$:
\begin{equation}\label{b4}
b_{4}(\sigma\cdot \q)=b_{4}(\cos\theta)=\frac{5|\cos\frac{\theta}{2}|}{256\sin^{\frac{5}{2}}\frac{\theta}{2}}, \quad \theta \in [0,2\pi] \quad \text{with} \quad \int_{S^{1}}b_{4}\,\rd\sigma=+\infty.
\end{equation}
The order of singularity of $b_{4}$ is $ b_{4}\big|_{\theta\rightarrow 0} \sim\theta^{-1-\nu}$ with $\nu=\frac{3}{2}$. 
\end{itemize}

For all the above four kernels, they yield the same solution (\ref{BKW}). Without introducing any time discretization error, we verify the accuracy of our method by evaluating (\ref{QQ}) at certain time. The results are reported in Table~\ref{table1}, which demonstrate that the Fourier spectral method in the non-cutoff case works equally well as the cutoff case. Note that for the cutoff kernel $b_1$, one can just use the analytical formula (\ref{cutoff2D}) to get $F(k,|q|,\hat{q})$. This, on the other hand, indicates that our strategy of precomputing the weight $F(k,|q|,\hat{q})$ is reliable. 

\begin{table}[h!]
	\centering
	\begin{tabular}{c | c | c | c | c}\hline
		$N$ & $b_{1}$ & $b_{2}$ & $b_{3}$ & $b_{4}$\\
		\hline
		8 & 1.8411e-02 & 1.8612e-02 & 1.9054e-02 & 1.9569e-02\\
		16 & 1.0692e-03 & 1.0806e-03 & 2.1531e-03 & 3.9562e-03\\
		32 & 1.4704e-07 & 1.3363e-07 & 1.0620e-07 & 3.0431e-07\\
		64 & 2.8322e-09 & 2.9002e-09 & 3.1950e-09 & 4.4349e-09\\
		\hline
	\end{tabular}
	\caption{ Section~\ref{sec:2DBKW}: 2D BKW solution -- Maxwell molecule. $\|\mQ^{\text{ext}}(f,f)-\mQ^{\text{num}}(f,f)\|_{L^{\infty}}$ at $t=0$. $N$ is the number of points in each velocity dimension. $N_{\aq}=N$ is the number of points used in the radial direction (with Gauss-Legendre quadrature). $N_{\hat{q}}=32$ is the number of points used in the angular direction (with mid-point quadrature). $R=6$, $L=(3+\sqrt{2})R/4\approx6.62$.}
\label{table1}
\end{table}

To examine the error evolution in time, we next use our method to solve the homogeneous Boltzmann equation. The classical fourth-order Runge-Kutta method is employed for time discretization to ensure that the temporal error does not pollute the spectral accuracy in velocity. The result is shown in Figure~\ref{fig1}, where there is no significant difference among four kernels.
\begin{figure}[htp]
	\centering
	\includegraphics[width = \linewidth]{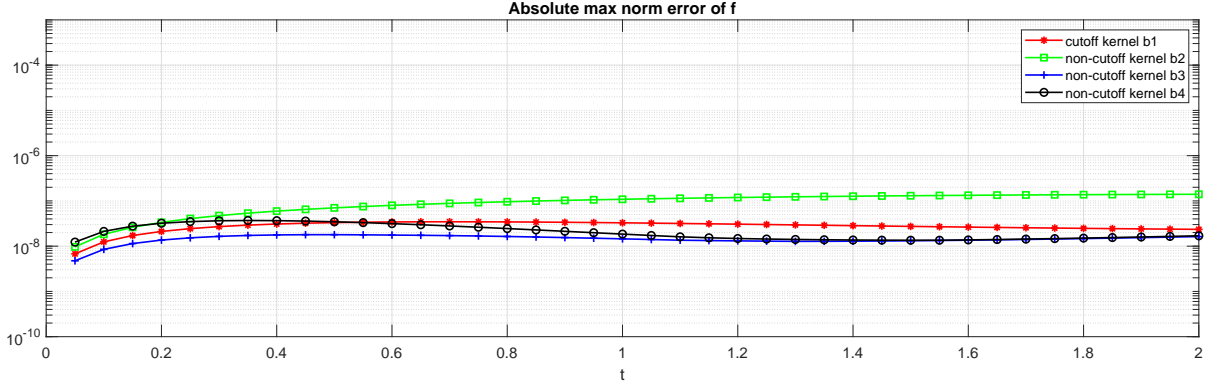}	
	\caption{ Section~\ref{sec:2DBKW}: 2D BKW solution -- Maxwell molecule. Time evolution of $\|f^{\text{ext}}-f^{\text{num}}\|_{L^\infty}$. Classical RK4 with $\Delta t=0.05$ for time discretization. $N=N_{|q|}=N_{\hat{q}}=32$. $R=6$, $L=(3+\sqrt{2})R/4\approx6.62$.}
	\label{fig1}
\end{figure}

\subsection{3D BKW solution -- Maxwell molecule}
\label{3D}

Based on the discussion in Section~\ref{sec:BKW}, we construct the following collision kernels which all correspond to the same $\lambda$ and $K$ as given in (\ref{KK3}).
\begin{itemize}
\item Cutoff kernel $b_5$:
\begin{equation}\label{b5}
b_5(\sigma\cdot \q) =b_{5}(\cos\theta)=\frac{1}{4\pi}, \quad \theta \in [0,\pi] \quad \text{with}\quad \int_{S^{2}}b_{5}\,\rd\sigma=1.
\end{equation}
\item Non-cutoff kernel $b_6$:
\begin{equation}\label{b6}
b_6(\sigma\cdot \q) = b_{6}(\cos\theta)=\frac{1}{8\pi\sin\theta\sin\frac{\theta}{2}}, \quad \theta \in [0,\pi]\quad \text{with} \quad \int_{S^{2}}b_{6}\,\rd\sigma=+\infty.
\end{equation}
The order of singularity of $b_{6}$ is $\sin\theta b_{6}\big|_{\theta\rightarrow 0}\sim\theta^{-1-\nu}$ with $\nu=0$. 
\item Non-cutoff kernel $b_7$:
\begin{equation}\label{b7}
b_7(\sigma\cdot \q) =b_{7}(\cos\theta)=\frac{1}{6\pi^2\sin\theta\sin^{2}\frac{\theta}{2}},\quad \theta \in [0,\pi] \quad \text{with} \quad \int_{S^{2}}b_{7}\,\rd\sigma=+\infty.
\end{equation}
The order of singularity of $b_{7}$ is $\sin\theta b_{7}\big|_{\theta\rightarrow 0}\sim\theta^{-1-\nu}$ with $\nu=1$. 
\item Non-cutoff kernel $b_8$:
\begin{equation}\label{b8}
b_8(\sigma\cdot \q) =b_{8}(\cos\theta)=\frac{5\cos\frac{\theta}{2}}{192\pi\sin\theta \sin^{\frac{5}{2}}\frac{\theta}{2}},\quad \theta\in [0,\pi]\quad \text{with} \quad \int_{S^{2}}b_{8}\,\rd\sigma=+\infty.
\end{equation}
The order of singularity of $b_{8}$ is $\sin\theta b_{8}\big|_{\theta\rightarrow 0}\sim\theta^{-1-\nu}$ with $\nu=\frac{3}{2}$. 
\end{itemize}

We now perform a similar test as in 2D with the above four kernels. The results are reported in Table~\ref{table2}. Since the integration on the sphere is harder than that over the circle, our focus here is to demonstrate the convergence with respect to the spherical quadrature. Note that for the cutoff kernel $b_5$, one can just use the analytical formula (\ref{cutoff3D}) to get $F(k,|q|,\hat{q})$. Again we can see that the method can achieve the same level of accuracy for both the cutoff and non-cutoff kernels. 

\begin{table}[h!]\centering
		\begin{tabular}{c | c | c | c | c }\hline
		$N_{\hat{q}}$ & $b_{5}$ & $b_{6}$ & $b_{7}$ & $b_{8}$\\
		\hline
		12 & 4.1224e-04 & 5.5098e-04 & 1.5792e-03   & 3.4256e-03 \\
		48 & 5.7277e-05 & 9.1134e-05 & 1.7077e-04   & 2.8334e-04\\
		70 & 1.1213e-05 & 1.9541e-05 & 4.5150e-05   & 8.8087e-05\\
		120 & 9.7623e-07 & 1.5891e-06 & 4.6049e-06  & 1.0831e-05\\
		192 & 5.6276e-07 & 4.2911e-07 & 3.4111e-07  & 5.5735e-07\\
		\hline
	\end{tabular}
	\caption{ Section~\ref{3D}: 3D BKW solution -- Maxwell molecule. $\|\mQ^{\text{ext}}(f)-\mQ^{\text{num}}(f)\|_{L^{\infty}}$ at $t=6.5$. $N=32$ is the number of points in each velocity dimension. $N_{\aq}=32$ is the number of points used in the radial direction (with Gauss-Legendre quadrature). $N_{\hat{q}}$ is the number of points used in the sphere (with Spherical Design quadrature). $R=6$, $L=(3+\sqrt{2})R/4\approx6.62$.} 
	\label{table2} 
\end{table}

%%%%%%%%%%%%%%%%%%%%%%%%%%%%%%%%%%%%%%

\subsection{Measure valued solution in 2D -- Maxwell molecule}
\label{2measure}

We now perform a series of numerical tests for the (approximate) measure valued solutions with time evolution using different collision kernels. The existence of measure valued solutions has been established in \cite{TV1999,morimoto2016measure}. Furthermore, it is known that the solution to the non-cutoff equation enjoys the smoothing effect if the initial datum is not a single Dirac delta function. Though theoretical regularity is hard to justify under numerical discretization, one can expect quite different behavior for different kernels.

We consider an initial condition of the form:
\begin{equation}\label{initialdelta}
f^0(v) = \frac{1}{3} \left( \delta_{w}(v) + \delta_{w}(|v|-0.2)\right),
\end{equation}
where $\delta_{w}(v)$ is an approximated delta function given as follows:
\begin{equation}\label{delta}
\delta_{w}(v)=
\begin{cases}
\frac{1}{2w}\left(1+\cos|\frac{\pi v}{w}|\right), \quad & |v|\leq w,\\
0, & \text{otherwise},
\end{cases}
\end{equation} 
and $w$ is taken to be $0.5\sqrt{\Delta v}$ ($\Delta v$ is the mesh size in velocity).

We first take the 2D non-cutoff kernel $b_3$ (\ref{b3}) as an example to illustrate the time evolution of the solution, see Figure~\ref{fig2} where the trend to Gaussian equilibrium is clear.

\begin{figure}[htp]
	\centering
	\subfigure[t=1]{
		\includegraphics[width=7cm]{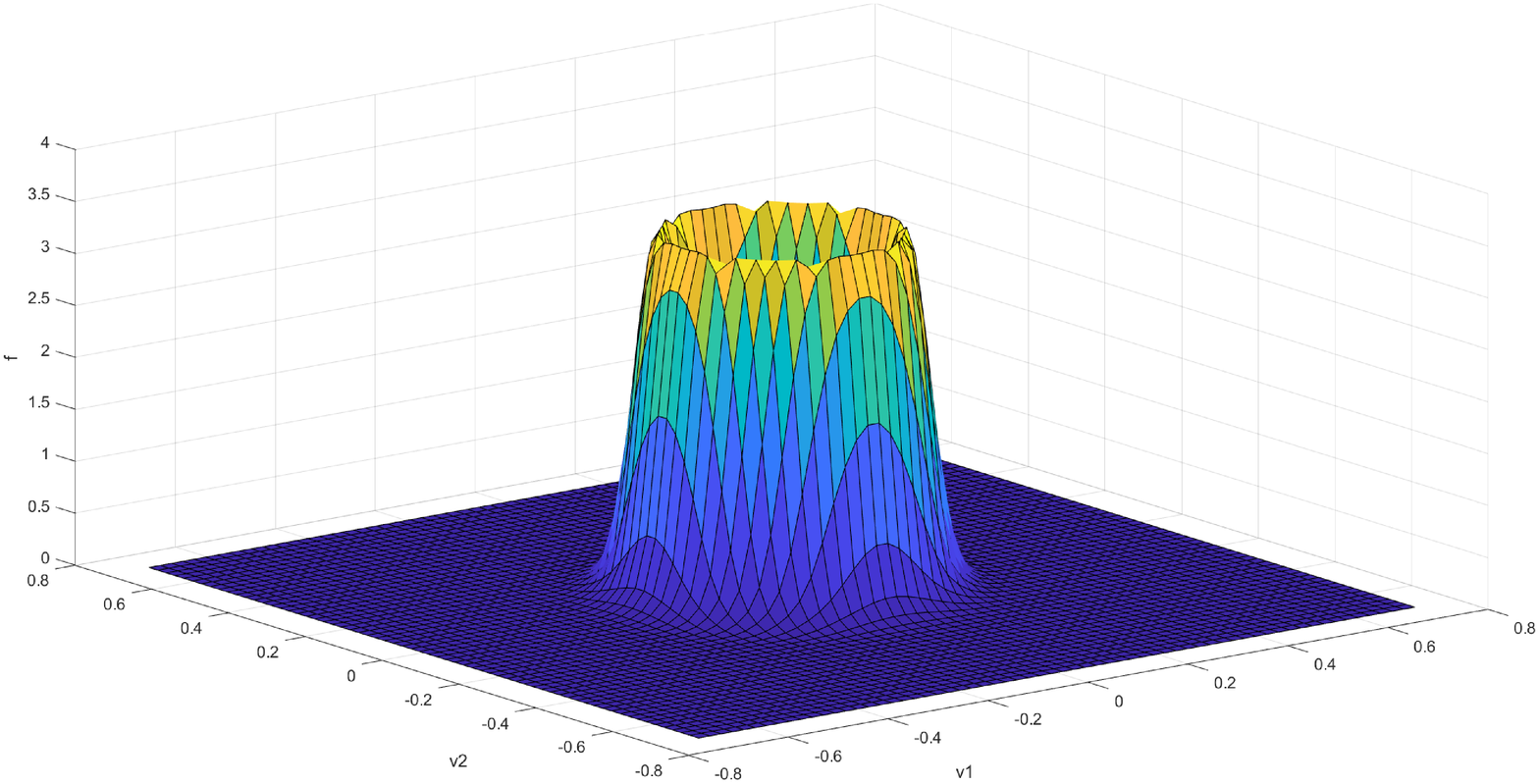}
	}
	\subfigure[t=2]{
		\includegraphics[width=7cm]{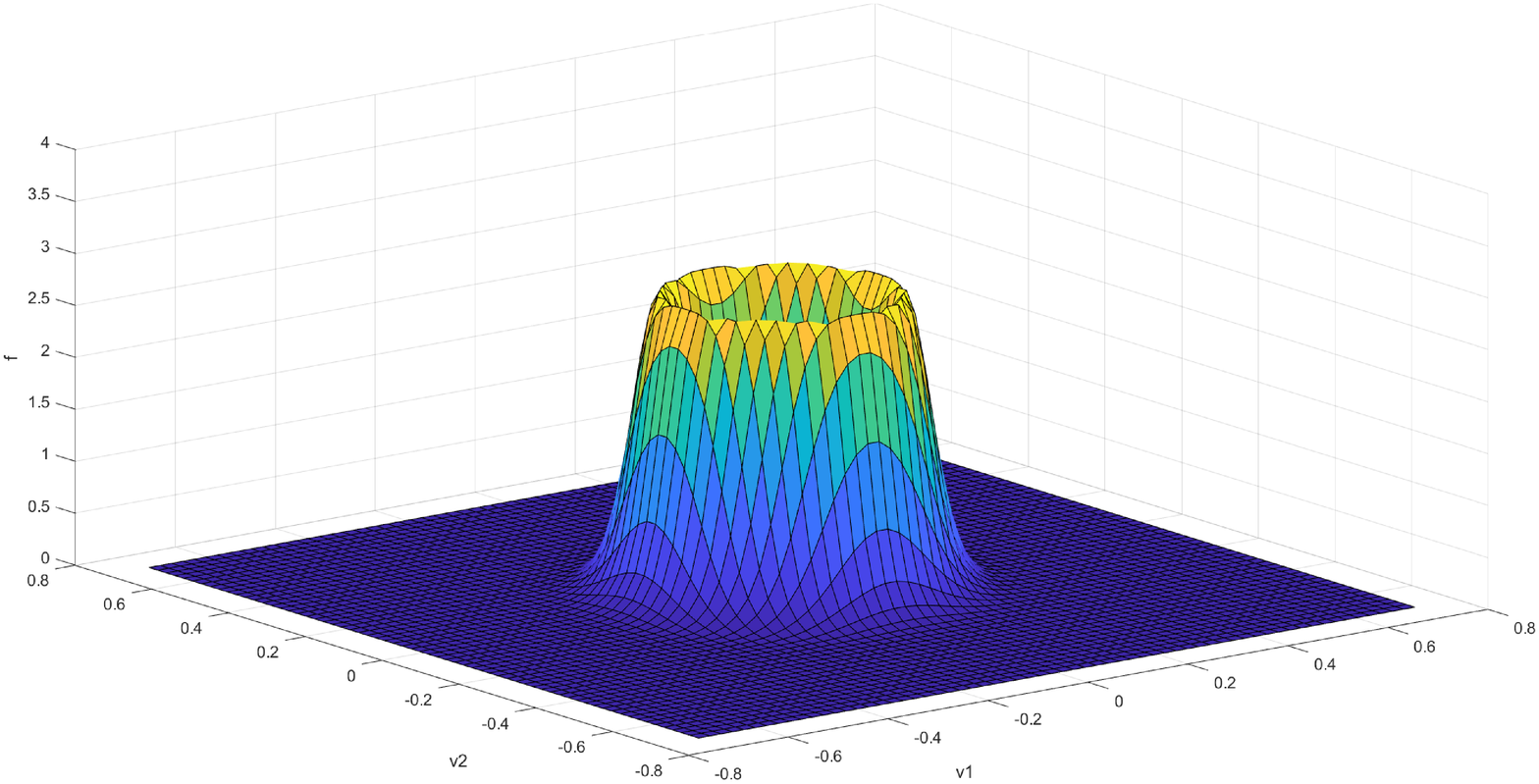}
	}\\
	\subfigure[t=4]{
		\includegraphics[width=7cm]{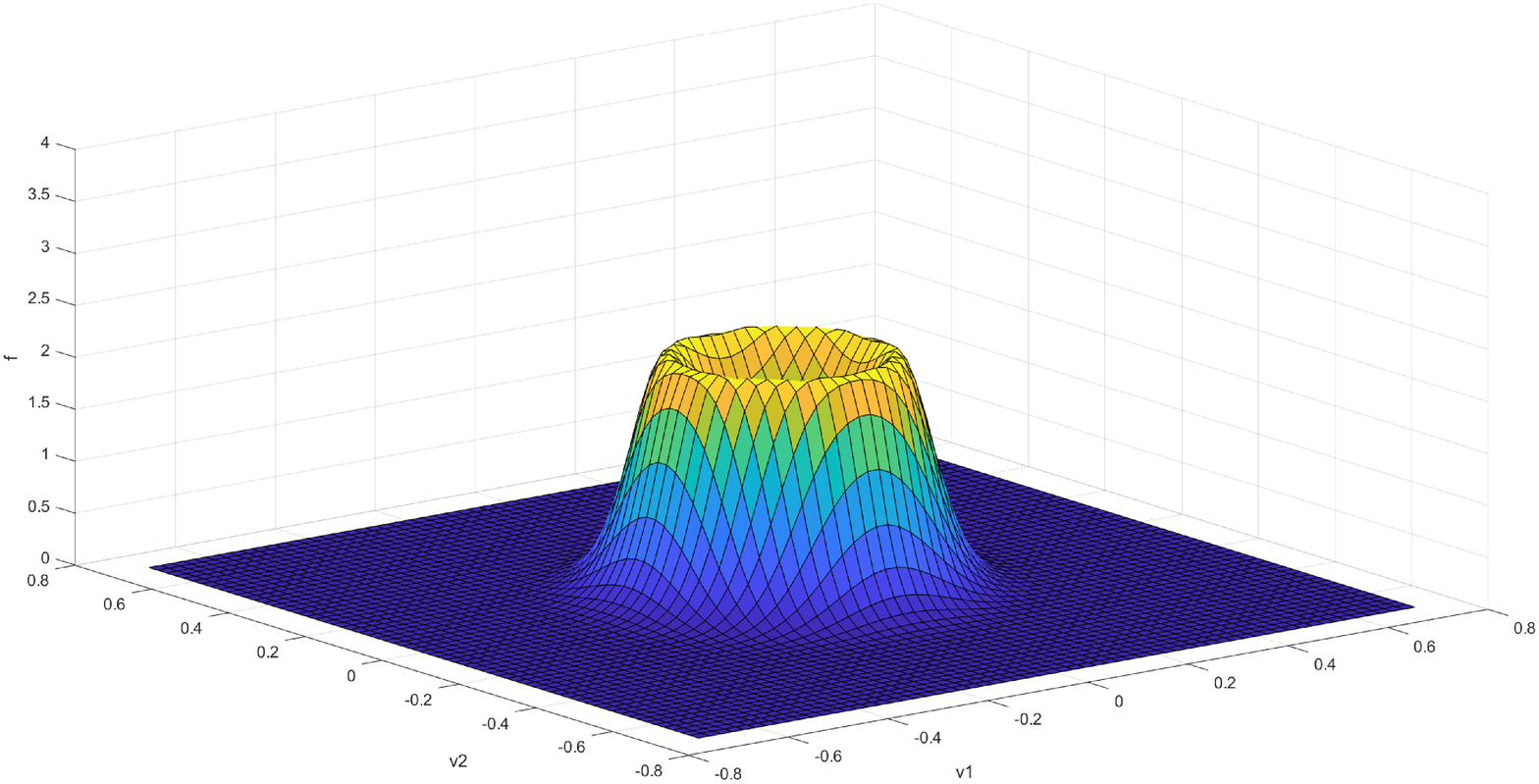}
	}
	\subfigure[t=6]{
		\includegraphics[width=7cm]{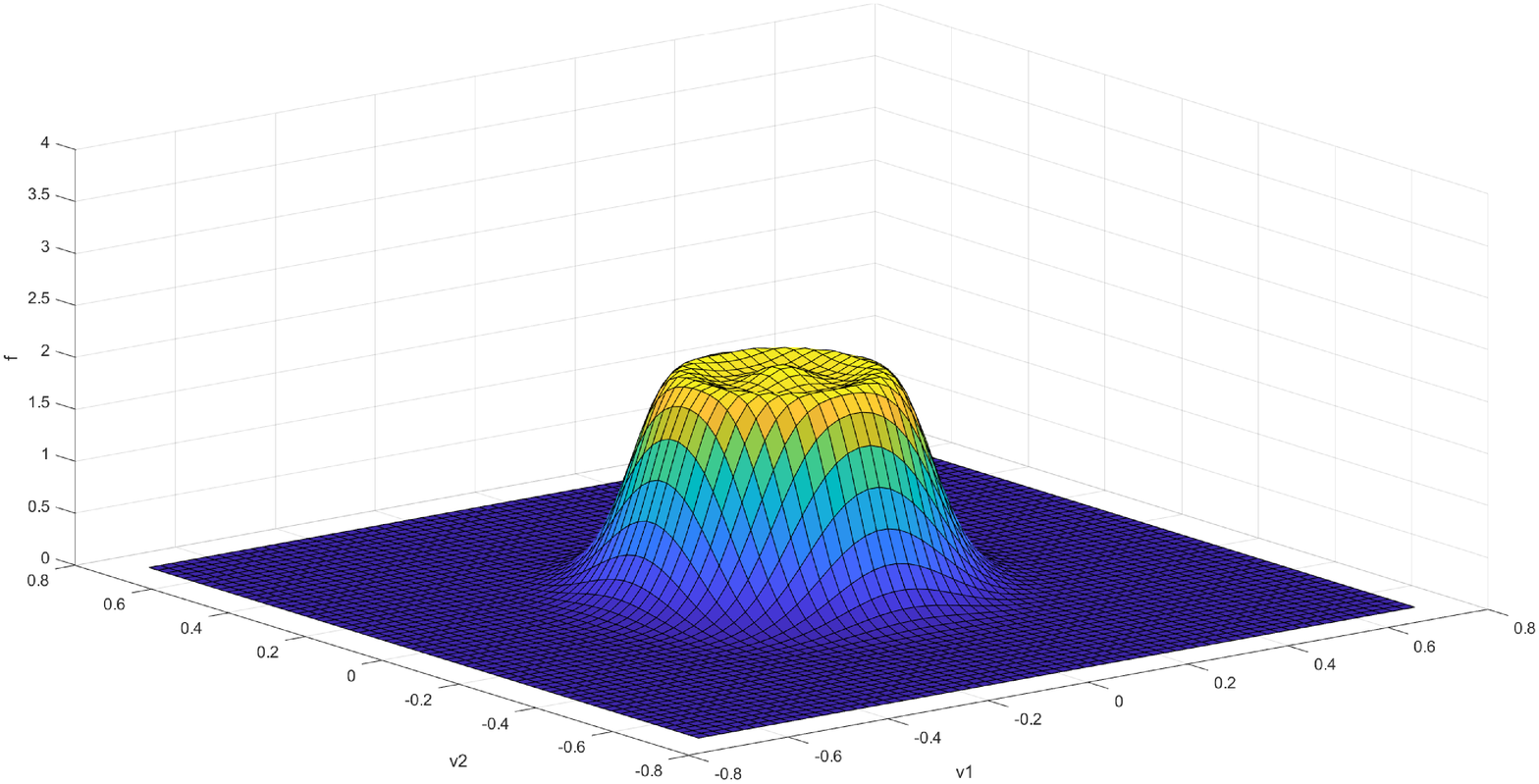}
	}\\
	\subfigure[t=8]{
		\includegraphics[width=7cm]{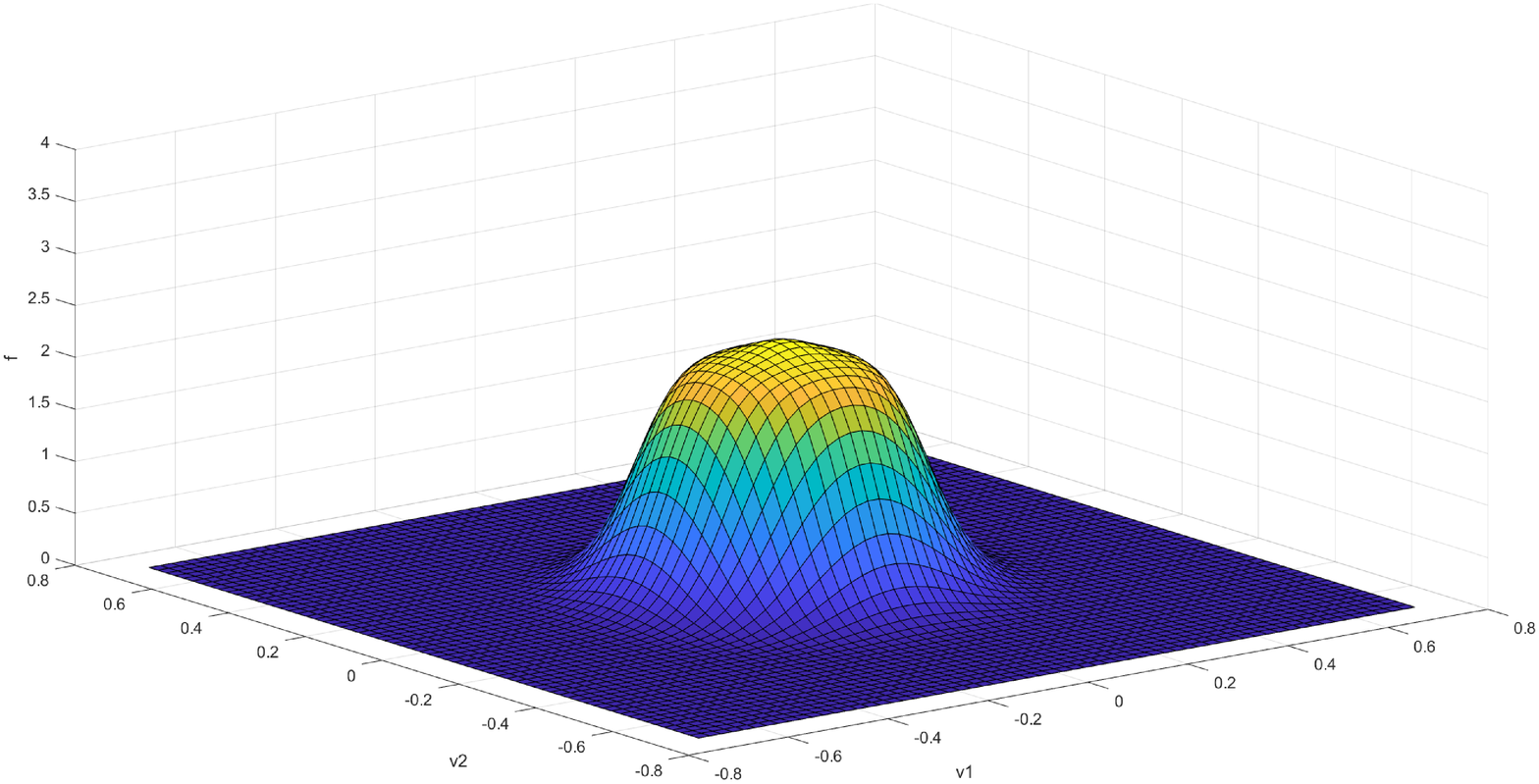}
	}
	\subfigure[t=10]{
		\includegraphics[width=7cm]{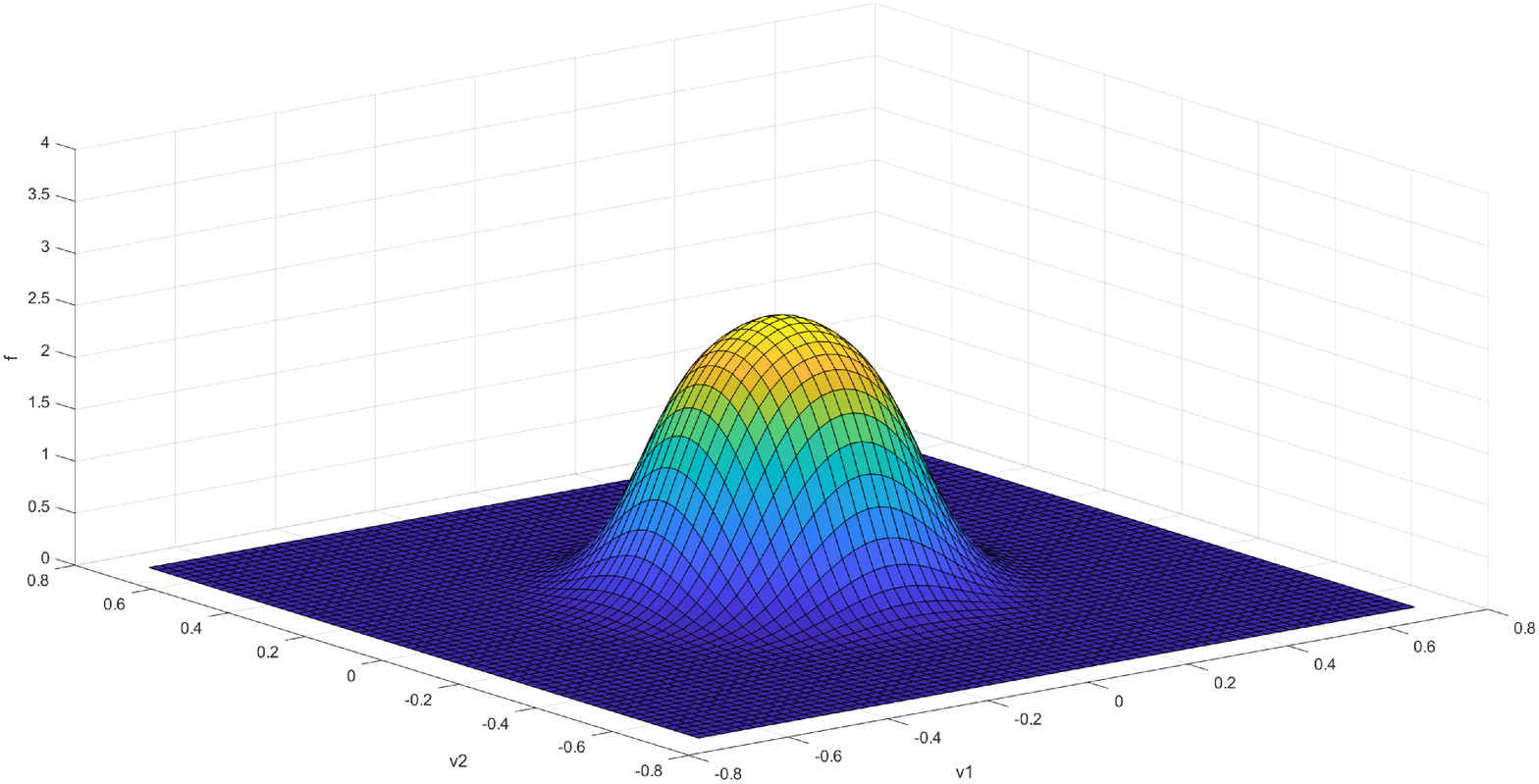}
	}
	\caption{ Section~\ref{2measure}: Measure valued solution in 2D -- Maxwell molecule. Time evolution of the distribution function $f$ with non-cutoff collision kernel $b_{3}$ and initial condition (\ref{initialdelta}). Classical RK4 with $\Delta t=0.05$ for time discretization. $N = N_{\aq} = 64$, $N_{\q} = 32$. $R=0.66$, $L=(3+\sqrt{2})R/4\approx0.73$.}
	\label{fig2}
\end{figure}

We then compare the solution profiles computed with four different kernels $b_1$ (\ref{b1}), $b_2$ (\ref{b2}), $b_3$ (\ref{b3}), and $b_4$ (\ref{b4}). The results are shown in Figure~\ref{fig3}. We can observe that although all solutions converge to the same equilibrium in the end, the non-cutoff solutions tend to be smoothed out faster compared to the cutoff one, and the higher the singularity is in the kernel, the smoother the solution behaves. This is quite striking and is the first time such differences between the cutoff and non-cutoff Boltzmann solutions are reported in the literature, as far as we know.

\begin{figure}[htp]
	\centering
	\subfigure[t=1]{
		\includegraphics[width=6.5cm]{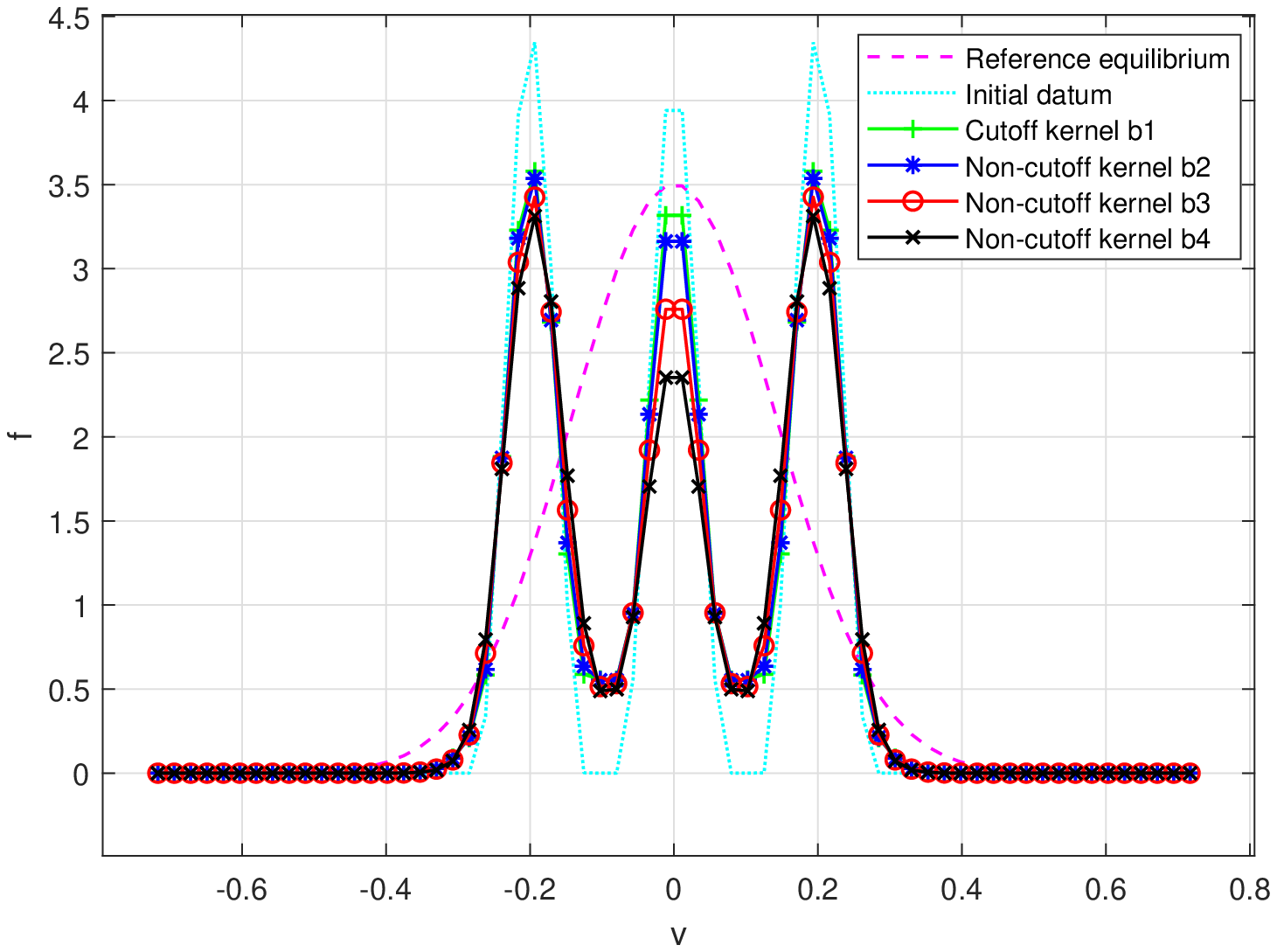}
	}
	\subfigure[t=3]{
		\includegraphics[width=6.5cm]{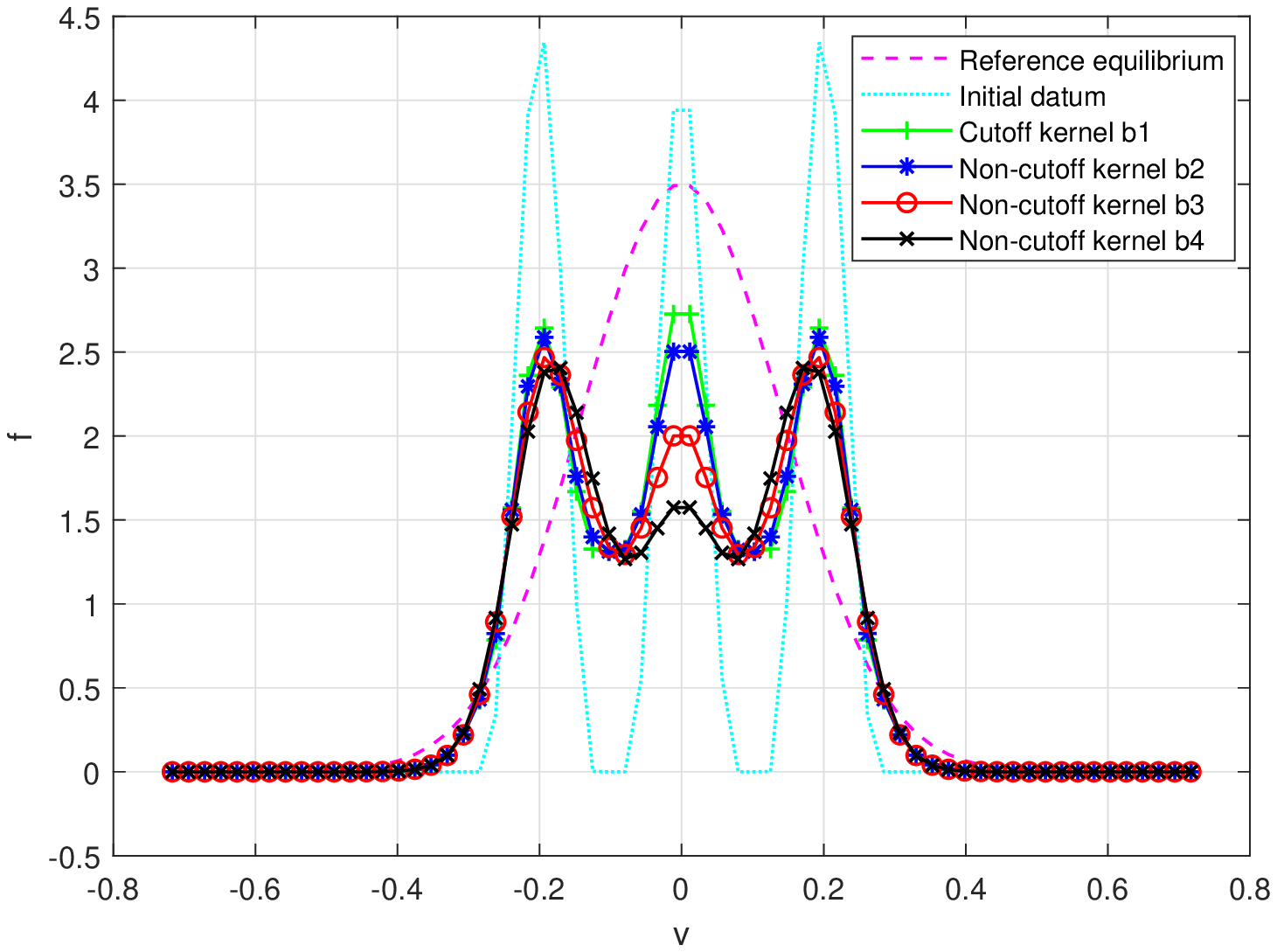}
	}\\
	\subfigure[t=6]{
		\includegraphics[width=6.5cm]{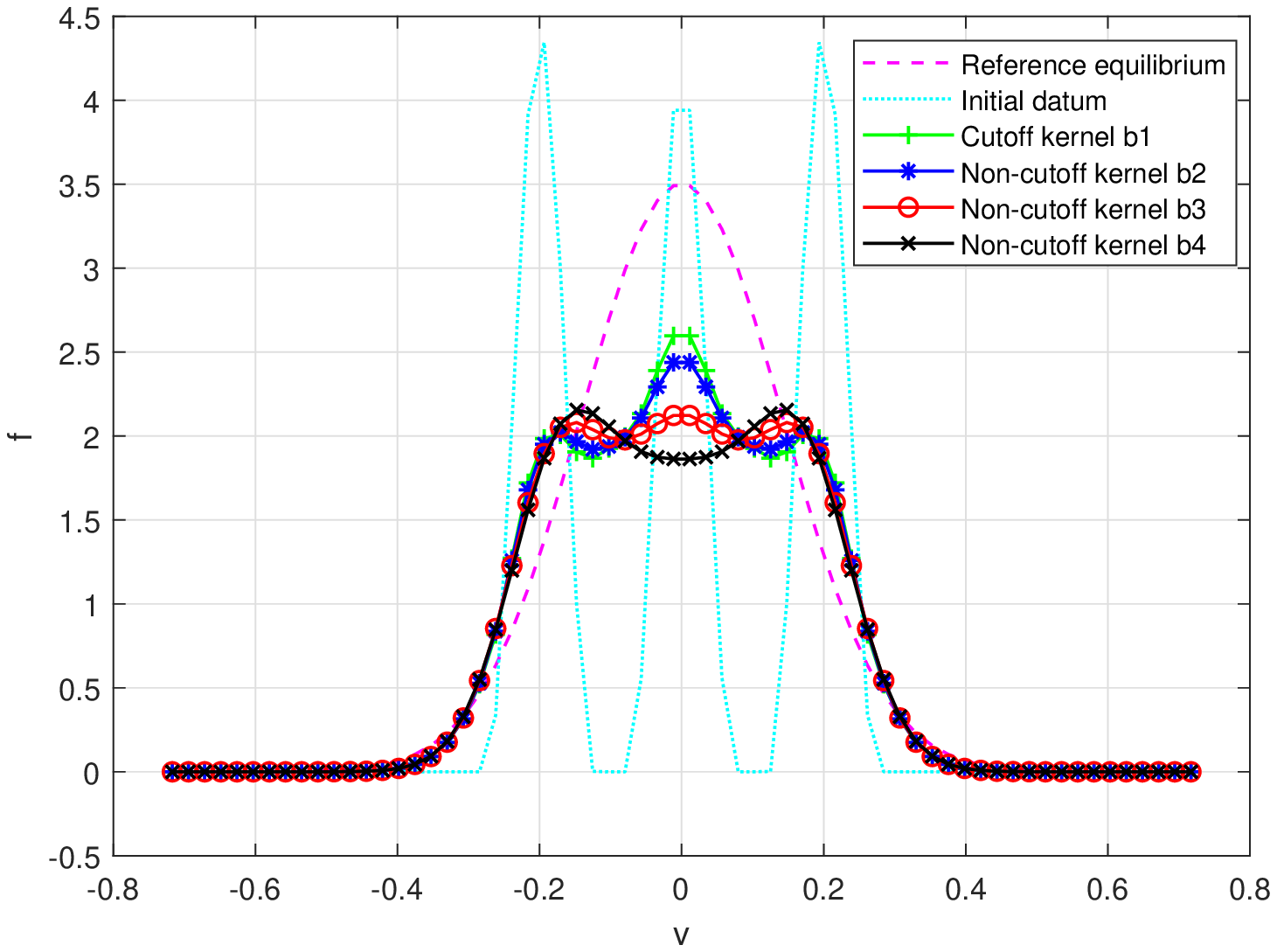}
	}
	\subfigure[t=9]{
		\includegraphics[width=6.5cm]{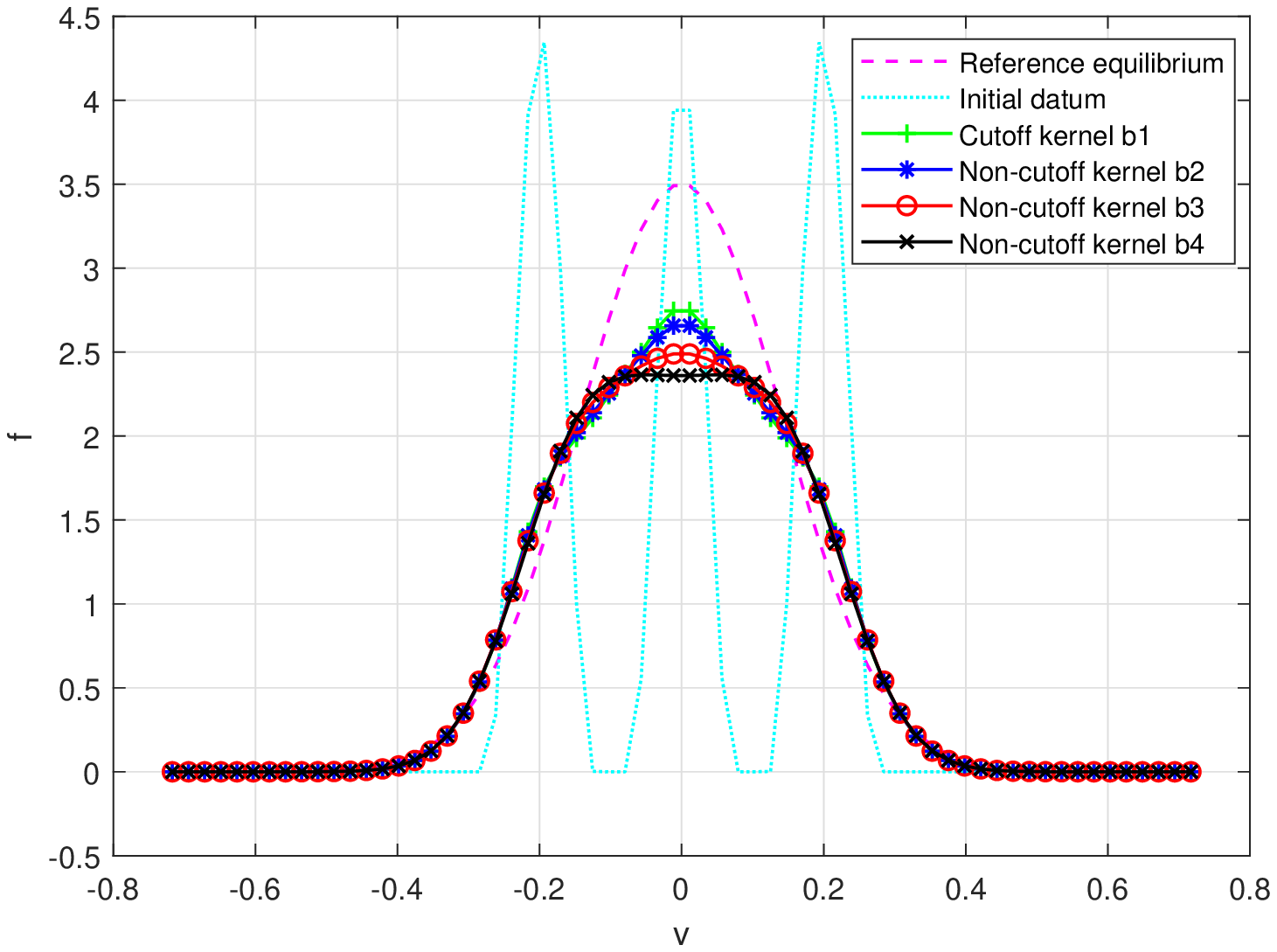}
	}\\
	\subfigure[t=12]{
		\includegraphics[width=6.5cm]{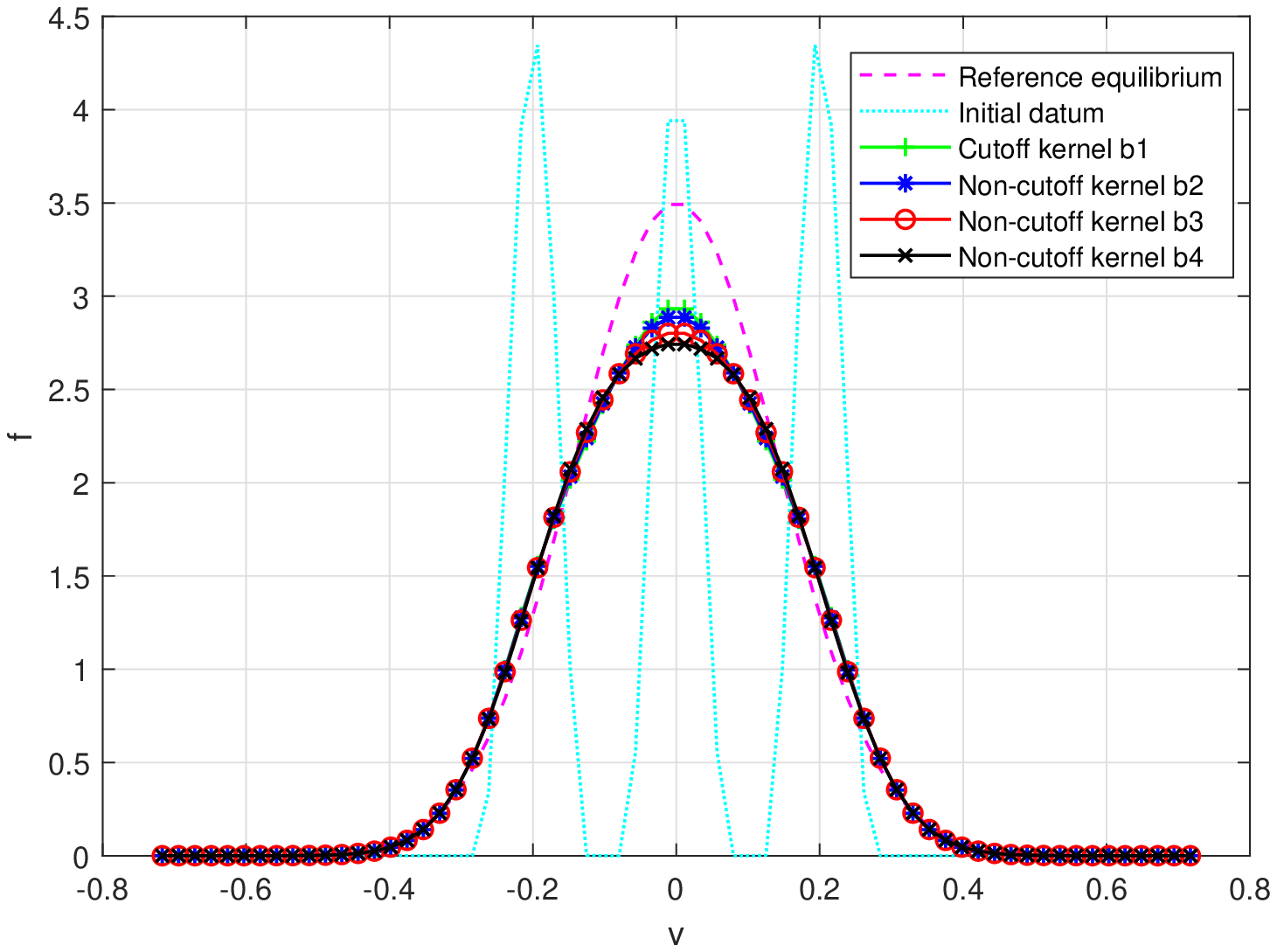}
	}
	\subfigure[t=15]{
		\includegraphics[width=6.5cm]{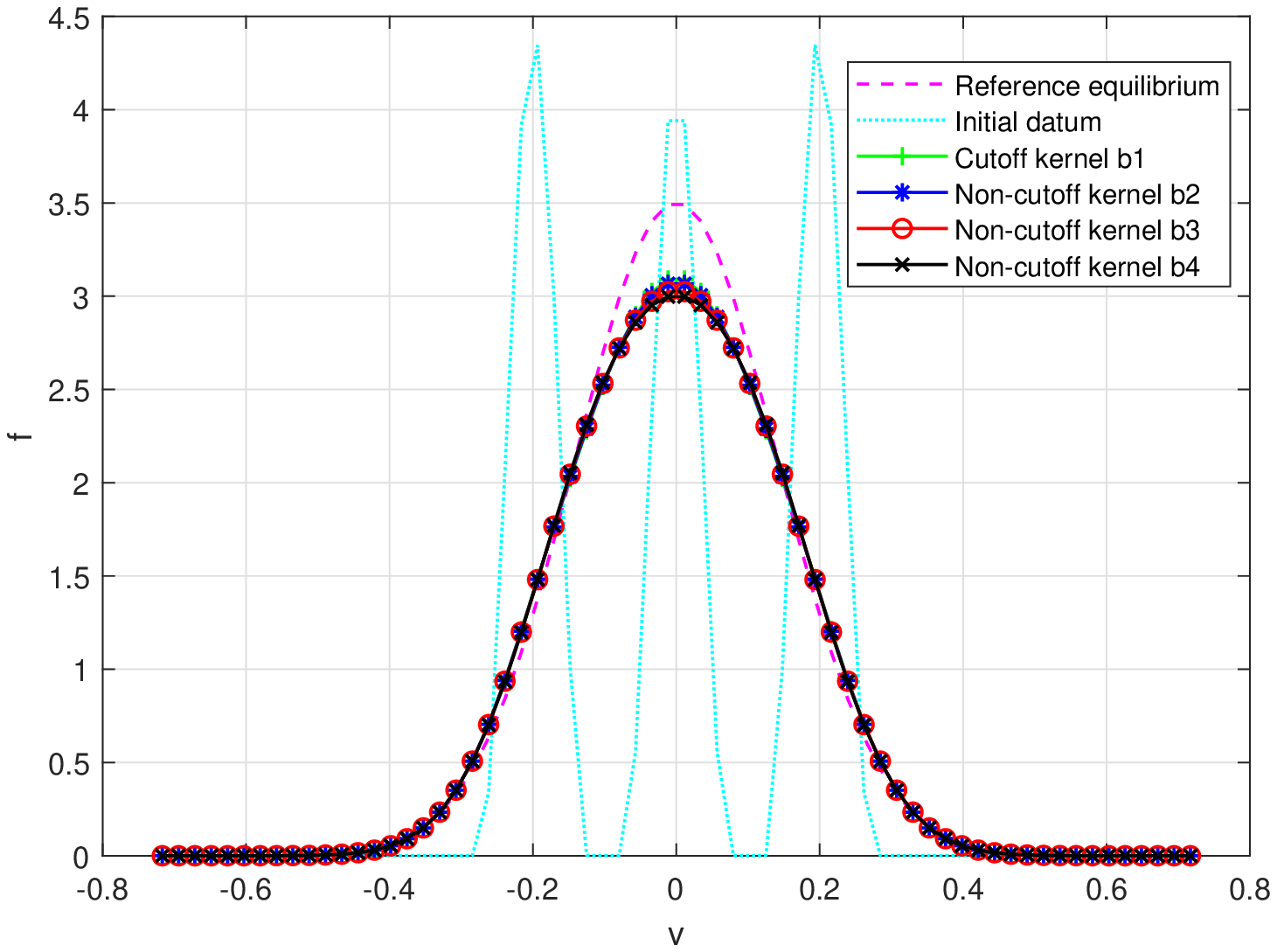}
	}\\
	\subfigure[t=20]{
		\includegraphics[width=6.5cm]{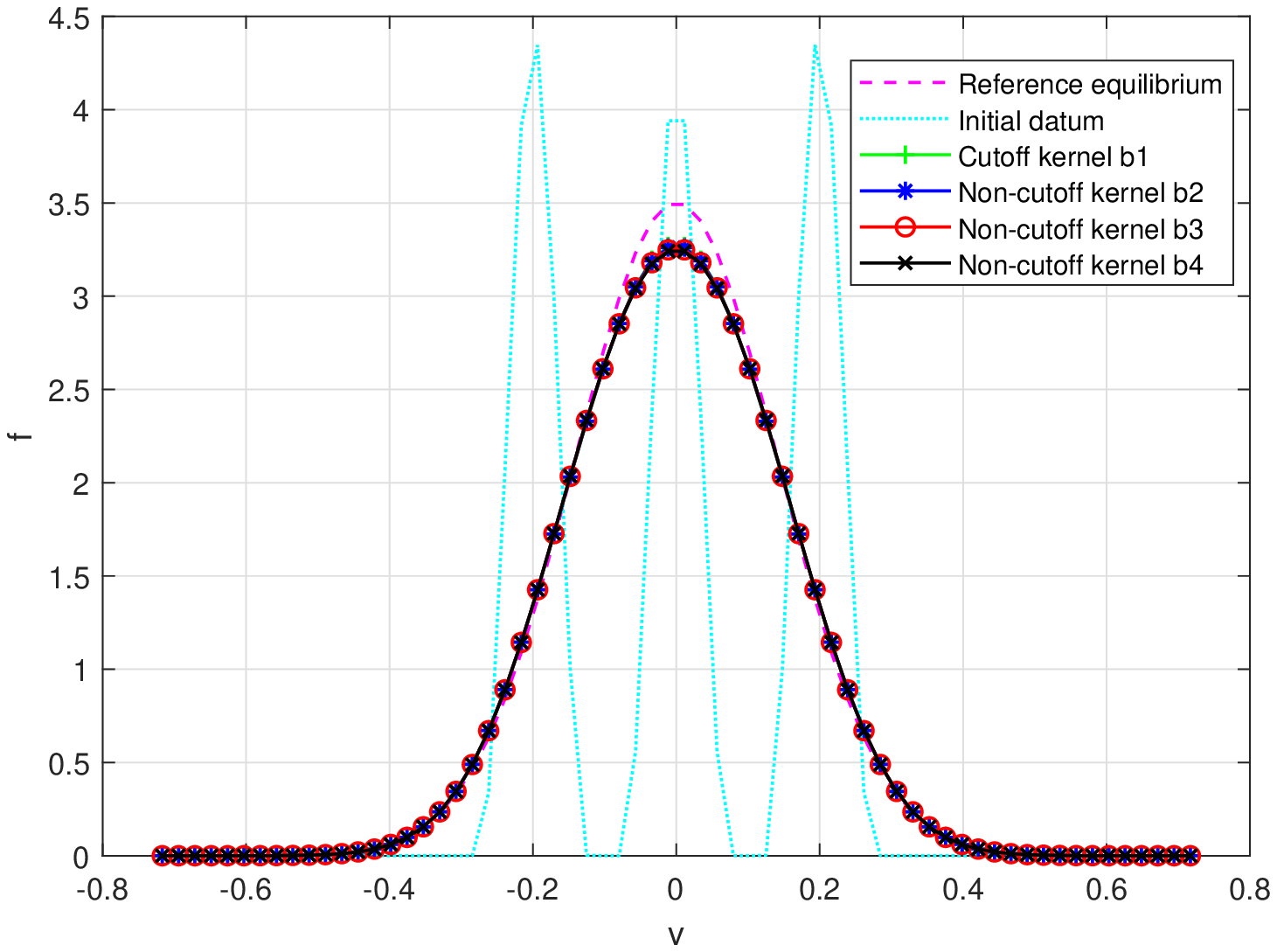}
	}
	\subfigure[t=30]{
		\includegraphics[width=6.5cm]{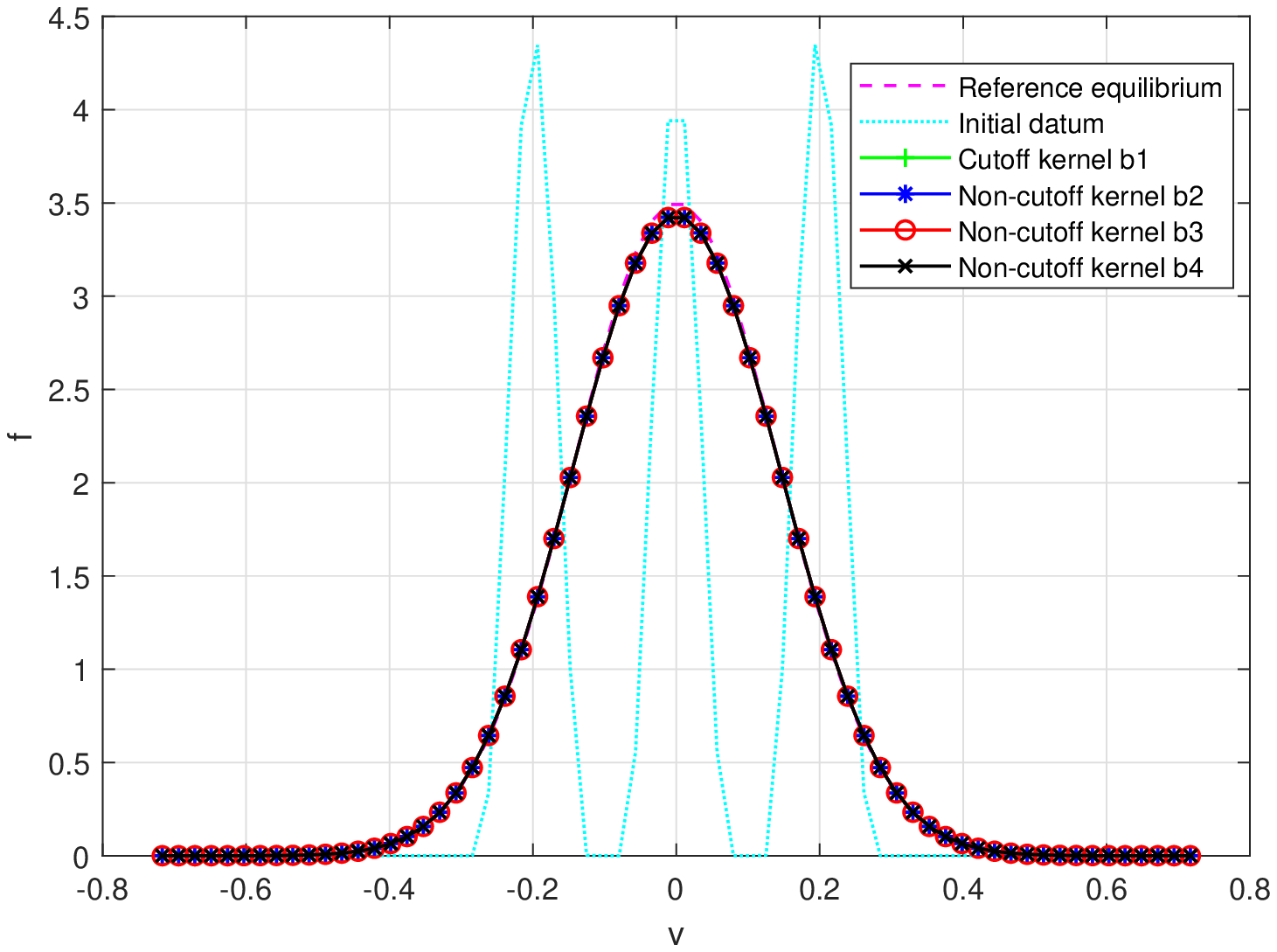}
	}
	\caption{ Section~\ref{2measure}: Measure valued solution in 2D -- Maxwell molecule. Time evolution of the distribution function $f$ (a slice of the solution along $v_1$ with $v_2=0$) computed with cutoff kernel $b_1$ and non-cutoff kernels $b_{2}$, $b_{3}$ and $b_{4}$. Initial condition given by (\ref{initialdelta}). Classical RK4 with $ \Delta t = 0.05 $ for time discretization. $N=N_{|q|}=64 $, $N_{\q}=32$. $R=0.66$, $L=(3+\sqrt{2})R/4\approx0.73$.}
	\label{fig3}
\end{figure}

\subsection{Measure valued solution in 3D -- Maxwell molecule}
\label{3measure}

We now perform a similar test as the last subsection using four 3D kernels $b_5$ (\ref{b5}), $b_6$ (\ref{b6}), $b_7$ (\ref{b7}), and $b_8$ (\ref{b8}). The results are gathered in Figure~\ref{fig4}, where similar behavior as in 2D is observed. 

\begin{figure}[htp]
	\centering
	\subfigure[t=1]{
		\includegraphics[width=6.5cm]{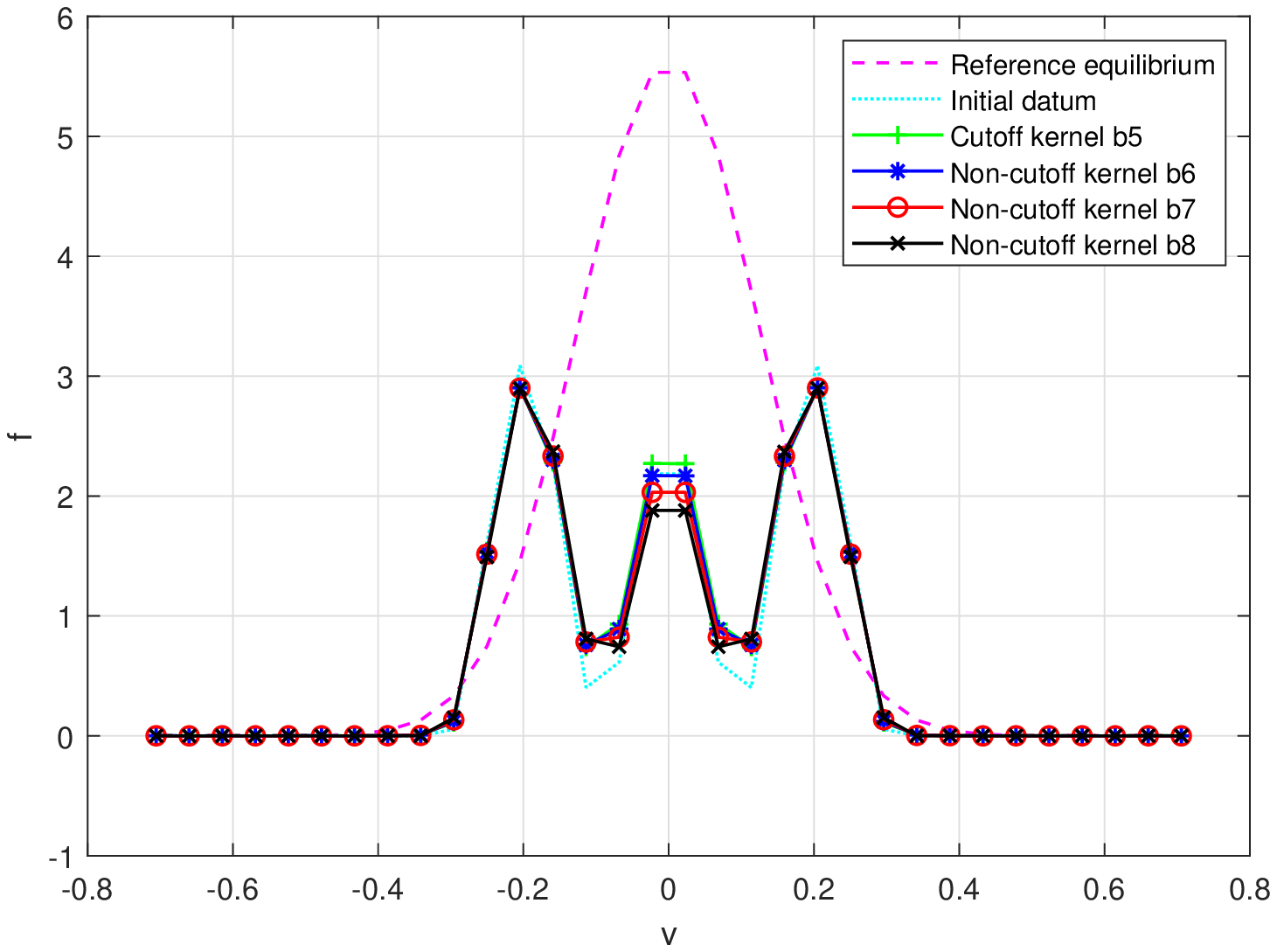}
	}
	\subfigure[t=3]{
		\includegraphics[width=6.5cm]{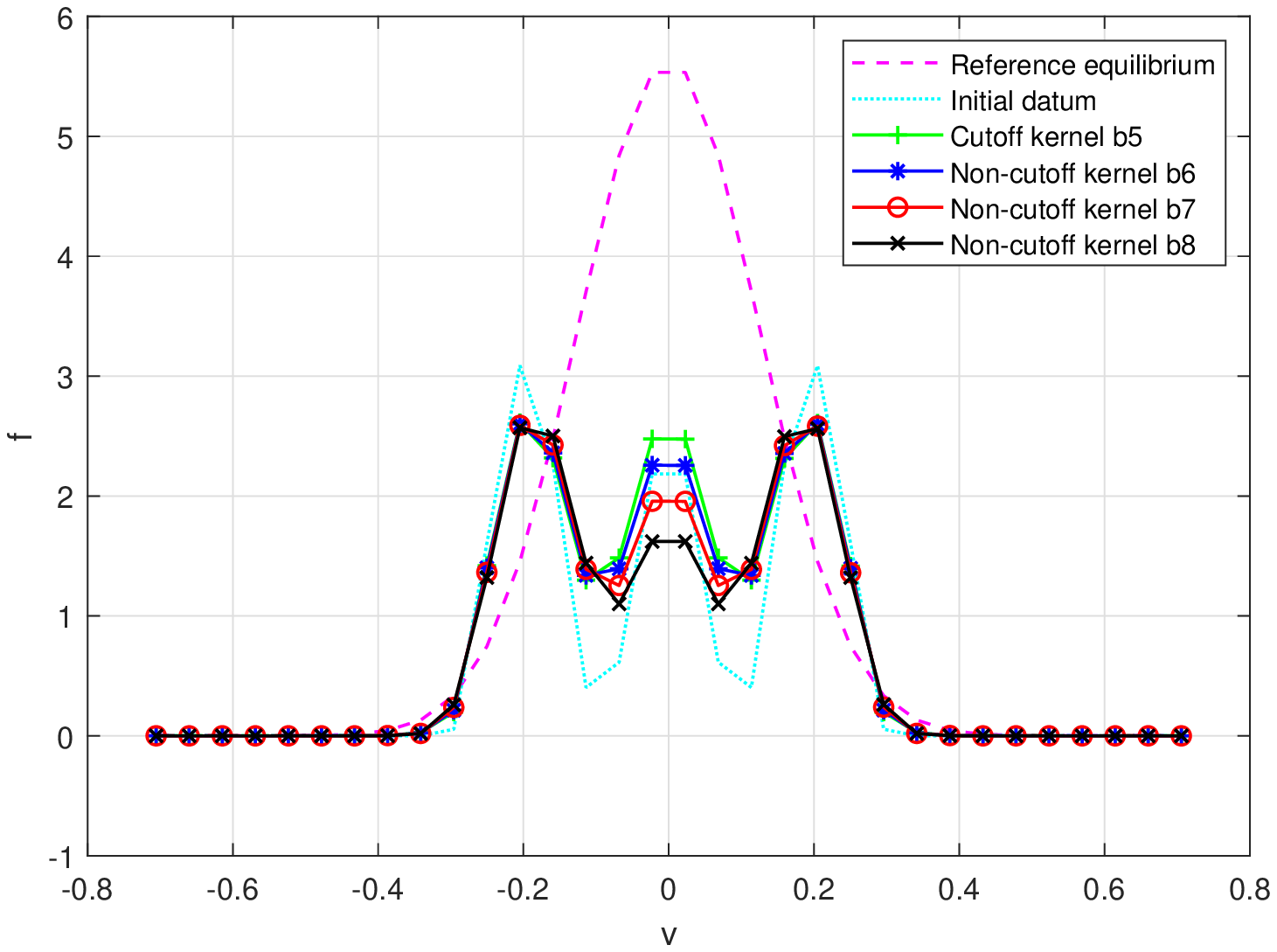}
	}\\
	\subfigure[t=6]{
		\includegraphics[width=6.5cm]{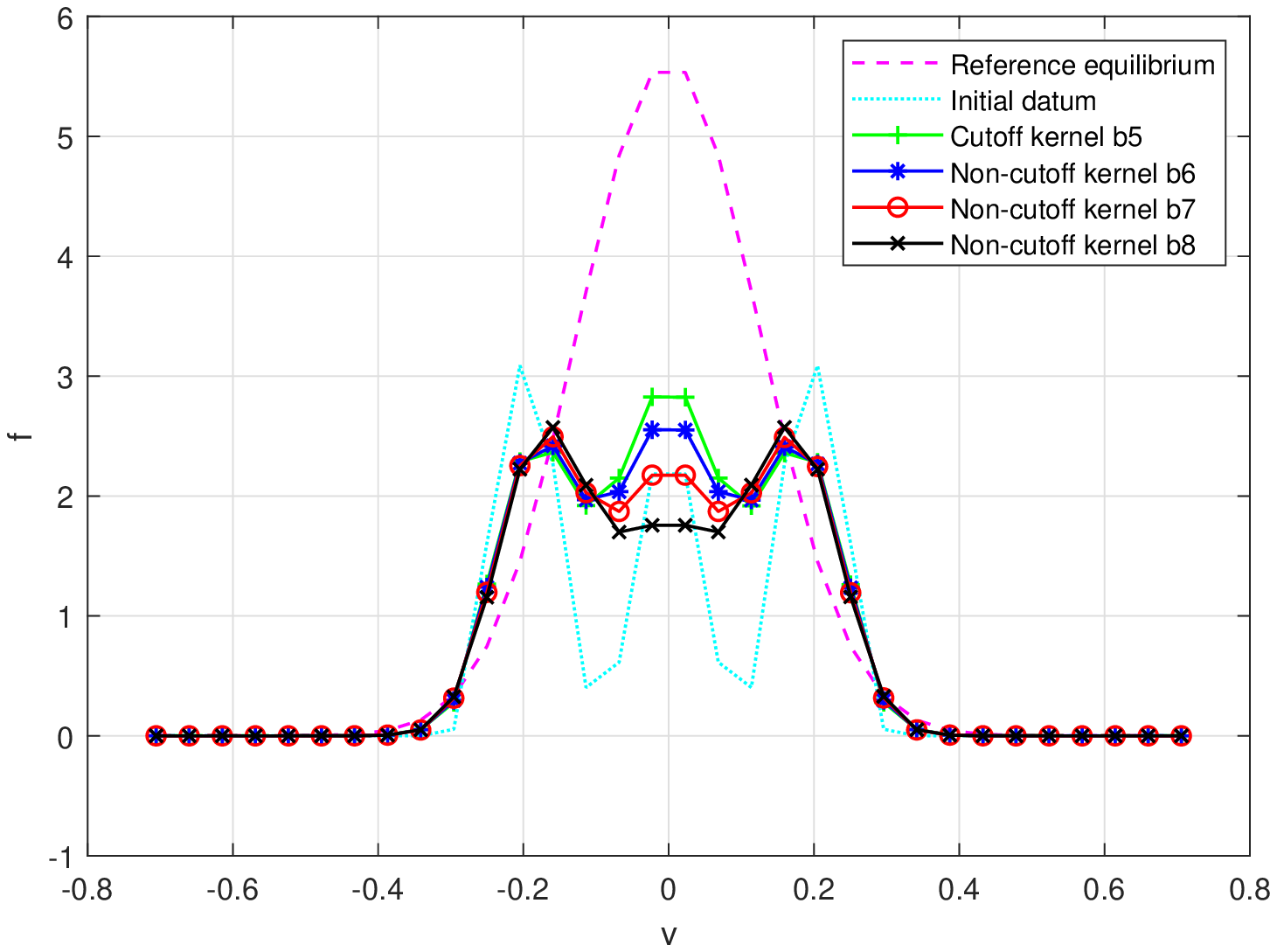}
	}
	\subfigure[t=9]{
		\includegraphics[width=6.5cm]{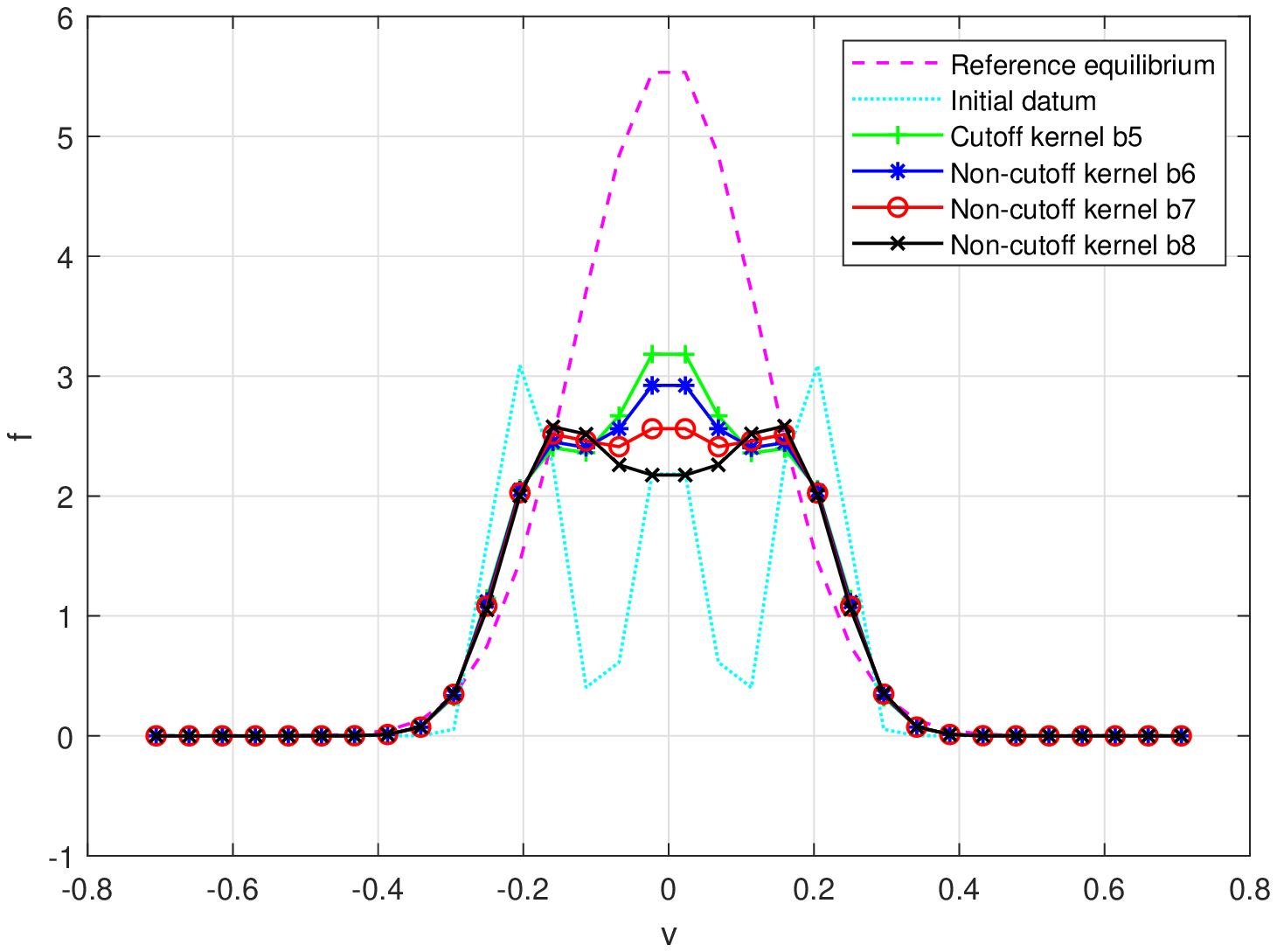}
	}\\
	\subfigure[t=12]{
		\includegraphics[width=6.5cm]{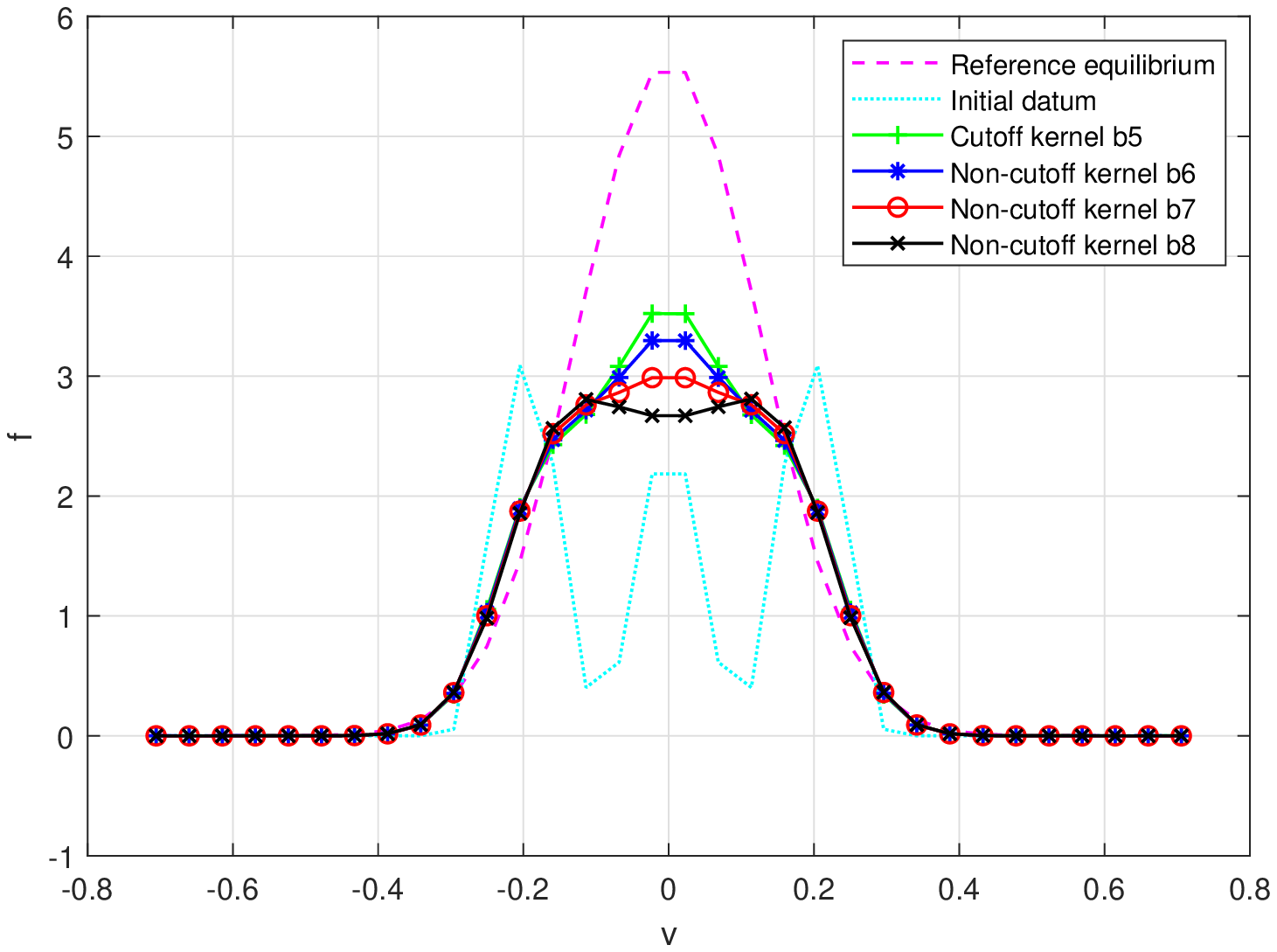}	
	}
	\subfigure[t=15]{
		\includegraphics[width=6.5cm]{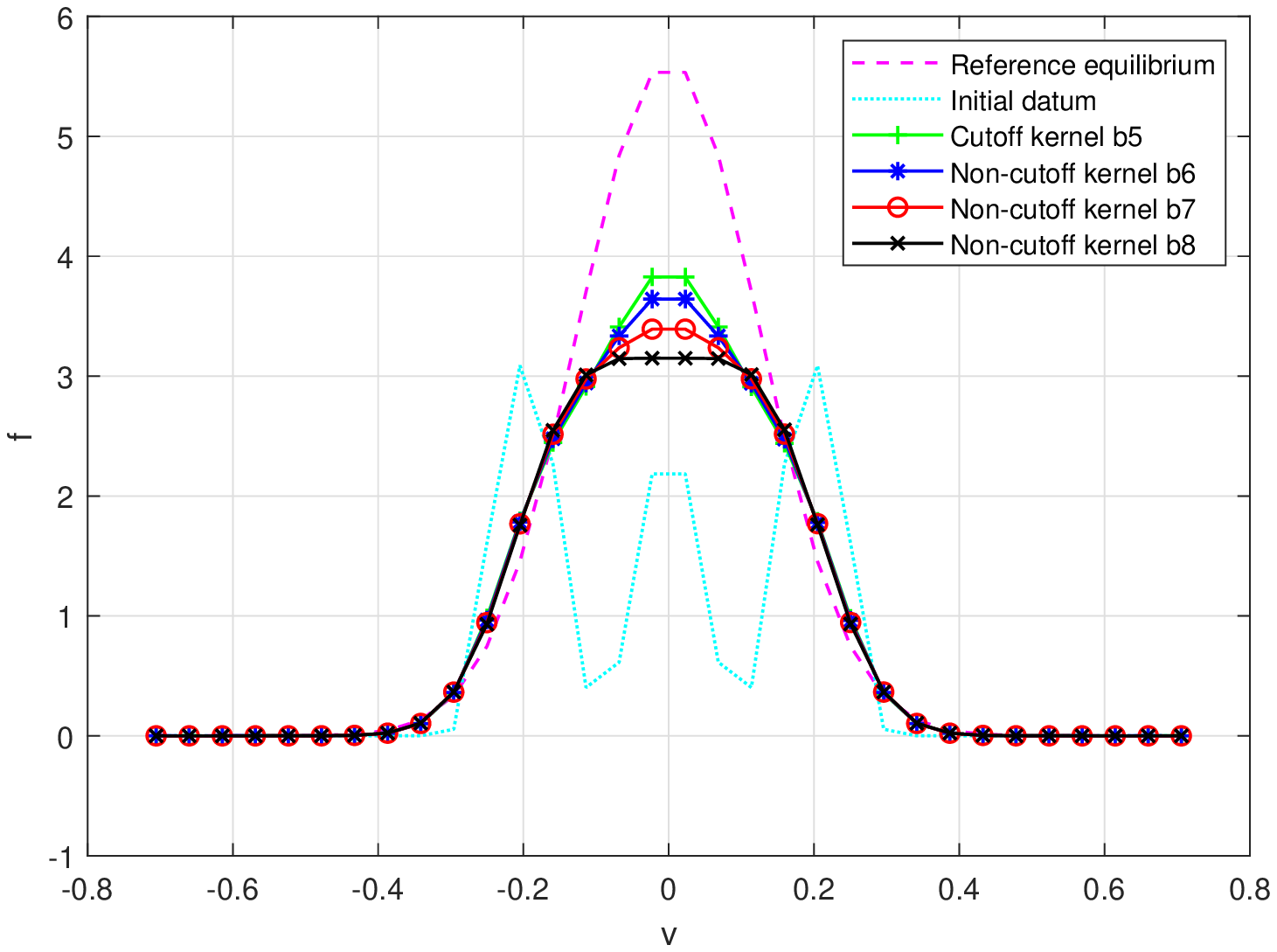}
	}\\
	\subfigure[t=30]{
	       \includegraphics[width=6.5cm]{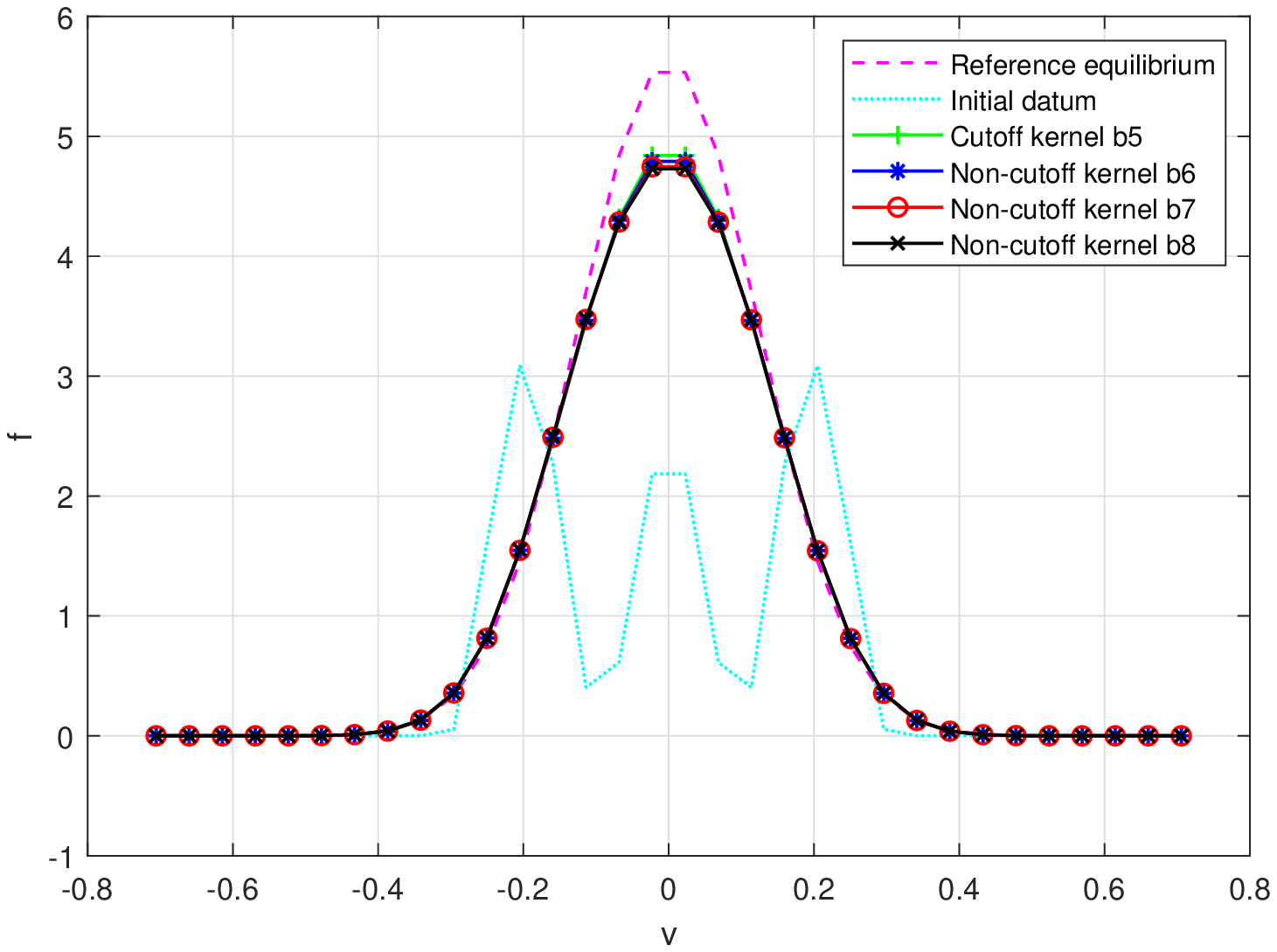}
       }
	\subfigure[t=50]{
	       \includegraphics[width=6.5cm]{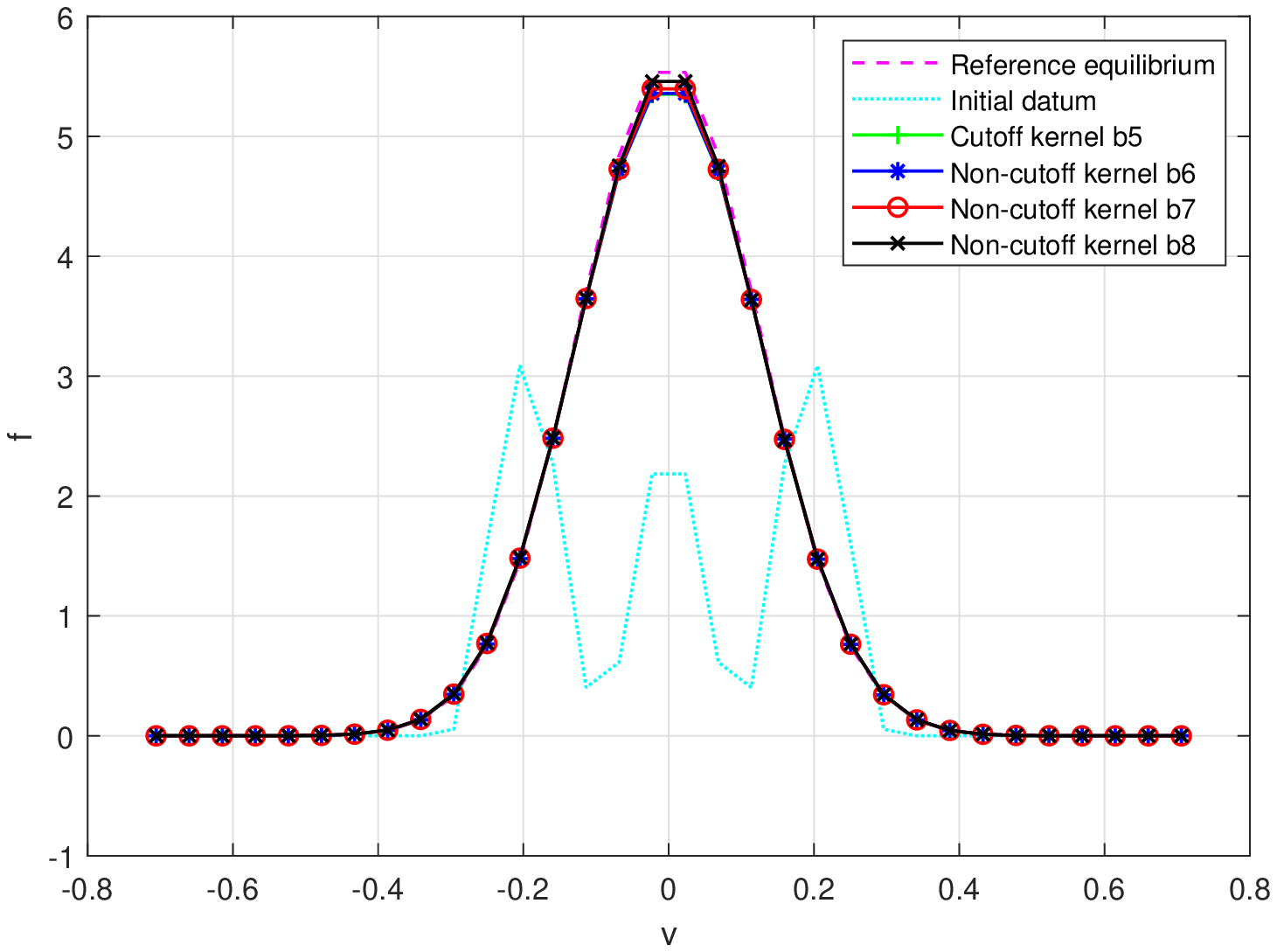}
       }
	\caption{ Section~\ref{3measure}: Measure valued solution in 3D -- Maxwell molecule. Time evolution of the distribution function $f$ (a slice of the solution along $v_1$ with $v_2=v_3=0$) computed with cutoff kernel $b_5$ and non-cutoff kernels $b_{6}$, $b_{7}$ and $b_{8}$. Initial condition given by (\ref{initialdelta}). Classical RK4 with $ \Delta t = 0.2 $ for time discretization. $N=N_{|q|}=N_{\q}=32$. $R=0.66$, $L=(3+\sqrt{2})R/4\approx0.73$.}
	\label{fig4}
\end{figure}

\subsection{Measure valued solution in 3D -- Debye-Yukawa kernel}
\label{4debye}

We then consider a more physically relevant collision kernel resulting from the Debye-Yukawa potential:
\begin{equation}\label{ApproximatedDebye}
\sin\theta B\left( |v-v_{*}|, \cos\theta \right) = \frac{1}{2\sin\frac{\theta}{2}} |v-v_{*}| \left|\log\left(2\sin\frac{\theta}{2}\right)^{-1}\right|, \quad \theta\in\left[0,\pi\right],
\end{equation}
which has a limiting singularity behavior as (\ref{debye}) when $\theta\rightarrow 0$. Note that this kernel contains a velocity dependence similar to hard spheres. As a comparison, we also consider a cutoff version of the kernel:
\begin{equation} \label{ApproximatedDebyecutoff}
\sin\theta B_{\text{cutoff}}\left( |v-v_{*}|, \cos\theta \right) =
\begin{cases}
0, & \theta\in\left[0,\frac{\pi}{10}\right],\\
\frac{1}{2\sin\frac{\theta}{2}} |v-v_{*}| \left|\log\left(2\sin\frac{\theta}{2}\right)^{-1}\right|, \quad & \theta\in\left[\frac{\pi}{10},\pi\right],
\end{cases}
\end{equation} 

Figure~\ref{fig5} shows the results obtained with the above two kernels subject to initial condition
\begin{equation} \label{initialdelta1}
f^0(v) = \frac{1}{2} \delta_{w}(|v|-0.2),
\end{equation}
where $ \delta_{w}(v) $ is given by \eqref{delta}. The difference of solutions in the cutoff case and non-cutoff case is obvious.

\begin{figure}[htp]
	\centering
	\subfigure[t=1]{
		\includegraphics[width=6.5cm]{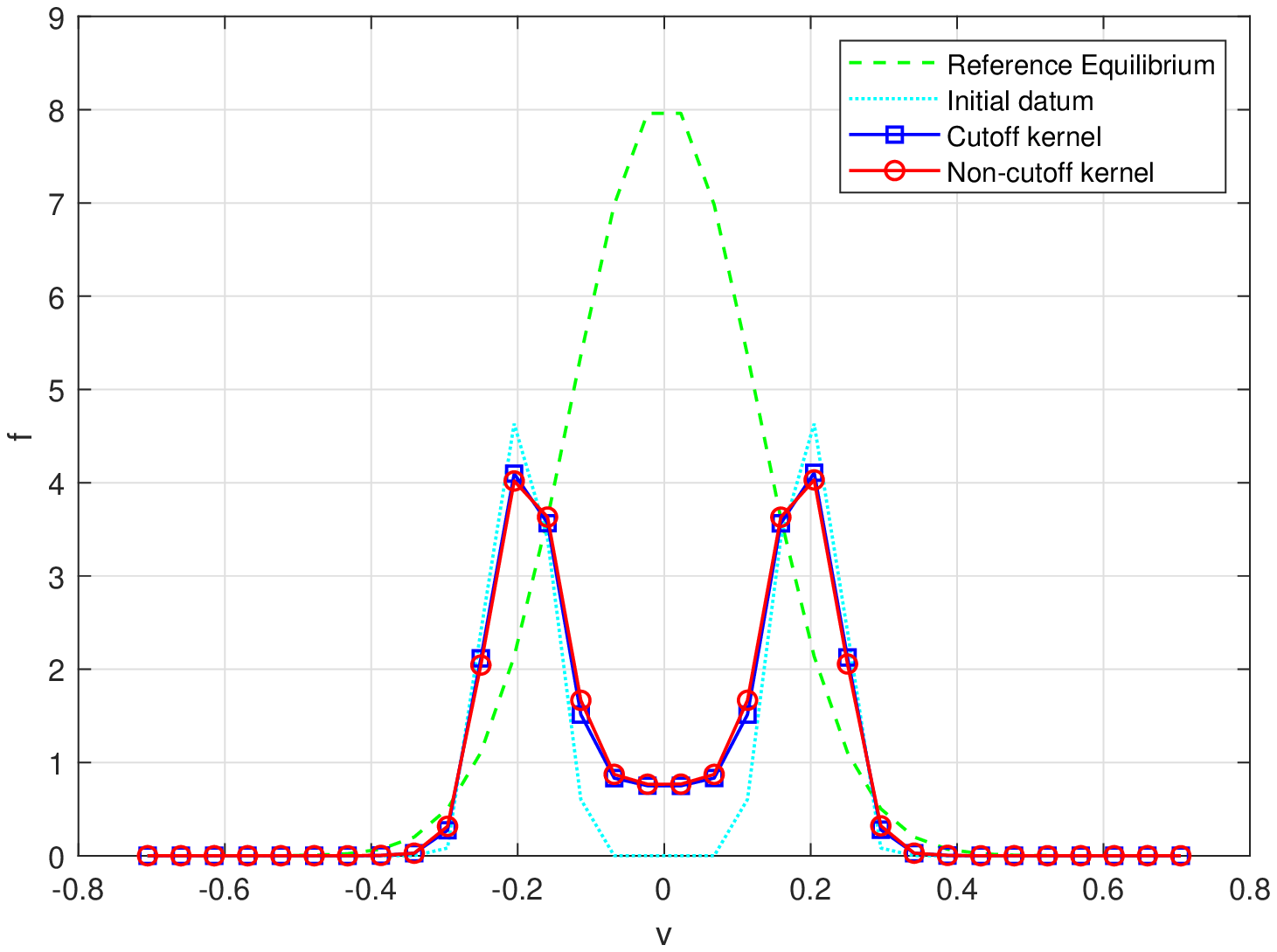}
	}
	\subfigure[t=3]{
		\includegraphics[width=6.5cm]{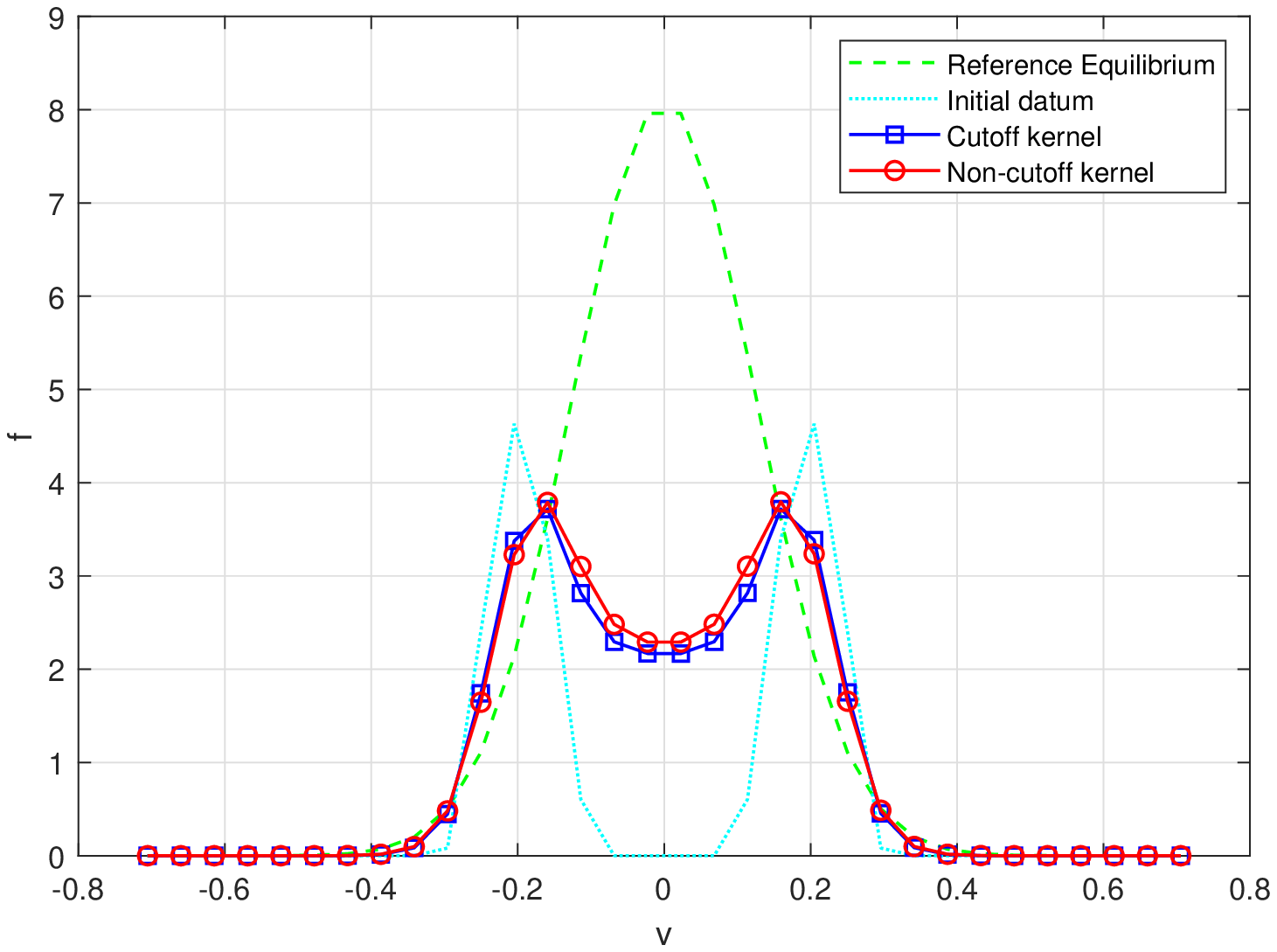}
	}\\
	\subfigure[t=5]{
		\includegraphics[width=6.5cm]{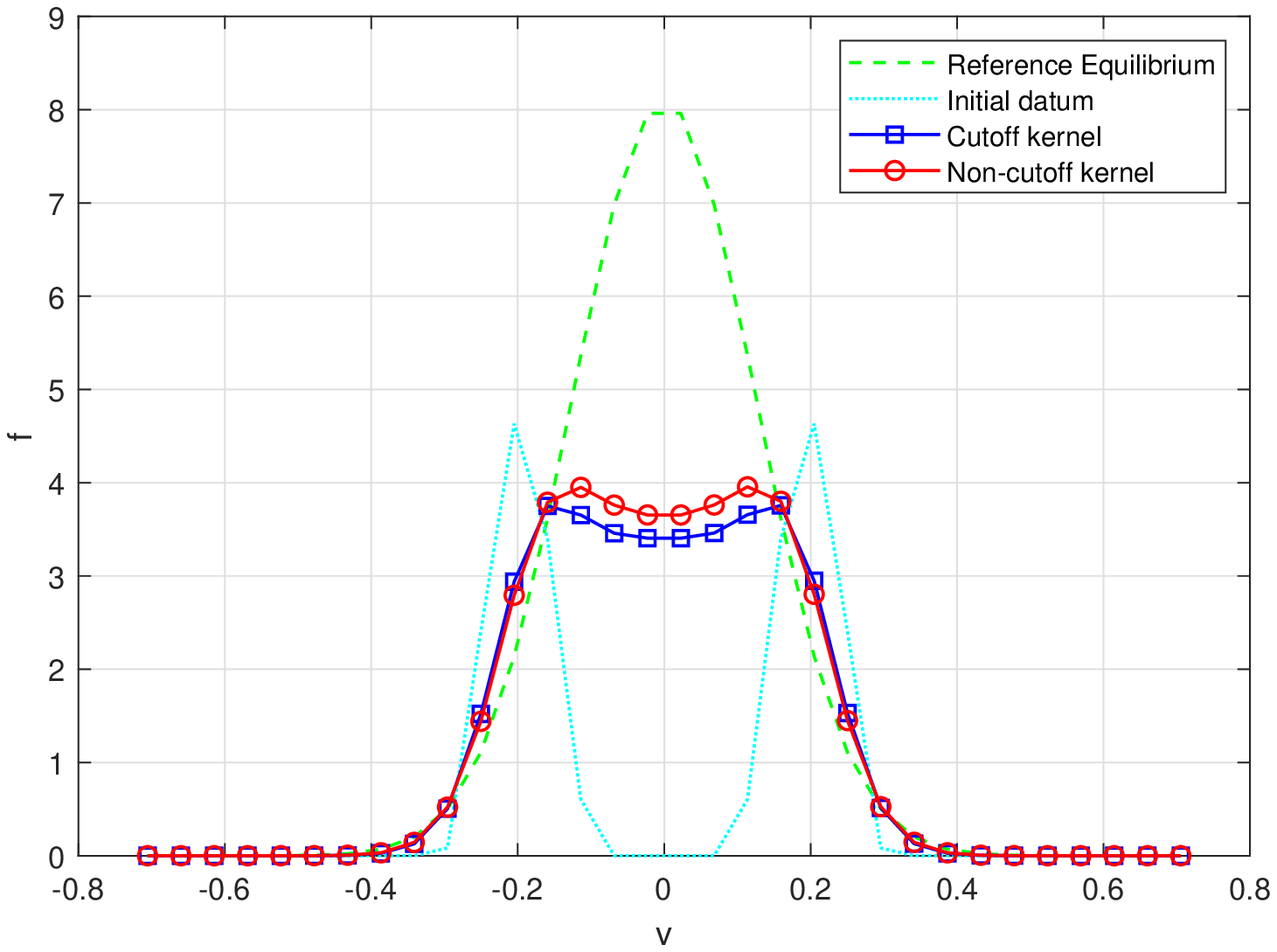}
	}
	\subfigure[t=8]{
		\includegraphics[width=6.5cm]{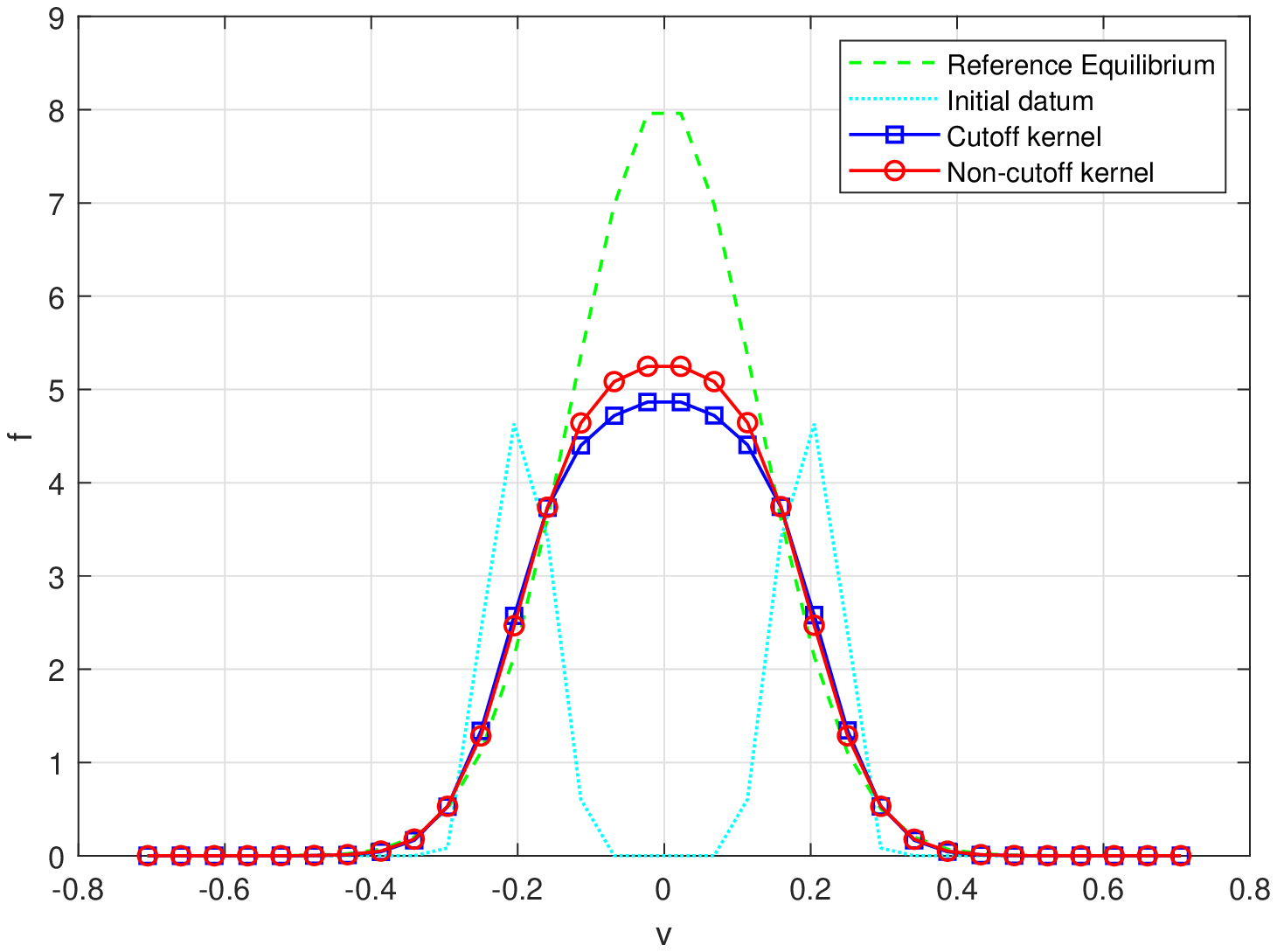}
	}\\
	\subfigure[t=10]{
		\includegraphics[width=6.5cm]{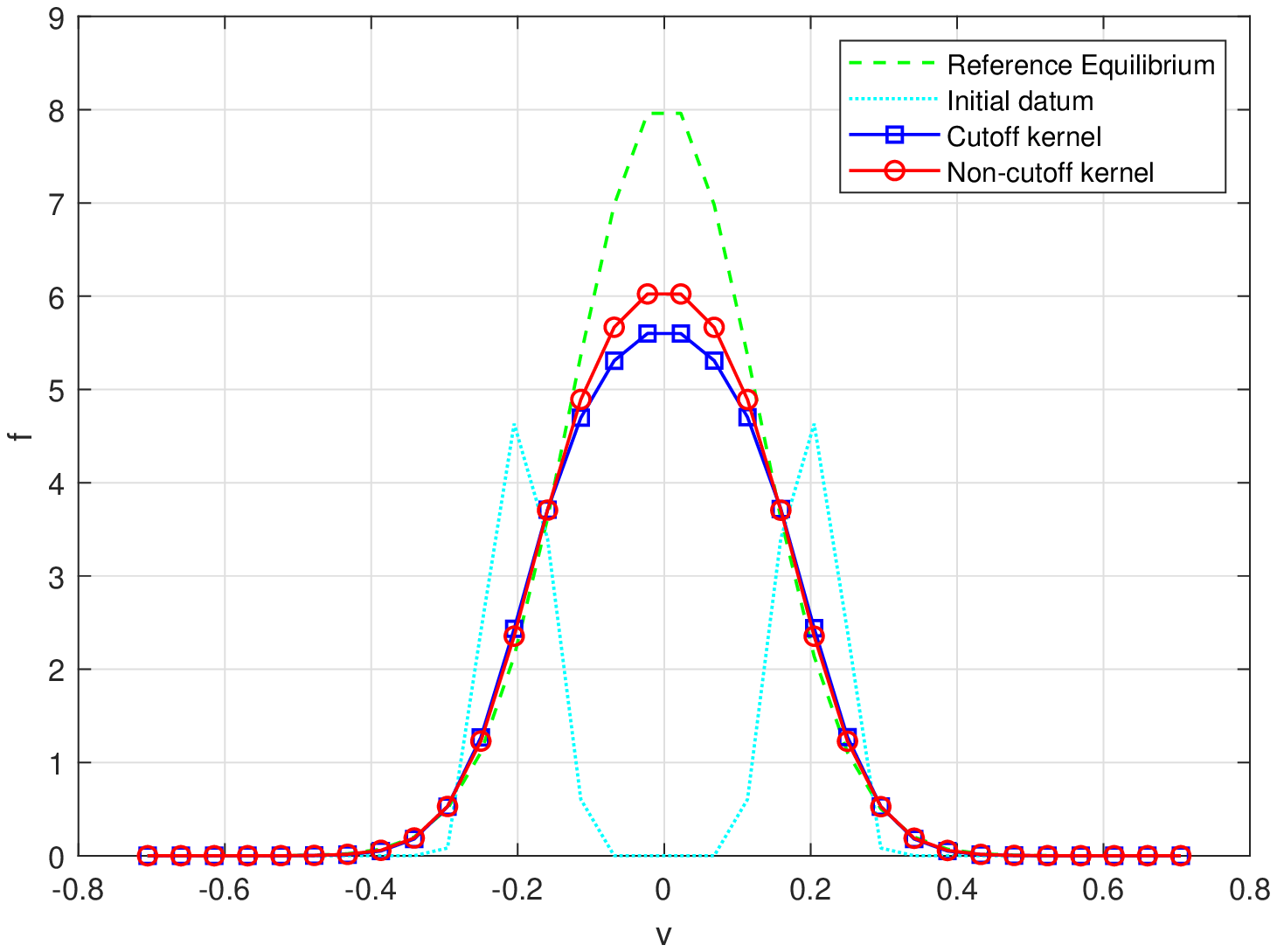}
	}
	\subfigure[t=15]{
		\includegraphics[width=6.5cm]{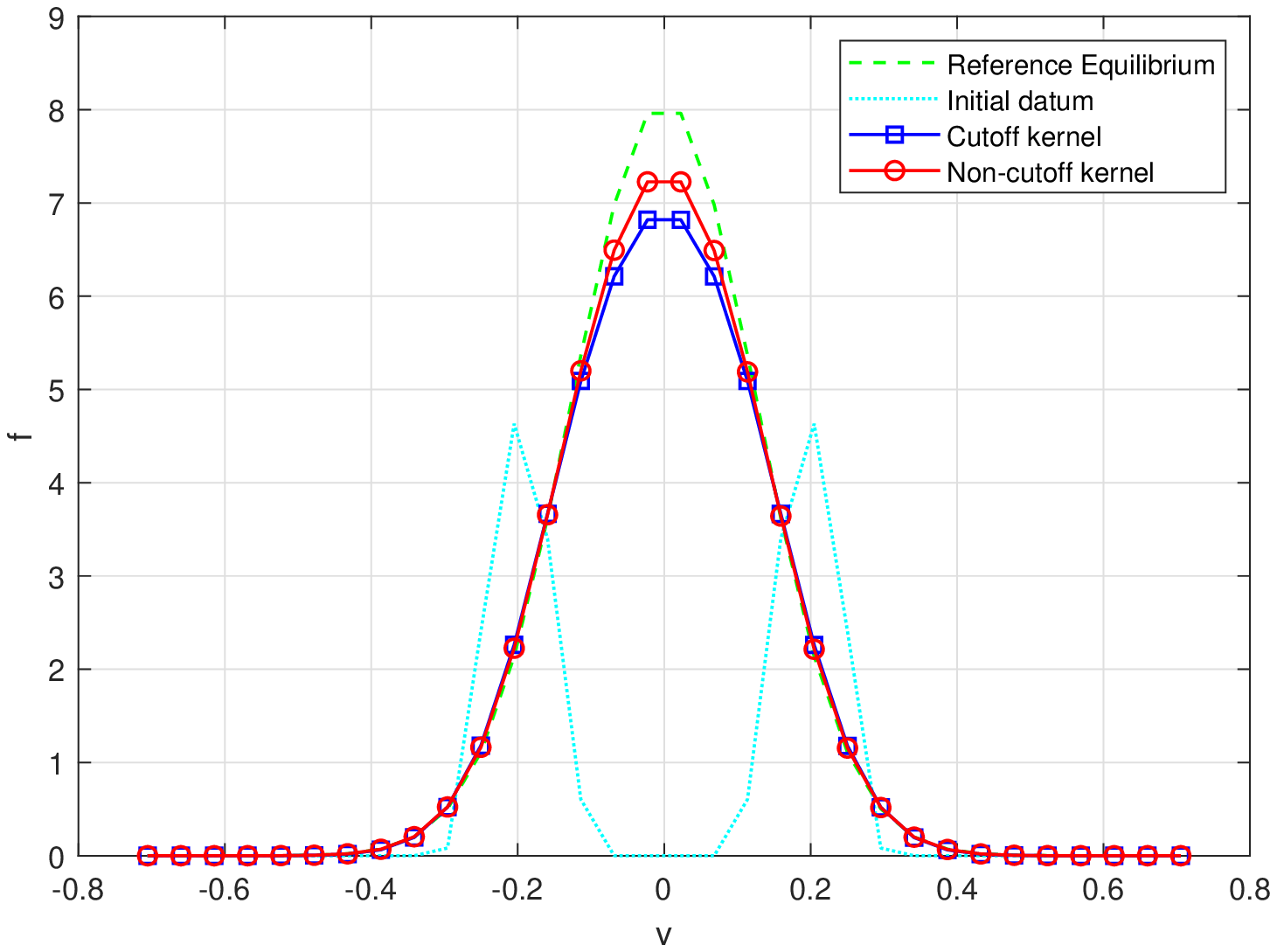}
	}\\
       \subfigure[t=20]{
	       \includegraphics[width=6.5cm]{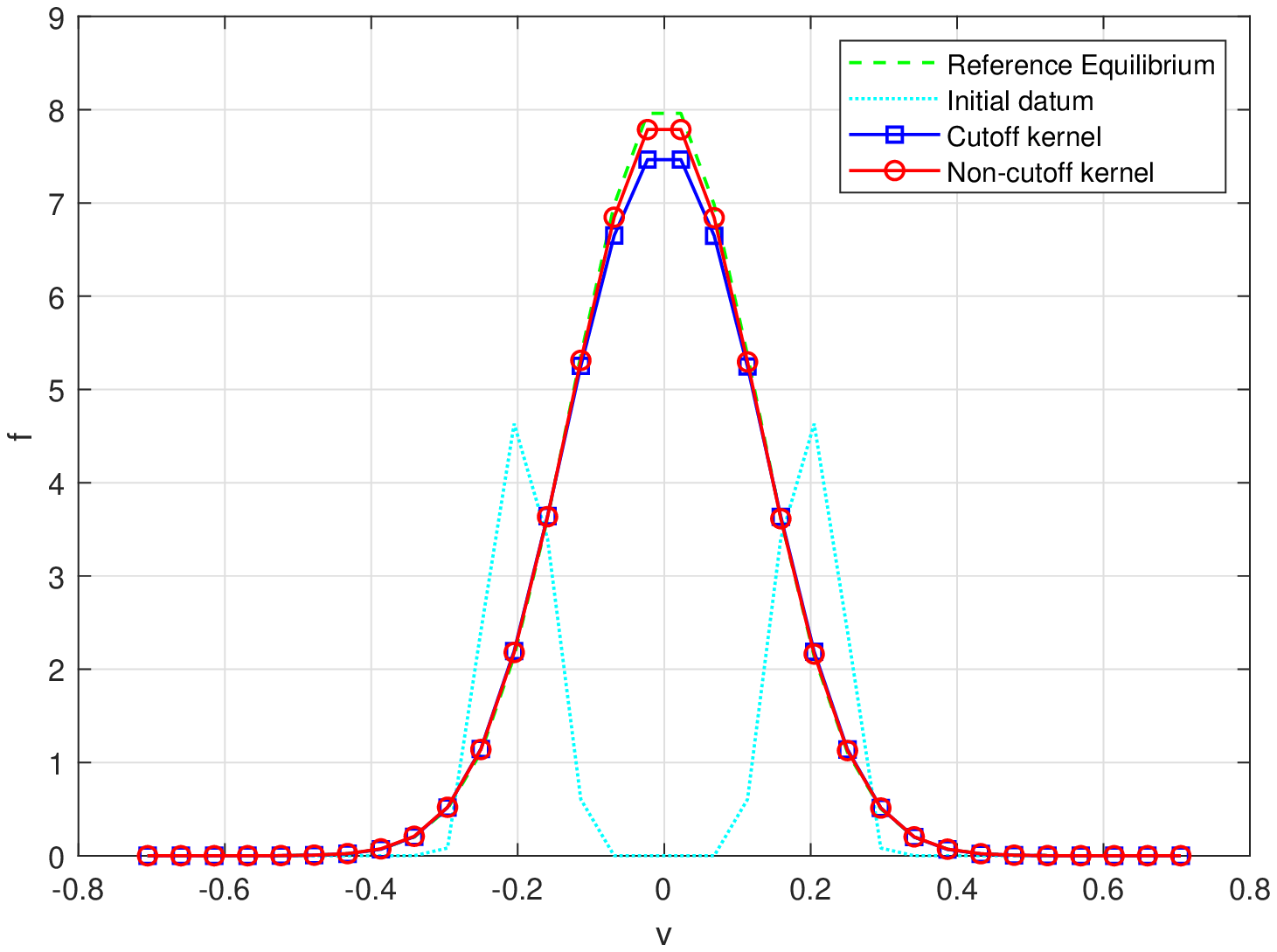}
       }
       \subfigure[t=40]{
              \includegraphics[width=6.5cm]{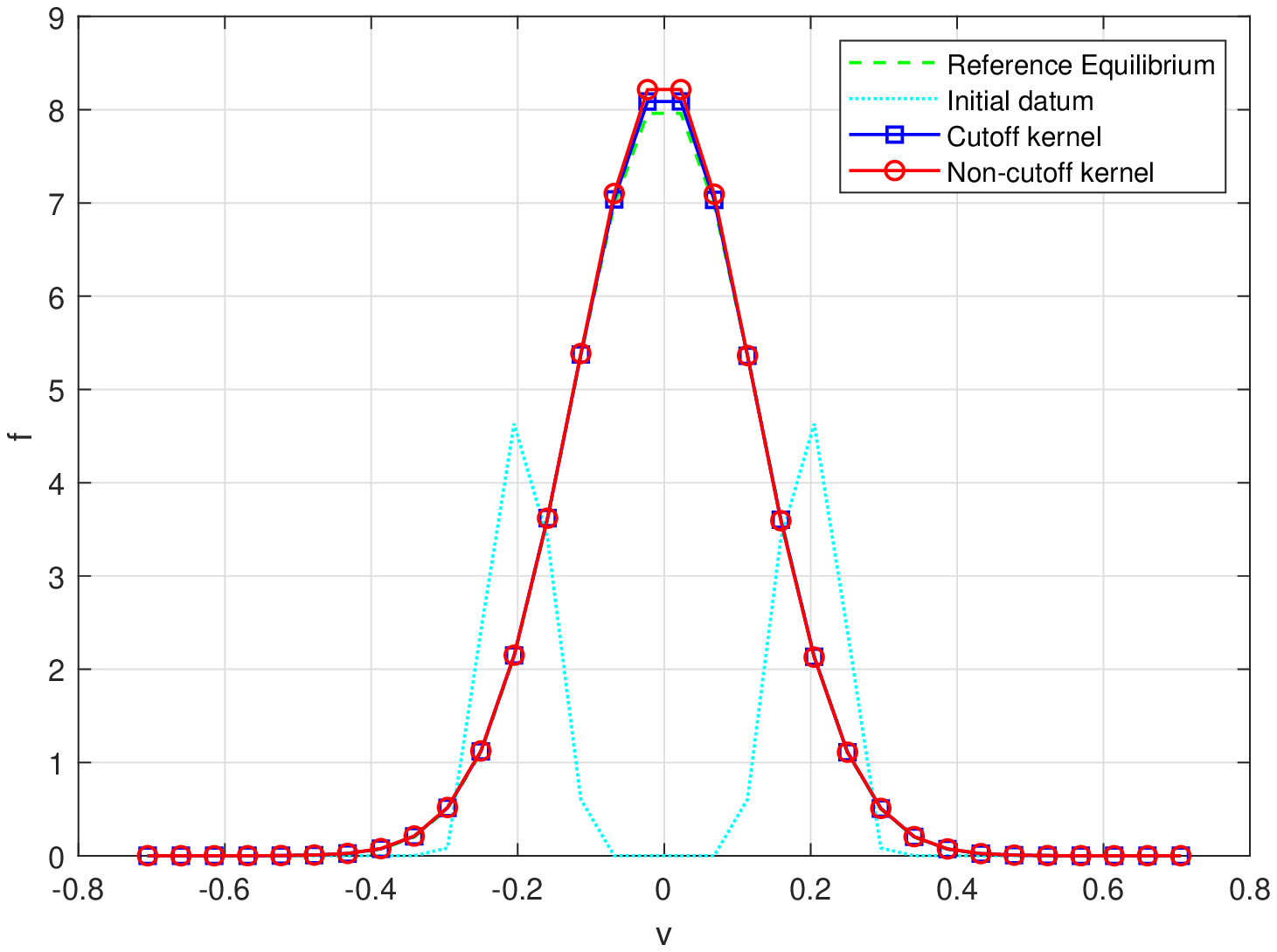}
       }
	\caption{ Section~\ref{4debye}: Measure valued solution in 3D -- Debye-Yukawa kernel. Time evolution of the distribution function $f$ (a slice of the solution along $v_1$ with $v_2=v_3=0$) computed with the Debye-Yukawa kernel (\ref{ApproximatedDebye}) and its cutoff version (\ref{ApproximatedDebyecutoff}). Initial condition given by (\ref{initialdelta1}). Classical RK4 with $ \Delta t = 0.05 $ for time discretization. $N=N_{|q|}=N_{\q}=32$. $R=0.66$, $L=(3+\sqrt{2})R/4\approx0.73$.}
	\label{fig5}
\end{figure}

%%%%%%%%%%%%%%%%%%%%%%%%%%%%%%%%%%%%%%%%%%

\subsection{Discontinuous solution in 2D -- Maxwell molecule}
\label{2dis}
In this final test, we consider the following discontinuous initial data in 2D:
\begin{equation}\label{disinitial}
f^{0}(v)=
\begin{cases}
\frac{\rho_{1}}{2\pi T_1}\exp\left(-\frac{|v|^2}{2T_{1}}\right),\quad \text{for}\quad v_{1}>0,\\
\frac{\rho_{2}}{2\pi T_2}\exp\left(-\frac{|v|^2}{2T_{2}}\right),\quad \text{for}\quad v_{1}<0,
\end{cases}
\end{equation}
where we pick $\rho_{1}=\frac{6}{5}$, $\rho_{2}=\frac{4}{5}$, $T_{1}=\frac{2}{3}$, $T_{2}=\frac{3}{2}$ such that
\begin{equation}
\int_{\mathbb{R}^{2}} f^{0} \,\rd v = \frac{1}{2} \int_{\mathbb{R}^{2}} f^{0} |v|^2\,\rd v = 1,\quad \int_{\mathbb{R}^{2}} f^{0} v\,\rd v = 0,
\end{equation} 
which leads the normalized Gaussian distribution as the equilibrium:
\begin{equation}
f_{\text{ref}}(v) = \frac{1}{2\pi }\exp \left(-\frac{|v|^2}{2}\right).
\end{equation}

To clearly tell the difference in the smoothing effect between cutoff and non-cutoff kernels, we compare the non-cutoff kernel $ b_{3} $ with its corresponding cutoff version:
\begin{equation} \label{b3cutoff}
b_3^{\text{cutoff}}\left( \cos\theta \right) =
\begin{cases}
0, & \theta \in \left[0,\theta_{0}\right) \cup \left(2\pi - \theta_{0},2\pi\right] ,\\
\frac{1}{8\pi\sin^{2}\frac{\theta}{2}}, \quad & \theta \in \left[\theta_{0},2\pi - \theta_{0} \right],
\end{cases}
\end{equation}
where we choose $ \theta_{0} =\pi/4$ and $\pi/10$ respectively. Figure \ref{2DDis} shows the time evolution of the solutions, where we can see that the solution with the non-cutoff kernel is smoothed out more quickly than that with the cutoff kernels.
\begin{figure}[htp]
	\centering
	\subfigure[t=0.5]{
		\includegraphics[width=6.5cm]{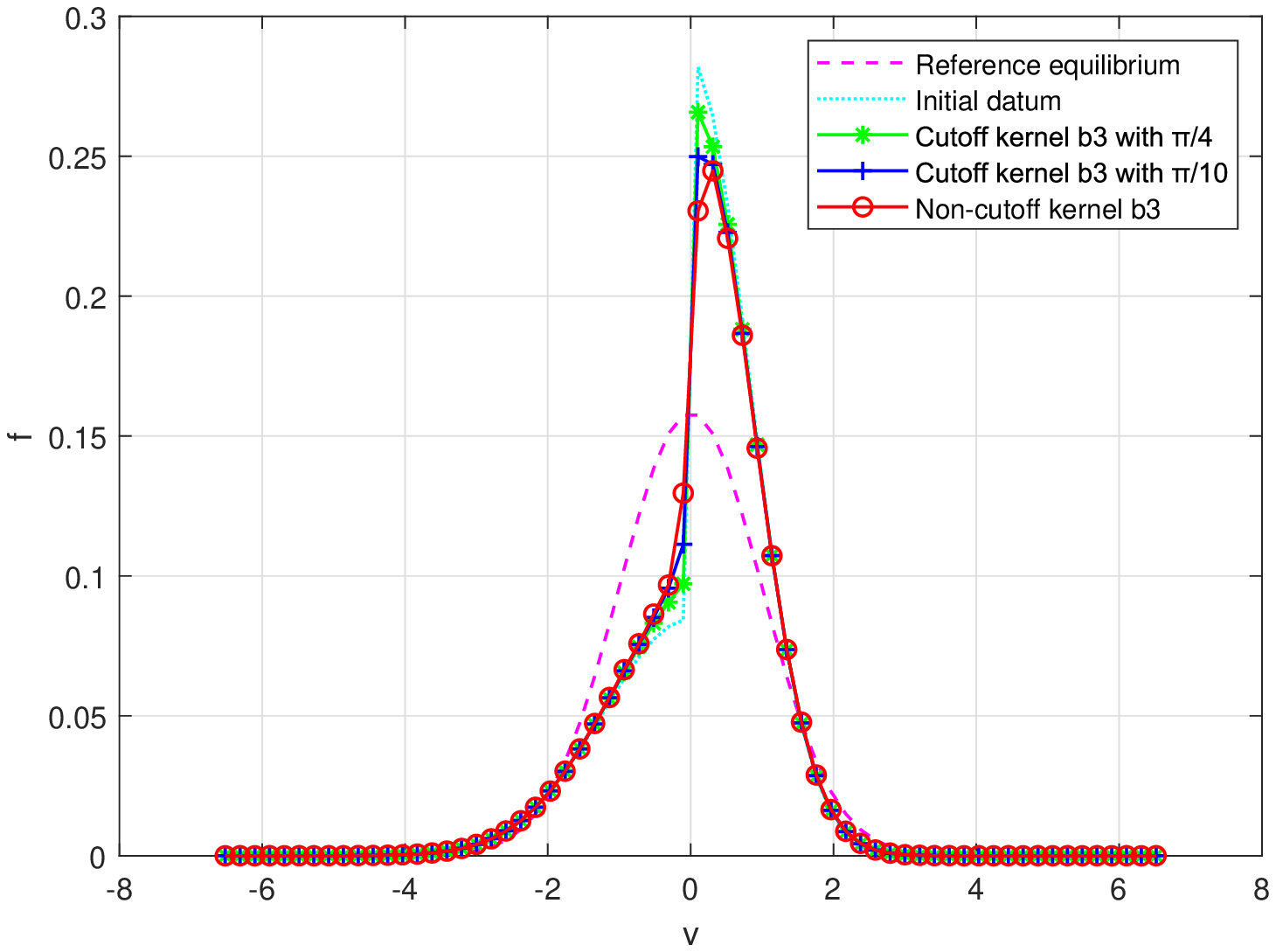}
	}
	\subfigure[t=1]{
		\includegraphics[width=6.5cm]{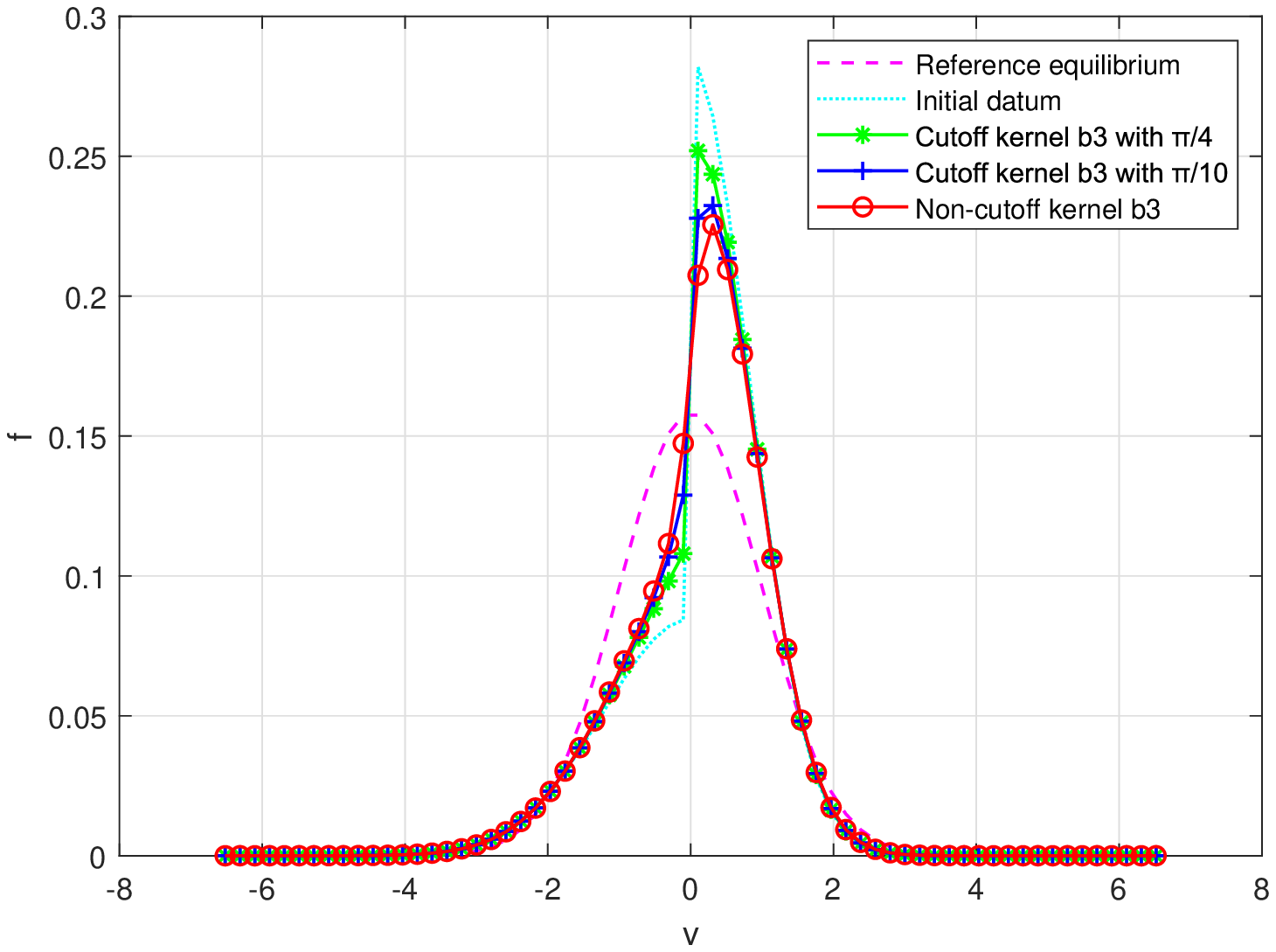}
	}\\
	\subfigure[t=1.5]{
		\includegraphics[width=6.5cm]{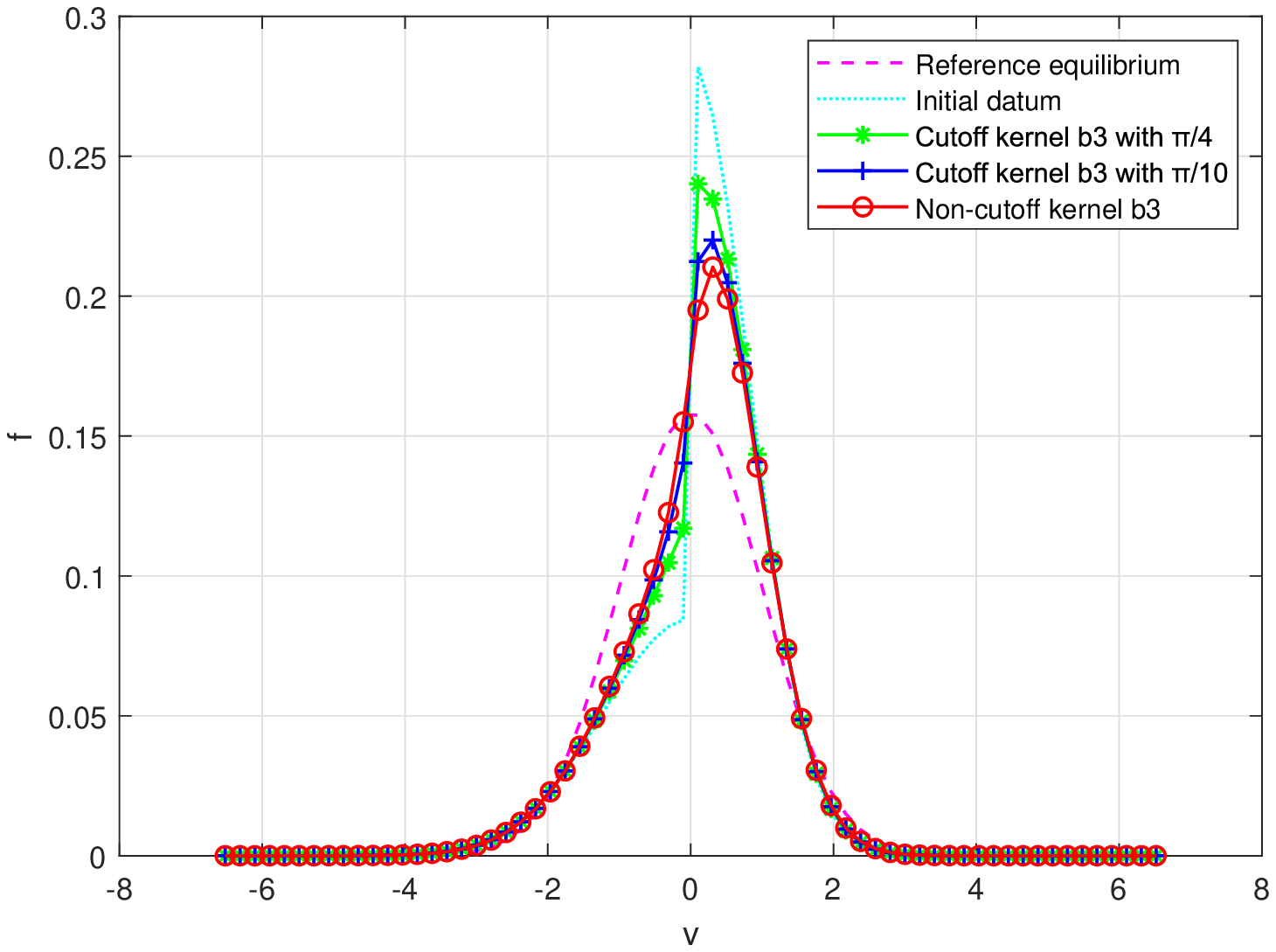}
	}
	\subfigure[t=3]{
		\includegraphics[width=6.5cm]{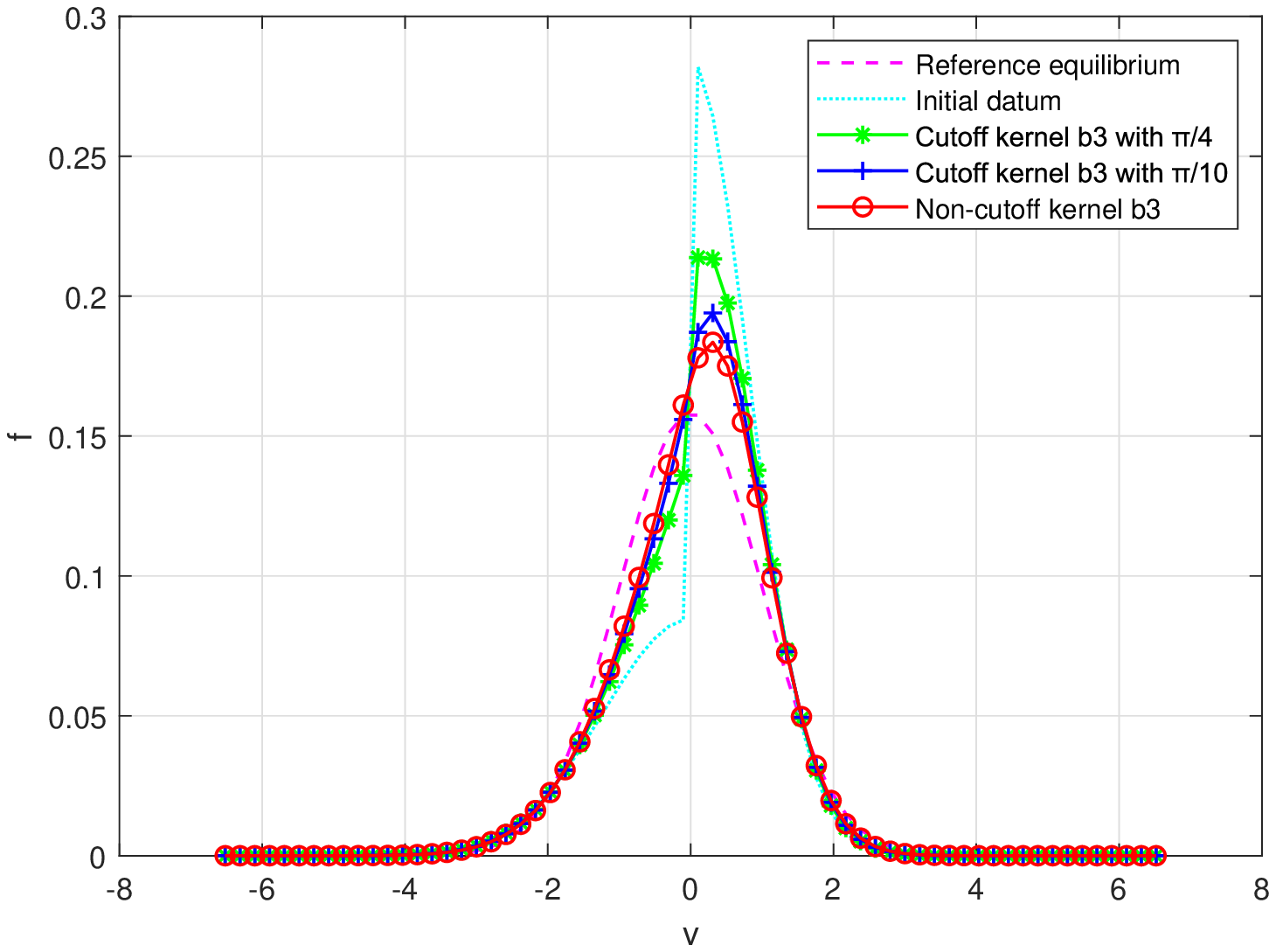}
	}\\
	\subfigure[t=5]{
		\includegraphics[width=6.5cm]{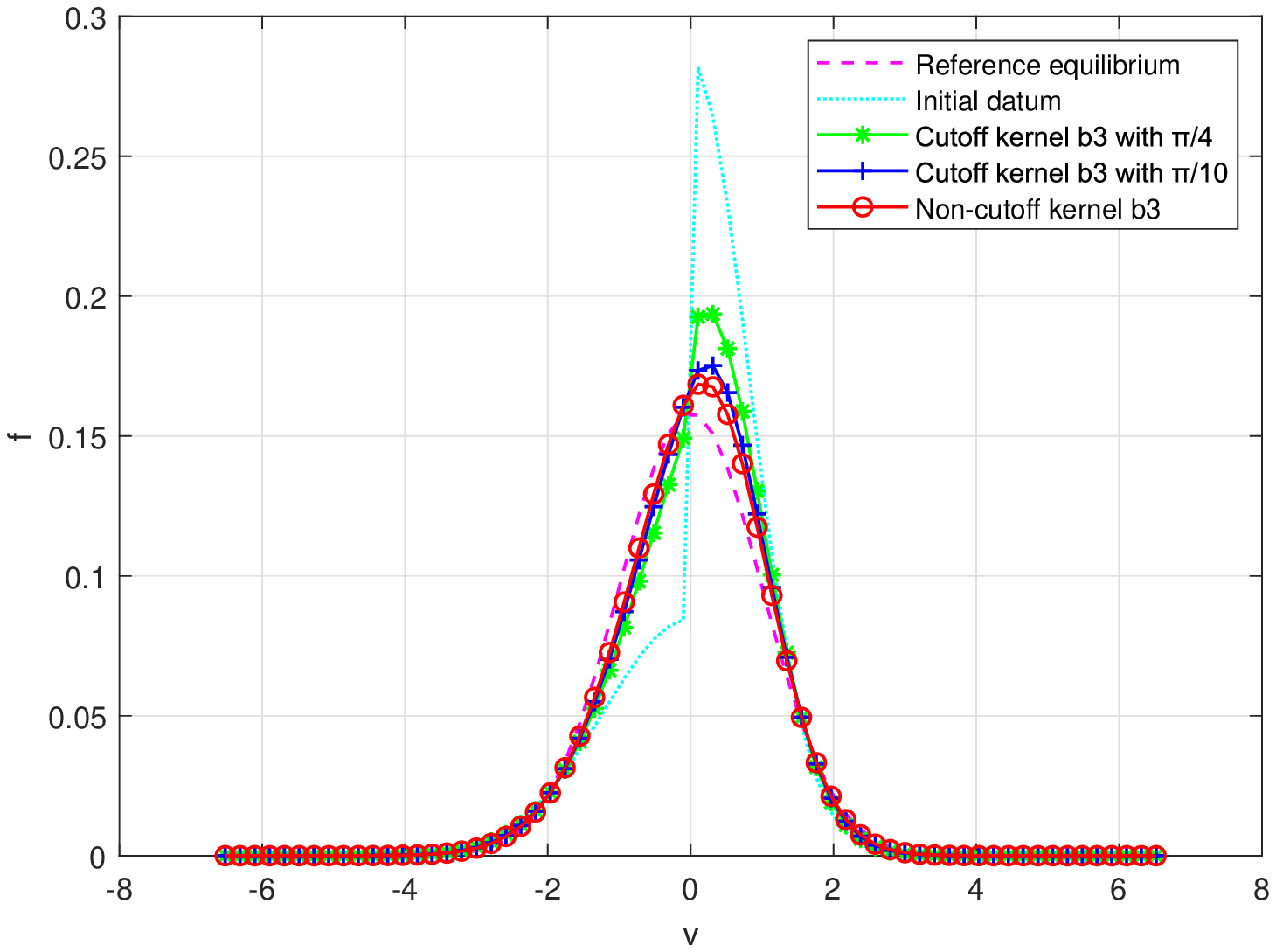}
	}
	\subfigure[t=9]{
		\includegraphics[width=6.5cm]{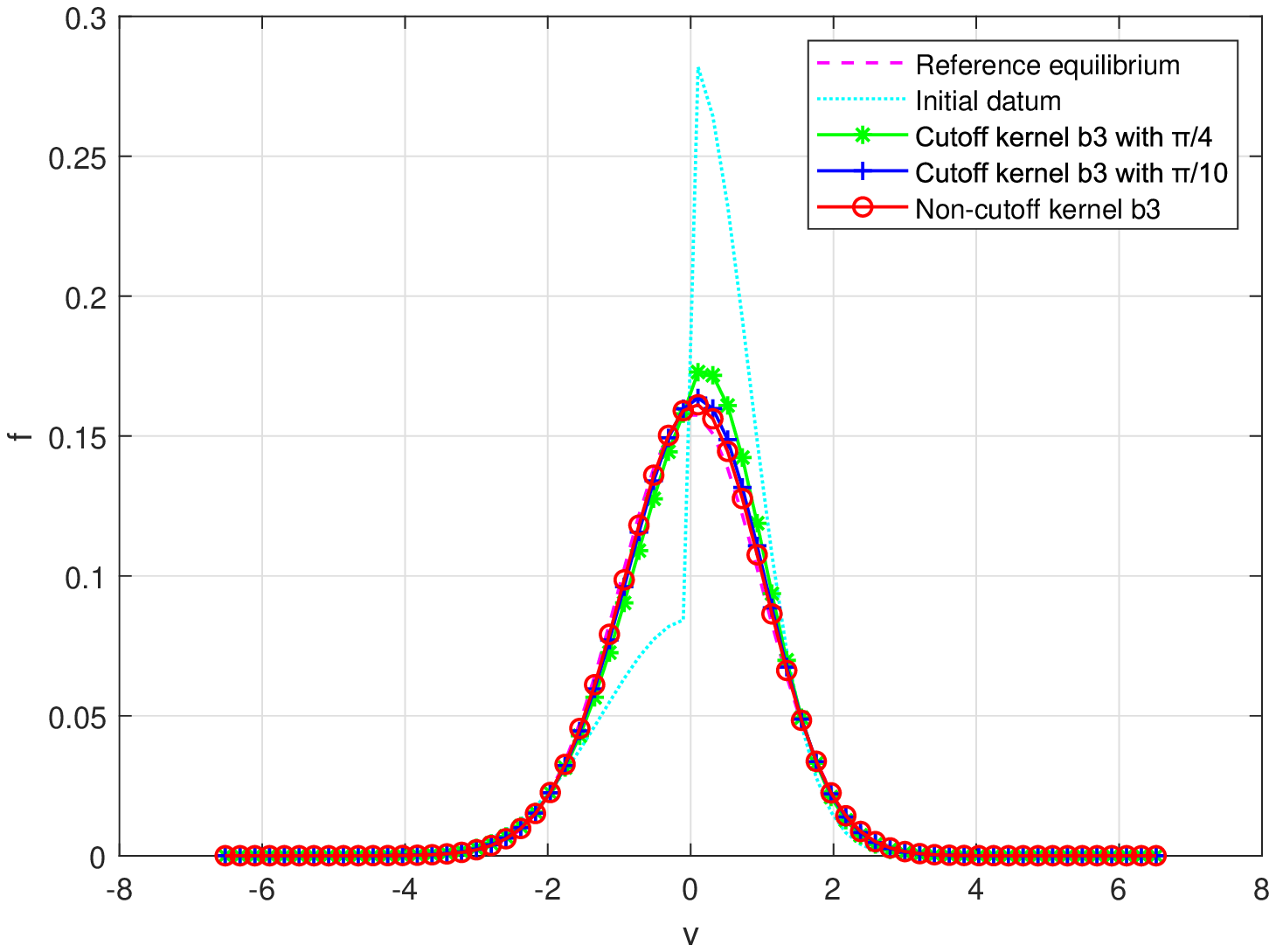}
	}\\
	\subfigure[t=15]{
		\includegraphics[width=6.5cm]{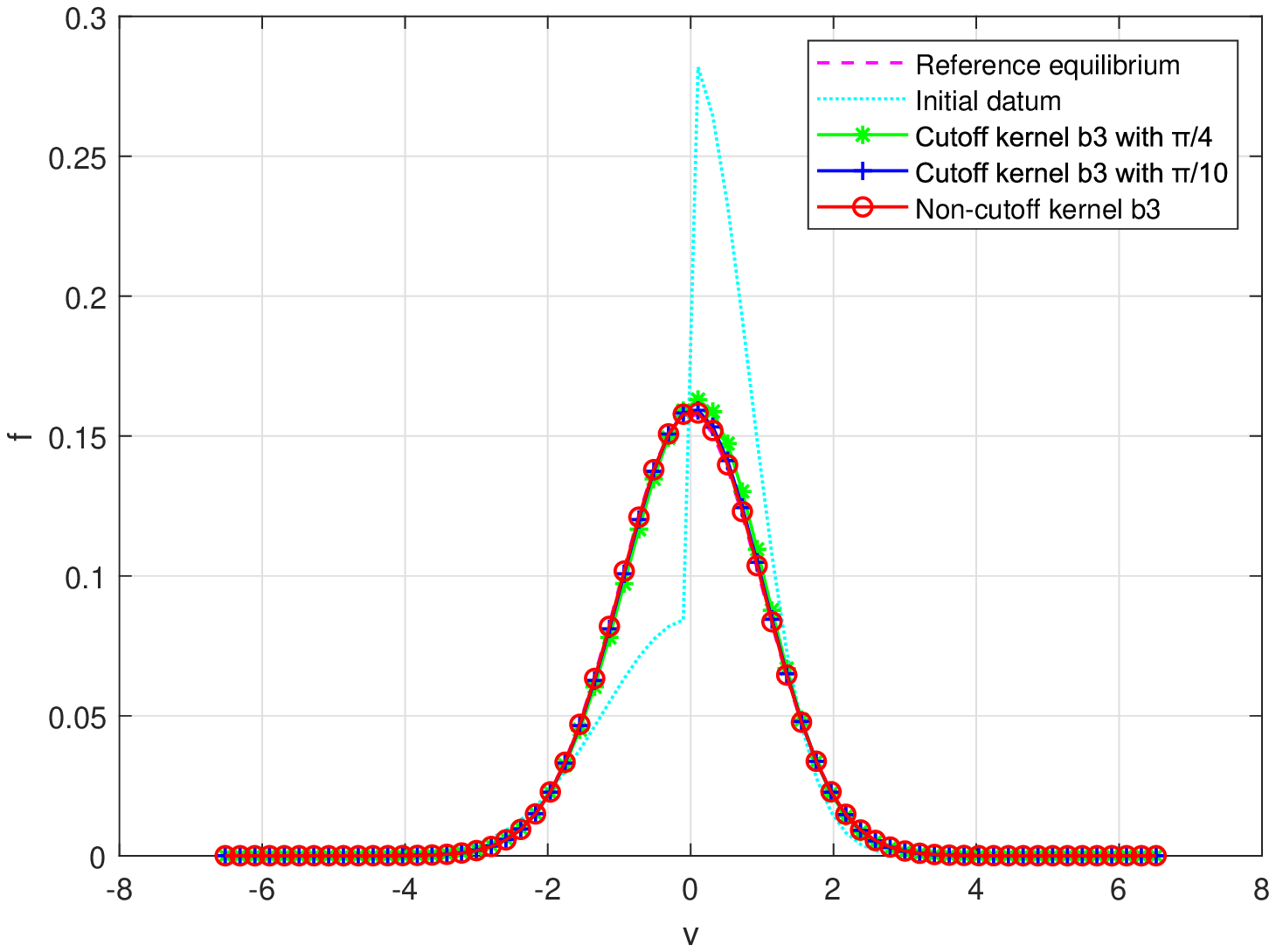}
	}
	\subfigure[t=22]{
		\includegraphics[width=6.5cm]{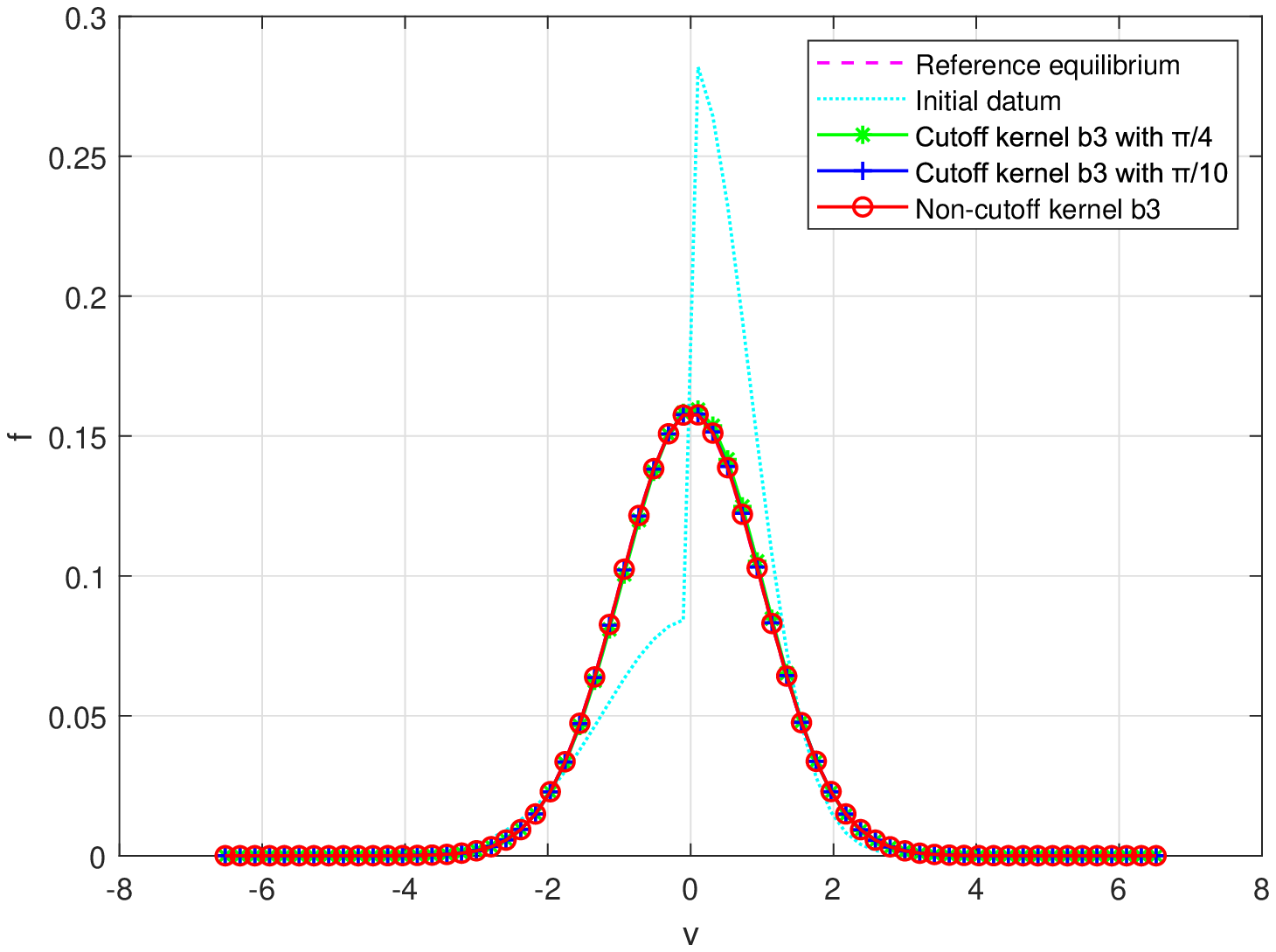}
	}
	
	\vspace{-0.2in}
	\caption{Section~\ref{2dis}: Discontinuous solution in 2D -- Maxwell molecule. Time evolution of the distribution function $f$  (a slice of the solution along $v_1$ with $v_2=0$) computed with cutoff kernel (\ref{b3cutoff}) (with $ \theta_{0} = \pi/4 $, $ \theta_{0}=\pi/10$ respectively), and its non-cutoff version (with $\theta_0=0$). Initial condition given by (\ref{disinitial}). Classical RK4 with $ \Delta t = 0.02 $ for time discretization. $N=N_{|q|}=64 $, $N_{\q}=32$. $R=6$, $L=(3+\sqrt{2})R/4\approx6.62$.}
	\label{2DDis}
\end{figure}

%%%%%%%%%%%%%%%%%%%%%%%%%%%%%%%%%%

\section{Conclusion}
\label{sec:conc}

We have introduced a fast Fourier spectral method for the spatially homogeneous Boltzmann equation with non-cutoff collision kernels. These kernels arise in a large range of interaction potentials but are often cut off in numerical simulations for simplicity. This, as a result, changes the qualitative behavior of the solutions: the non-cutoff Boltzmann collision operator behaves like a fractional Laplacian, hence regularizes the solution immediately, whereas the solution in the cutoff case does not enjoy any smoothing property. We demonstrated that the Fourier spectral method is a well-defined framework to solve the non-cutoff Boltzmann equation and established the consistency and spectral accuracy of the method. Furthermore, the fast algorithms proposed previously for the cutoff Boltzmann equation \cite{GHHH17, HM19} can be readily generalized to the non-cutoff case, resulting in a method of the same numerical complexity. Through a series of examples, we have validated the accuracy and efficiency of the method, as well as verified the regularizing effect of the equation. The proposed method can be used as a black box solver to simulate the spatially nonhomogeneous Boltzmann equation, where many interesting problems remain open.

\section*{Acknowledgements}
We thank Prof.~Tong Yang for helpful discussion on the theory of the non-cutoff Boltzmann equation.

\bibliographystyle{plain}
\bibliography{hu_bibtex}
\end{document}